\newtheorem{lemma}{Lemma}
\newtheorem{claim}{Claim}
\newtheorem{theorem}{Theorem}
\newtheorem{corollary}{Corollary}
\newtheorem{problem}{Problem}
\newtheorem{fact}{Fact}
\newtheorem*{rep@theorem}{\rep@title}
\newcommand{\newreptheorem}[2]{%
\newenvironment{rep#1}[1]{%
 \def\rep@title{#2 \ref{##1}}%
 \begin{rep@theorem}}%
 {\end{rep@theorem}}}
\theoremstyle{definition}
\newtheorem{definition}{Definition}
\algrenewcommand\algorithmiccomment[1]{\hfill{\color{gray}$\triangleright$~#1}}
\algrenewcommand\algorithmicrequire{\textbf{Input:}}
\algrenewcommand\algorithmicensure{\textbf{Output:}}
\newcommand{\algrule}[1][.2pt]{\par\vskip.2\baselineskip\hrule height #1\par\vskip.2\baselineskip}
\newcommand{\ones}{\mathds{1}}
\newcommand{\E}{\mathbb{E}}
\newcommand{\Var}{\mathrm{Var}}
\newcommand{\poly}{\mathrm{poly}}
\definecolor{c0}{HTML}{641a80}
\definecolor{c1}{HTML}{b73779}
\crefname{assumption}{Assumption}{Assumptions}
\crefname{problem}{Problem}{Problems}
\crefname{claim}{Claim}{Claims}
\crefname{fact}{Fact}{Facts}
\newcommand{\norm}[1]{\|#1\|}
\newcommand{\bv}[1]{\boldsymbol{\mathbf{#1}}}
\newcommand{\EE}{\operatorname{\mathbb{E}}}
\newcommand{\R}{\mathbb{R}}
\newcommand{\T}{\mathsf{T}}
\newcommand{\Fro}{\mathsf{F}}
\newcommand{\F}{\mathcal{F}}
\newcommand{\OPT}{\mathrm{OPT}}
\newcommand{\tr}{\operatorname{tr}}
\newcommand{\transpose}{\mathsf{T}}
\newcommand{\family}{\mathcal{F}}
\newcommand{\logfamily}{\log |\mathcal{F}|}
\title{Query Efficient Structured Matrix Learning}
\author{
    Noah Amsel\thanks{New York Univeristy (\url{noah.amsel@nyu.edu}, \url{pratyushavi@nyu.edu}, \url{tyler.chen@nyu.edu}, \url{fd2135@nyu.edu}, \url{chinmay.h@nyu.edu}, \url{cmusco@nyu.edu})} \and
    Pratyush Avi\footnotemark[1]\and
    Tyler Chen\footnotemark[1]\and
    Feyza Duman Keles\footnotemark[1]\and
    Chinmay Hegde\footnotemark[1]\and
    Christopher Musco\footnotemark[1]\and
    Cameron Musco\thanks{University of Massachusetts Amherst (\url{cmusco@cs.umass.edu})}\and
    David Persson\footnotemark[1]\,\thanks{Flatiron Institute (\url{dpersson@flatironinstitute.org})} 
}
\DeclareMathOperator*{\argmin}{arg\,min}
\date{}
  \newcommand{\cAAAI}[1]{AAAI\ Conference\ on\ Artificial (AAAI)}
\begin{document}

\maketitle

\begin{abstract}
    We study the problem of learning a structured approximation (low-rank, sparse, banded, etc.) to an unknown matrix $\bv{A}$ given access to matrix-vector product (matvec) queries of the form $\bv{x} \rightarrow \bv{A}\bv{x}$ and $\bv{x} \rightarrow \bv{A}^\transpose \bv{x}$. This problem is of central importance to algorithms across scientific computing and machine learning, with applications to fast multiplication and inversion for structured matrices, building preconditioners for first-order optimization, and as a model for differential operator learning. Prior work focuses on obtaining query complexity upper and lower bounds for learning specific structured matrix families that commonly arise in applications. 
    
    We initiate the study of the problem in greater generality, aiming to understand the query complexity of learning approximations from general matrix families. Our main result focuses on finding a near-optimal approximation to $\bv{A}$ from any \emph{finite-sized} family of matrices, $\mathcal{F}$. Standard results from matrix sketching show that $O(\log|\mathcal{F}|)$ matvec queries suffice in this setting. This bound can also be achieved, and is optimal, for vector-matrix-vector queries of the form $\bv{x},\bv{y}\rightarrow \bv{x}^\transpose\bv{A}\bv{y}$, which have been widely studied in work on rank-$1$ matrix sensing.
    
    Surprisingly, we show that, in the matvec model, it is possible to obtain a nearly quadratic improvement in complexity, to $\tilde{O}(\sqrt{\log|\mathcal{F}|})$. Further, we prove that this bound is tight up to log-log factors.

Via covering number arguments, our result extends to well-studied infinite families. As an example, we establish that a near-optimal approximation from any \emph{linear matrix family} of dimension $q$ can be learned with $\tilde{O}(\sqrt{q})$ matvec queries, improving on an $O(q)$ bound achievable via sketching techniques and vector-matrix-vector queries.
\end{abstract}

\thispagestyle{empty}
\newpage
\setcounter{page}{1}
\section{Introduction}
What can we learn about an unknown matrix $\bv{A}$ from multiplying it by a few vectors? This basic question has been central to numerical linear algebra for decades, underlying work on Krylov subspace methods \cite{Saad:2003,Saad:2011}, matrix sketching \cite{Woodruff:2014,DrineasMahoney:2016}, randomized numerical linear algebra \cite{WeisseWelleinAlvermann:2006,MartinssonTropp:2020}, and more. In its most general form, the task is to interact with an $n\times n$ matrix $\bv{A}$ in an adaptive way, issuing matrix-vector product queries (``matvec queries'') of the form:
\begin{align*}
    \bv{x} \rightarrow \bv{A}\bv{x} \hspace{2em} \text{or} \hspace{2em} \bv{x} \rightarrow \bv{A}^\transpose\bv{x},
\end{align*}
where $\bv{x}\in \R^{n}$ can be chosen to be any vector. We consider the setting where we can issue multiple matvec queries $\bv{x}_1, \ldots, \bv{x}_m$, and each $\bv{x}_{i}$ can be chosen {adaptively} depending on the results of the previous queries, $\bv{x}_1, \ldots, \bv{x}_{i-1}$.

Understanding the matvec query complexity of linear algebraic problems is critical in applications where $\bv{A}$ is only accessible as an implicit operator (algorithms that work in this setting are called ``matrix-free'') or when the goal is to avoid explicit operations on the entries of $\bv{A}$ to take computational advantage of the natural parallelism of matrix-vector multiplication. In recent years, we have seen a flurry of success in proving better upper and lower bounds (often matching) on the matvec query complexity of central problems like linear system solving \cite{BravermanHazanSimchowitz:2020}, trace estimation \cite{MeyerMuscoMusco:2021,WoodruffZhangZhang:2022,EpperlyTroppWebber:2024,ChenChenLi:2024}, eigenvalue estimation \cite{SimchowitzEl-AlaouiRecht:2018,ChenTrogdonUbaru:2021,BravermanKrishnanMusco:2022,MuscoMuscoRosenblatt:2025,BhattacharjeeJayaramMusco:2025}, matrix function computation \cite{ChewiDios-PontLi:2024}, and much more \cite{SunWoodruffYang:2019,MeyerSwartworthWoodruff:2025}. 

\subsection{Matrix Approximation}
One of the most important problems studied in the matvec query model is that of learning a near-optimal approximation to $\bv{A}$ from a given ``hypothesis class'' of structured matrices, $\family$. For example, $\mathcal{F}$ could be all rank-$k$ matrices or all tridiagonal matrices. We are interested in this problem in the \emph{agnostic learning} setting, where $\bv{A}$ \emph{is not} assumed to lie in $\family$, nor to be generated from some ``true'' matrix in $\mathcal{F}$, e.g., by adding i.i.d. noise. Formally, we consider:
\begin{problem}
\label{prob:intro}
            For an unknown $\bv{A} \in \R^{n \times n}$ and a matrix family $\family \subset \R^{n \times n}$ (a hypothesis class), find $\bv{\tilde{B}}\in \mathcal{F}$ satisfying
            \begin{align*}
                \|\bv{A} - \bv{\tilde{B}}\|_{\Fro} \leq \gamma \cdot \inf_{\bv{B}\in \family}\|\bv{A} - \bv{B}\|_{\Fro}
            \end{align*}
            for some approximation factor $\gamma \geq 1$, using as few matvec queries to $\bv{A}$ and $\bv{A}^\transpose$ as possible.\footnote{We assume $\bv{A}$ is square for simplicity of notation. This is without loss of generality, since the rectangular case can always be handled by padding the smaller dimension with all zeros.}
\end{problem}
In other words, our goal is to find an approximation, $\bv{\tilde{B}}$, from the class $\mathcal{F}$ that is competitive with the \emph{best possible} approximation to $\bv{A}$ in $\family$.
We measure error with respect to the Frobenius norm, which is a popular choice. However, other matrix norms have also been considered \cite{KressnerMasseiRobol:2019,MuscoMusco:2015}. The problem is interesting in several possible regimes for $\gamma$:  when $\gamma$ is close to $1$ (i.e., $\gamma = 1+\epsilon$ for some small $\epsilon$), when $\gamma$ is a constant, or even when $\gamma$ is allowed to grow slowly in $n$ or other problem parameters \cite{AmselChenHalikias:2025}.

\Cref{prob:intro} arises in dozens of applications where the goal is to efficiently obtain an explicit surrogate for an implicit operator. Examples include when $\bv{A}$ is the Hessian matrix of a large-scale optimization problem, for which matvecs can be computed much more efficiently than realizing the matrix \cite{Pearlmutter:1994,DauphinVriesBengio:2015}, when $\bv{A}$ is the inverse of another matrix and we wish to develop a fast solver \cite{LinLuYing:2011,Martinsson:2016}, and when $\bv{A}$ is the solution operator of a differential equation \cite{BoulleTownsend:2023}. For these applications, prior work studies query complexity for a wide variety of structured families, ranging from hierarchically structured matrices, to sparse matrices, to butterfly matrices, and more. 
% See \Cref{sec:related_work} for more discussion of prior work.

\Cref{prob:intro} is also relevant when the goal is to replace an already explicit $\bv{A}$ with a structured surrogate that can be stored or operated on more efficiently. 
The matvec query model is relevant in this setting because, on modern hardware, matvecs are an inherently efficient way to access $\bv{A}$. 
A familiar setting is when $\family$ is the class of rank-$k$ matrices for some parameter $k$. For this family, \Cref{prob:intro} captures the problem of finding a near-optimal rank-$k$ approximation, for which the matvec query complexity has been studied extensively \cite{MuscoMusco:2015,BakshiClarksonWoodruff:2022,BakshiNarayanan:2023,MeyerMuscoMusco:2024}.

We provide a more detailed summary of the large and growing body of related work on structured matrix approximation in the matvec query model in \Cref{sec:related_work}. Importantly, essentially all prior work on the problem centers on understanding the query complexity of \emph{specific} structured matrix families. 

\paragraph{Beyond Specific Matrix Families.}
It is natural to ask if a more general theory exists for the query complexity of approximation from a given hypothesis class, $\mathcal{F}$. In supervised learning, a rich theory based on VC dimension, Pollard pseudodimension, fat-shattering dimension, etc. helps categorize the complexity of learning functions from arbitrary hypothesis classes \cite{Pollard:1990,AlonBen-DavidCesa-Bianchi:1997,AnthonyBartlett:1999,Vapnik:2000}. We ask:

\begin{center}\textit{Does a general theory exist to characterize the matvec query complexity of structured matrix approximation?} \end{center}

This question is interesting because \Cref{prob:intro} differs in important ways from standard supervised learning. A first observation is that data collection can be active and adaptive, although there has been significant work on active learning \cite{Settles:2009}. More interesting is the fact that the output corresponding to any query/data example $\bv{x}\in \R^n$ is itself an $n$ dimensional vector, $\bv{A}\bv{x}$ (or $\bv{A}^\transpose\bv{x}$). We might therefore hope to learn \emph{more}  from each example than in a setting with scalar-valued outputs (informally, up to $n$ times more). One such setting with scalar outputs is the well-studied rank-$1$ matrix sensing problem\footnote{This model is called ``rank-$1$'' sensing because $\bv{x}^\transpose\bv{A}\bv{y}$ is equivalent to the Frobenius inner product between $\bv{A}$ and the rank-$1$ matrix $\bv{x}\bv{y}^\transpose$.} from compressed sensing, in which the goal is to find an approximation to $\bv{A}$, but the matrix is only accessible via queries of the form $\bv{x}^\transpose\bv{A}\bv{y}$ \cite{ZhongJainDhillon:2015,DasarathyShahBhaskar:2015,KuengRauhutTerstiege:2017}. From here on, we refer to this setting as the ``vector-matrix-vector'' query model.

Indeed, for many of the families mentioned above, there is a lot to be gained from the multidimensional output. Rank-$k$ matrices, for example, require $O(nk)$ parameters to represent, and $\Omega(nk)$ vector-matrix-vector queries to learn \cite{CandesPlan:2011}, but can be learned (up to constant accuracy) using just $O(k)$ matvec queries \cite{BakshiClarksonWoodruff:2022}. Similarly, matrices with $s$ non-zeros per row have $O(ns)$ parameters and require $\Omega(ns)$ vector-matrix-vector queries to learn \cite{DasarathyShahBhaskar:2015}, but just $O(s)$ matvec queries \cite{AmselChenHalikias:2024}.
In both of these examples, matvec queries allow for a dramatic $n$-factor improvement over scalar queries. Similar improvements are known for other classes \cite{ChenHalikiasKeles:2025,AmselChenHalikias:2025}. What sort of improvement is possible in general?

\subsection{Our Contributions}\label{sec:contributions}
In this work, we take a step towards understanding the power of matvec queries for structured matrix-approximation in greater generality by considering arbitrary \emph{finite families} of matrices (i.e., with finite size $|\mathcal{F}|)$. Such families require $\log(|\mathcal{F}|)$ bits (i.e., parameters) to represent, and it can be shown using standard techniques from matrix sketching that it is always possible to solve \Cref{prob:intro} with constant approximation factor $\gamma$ using $O(\log|\mathcal{F}|)$ matvec queries (see \Cref{sec:one_sided} for more details). Indeed, this bound can also be achieved in the more restrictive vector-matrix-vector (i.e., rank-$1$ matrix-sensing) setting, with queries of the form $\bv{x}^\transpose\bv{A}\bv{y}$. In that setting, the bound is optimal (see \Cref{sec:weakerModels}).

\smallskip\noindent \textbf{Upper Bound.} Our main result shows that $O(\log|\mathcal{F}|)$ is far from optimal for matvec queries. By taking advantage of the multidimensional output of such queries, we give a nearly quadratic improvement in query complexity  \emph{for any finite $\mathcal{F}$}. In particular, in \Cref{sec:finite} we prove:
\begin{theorem}[Finite Family Approximation]\label{thm:robust}
    There is an algorithm (\Cref{alg:binary_search}) that, for any finite matrix family $\mathcal{F}\subset \R^{n\times n}$ with size $|\mathcal{F}|$ and any $\epsilon \in (0,1)$, uses $\tilde{O}(\sqrt{\logfamily}/\epsilon^2)$ matrix-vector product queries\footnote{We use $\tilde{O}(\cdot)$ to suppress polylog factors in the input argument and the failure probability of the algorithm.} 
    with $\bv{A}$ and $\bv{A}^\transpose$ and finds, with high probability, an approximation $\bv{\tilde B} \in \family$ satisfying:
    \begin{align*}
        \|\bv{A} - \bv{\tilde B} \|_{\Fro} < (3+\epsilon)\cdot \min_{\bv{B} \in \family} \|\bv{A}- \bv{B}\|_{\Fro}.
    \end{align*}
\end{theorem}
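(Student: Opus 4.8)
The plan is to reduce \Cref{prob:intro} to the simpler task of \emph{pruning} a candidate set, and to show that matvec queries prune much faster than scalar queries. The starting observation is that for any unit vector $\bv{x}$ and any $\bv{B}\in\family$, a single query $\bv{A}\bv{x}$ reveals $(\bv{A}-\bv{B})\bv{x}=\bv{A}\bv{x}-\bv{B}\bv{x}$ exactly, and $\|(\bv{A}-\bv{B})\bv{x}\|_2\le\|\bv{A}-\bv{B}\|_{\Fro}$ (and symmetrically for $\bv{A}^{\transpose}$ on the left). Hence, once we fix a target scale $\tau$, any query lets us \emph{safely} discard every surviving $\bv{B}$ with $\|(\bv{A}-\bv{B})\bv{x}\|_2>\tau$: such a $\bv{B}$ is certifiably suboptimal whenever $\tau\ge\OPT$, and when a $\bv{B}$ is discarded in favor of a surviving $\bv{B}'$, the triangle inequality bounds $\|\bv{A}-\bv{B}'\|_{\Fro}$ against $\|\bv{A}-\bv{B}\|_{\Fro}$ up to a small constant --- this is the source of the factor $3$. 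The correct scale $\tau\approx\OPT$ is located by an outer binary search over a $\poly$-length geometric grid (whose endpoints come from cheaply estimating $\|\bv{A}\|_{\Fro}$ and $\max_{\bv{B}\in\family}\|\bv{B}\|_{\Fro}$ with $O(1)$ queries), costing only a $\poly\log$ overhead absorbed into $\tilde O(\cdot)$; this is presumably the ``binary search'' of \Cref{alg:binary_search}.

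The heart of the argument is a subroutine that, on an active set of size $M$ and at a fixed scale, uses $\tilde O(\sqrt{\log M})$ \emph{adaptively} chosen matvec queries to shrink the active set to size $M^{\Omega(1)}$ without dropping any near-optimal hypothesis by more than the constant factor above. Applying it repeatedly --- each call shrinks the surviving set from $M$ to roughly $\sqrt{M}$, so $\log M$ is halved and the per-stage budget $\sqrt{\log M}$ decays geometrically --- drives the active set down to constant size after $O(\log\log|\family|)$ stages, with total query cost a convergent geometric series summing to $\tilde O(\sqrt{\logfamily})$. The $1/\epsilon^2$ enters only at the end: a fresh Gaussian right-sketch with $O(\epsilon^{-2}\log(1/\delta))$ columns estimates $\|\bv{A}-\bv{B}\|_{\Fro}$ to a $(1\pm\epsilon)$ factor simultaneously for all $O(1)$ surviving $\bv{B}$ (a routine union bound), so we can output the best remaining one. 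Combining the constant-factor slack accumulated in the pruning stages with the $(1+\epsilon)$ factor of the final selection yields the stated $(3+\epsilon)$.

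Everything hinges on the pruning subroutine, and this is where the gap with the $O(\logfamily)$ sketching / vector-matrix-vector bound must appear: a \emph{non-adaptive} batch of $k$ random sketch vectors only separates $\exp(\Theta(k))$ hypotheses --- e.g. a rank-one residual of relative size $R$ evades a $k$-dimensional random sketch with probability $\approx R^{-\Theta(k)}$, not $R^{-\Theta(k^2)}$ --- which is exactly the $O(\logfamily)$ barrier that vector-matrix-vector queries cannot cross. To beat it one must exploit adaptivity, choosing each query as a function of the responses so far, in a block-Krylov / power-iteration fashion that directs probes toward the dominant directions of the residuals of the surviving candidates, so that $k$ adaptive rounds carry information of effective ``dimension'' $\Theta(k^2)$ rather than $\Theta(k)$ and therefore resolve $\exp(\Theta(k^2))=\exp(\Theta(\log M))$ candidates with $k=\tilde O(\sqrt{\log M})$. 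Making this precise --- in particular, constructing a single query direction (or a short adaptive burst) that is simultaneously ``large'' on a constant fraction of the still-bad residuals regardless of their ranks and orientations, while the agnostic error ($\bv{A}\notin\family$) is kept under control --- is the main obstacle; the outer binary search, the geometric stage-count bookkeeping, and the final Gaussian-sketch selection are all standard.
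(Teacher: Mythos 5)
Your outer shell is roughly aligned with the paper: a binary search over a geometric grid of candidate scales $\tau$ (the paper's $M_i$ in \Cref{alg:binary_search}), with a cheap $O(1)$-query coarse estimate of $\OPT$ to seed the range (the paper's \Cref{alg:coarse}), and a final Frobenius-sketch selection among the survivors. Also your bookkeeping that pruning a candidate to size one is enough, and that the triangle inequality converts "certified suboptimal" into a constant-factor approximation guarantee, is on target. But the heart of the argument --- the pruning subroutine that beats the $O(\log|\F|)$ barrier --- is missing, and the mechanism you speculate for it is not the one that works.

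You posit a ``block-Krylov / power-iteration'' subroutine that adaptively steers queries toward the dominant directions of surviving residuals, so that $k$ adaptive rounds carry ``effective dimension $\Theta(k^2)$.'' That is not what the paper does, and it is not clear how to make it work in the agnostic setting where $\bv{A}\notin\F$ and the residuals $\bv{A}-\bv{B}$ have arbitrary ranks and orientations. The paper's actual idea is entirely different: it precomputes a \emph{left} sketch $\bv{\Psi}^\T\bv{A}$ with $\tilde O(\sqrt{\log|\F|})$ rows once and for all, then for each candidate right sketch $\bv{\Pi}$ (with only $\tilde O(1)$ columns) uses the free quantity $\bv{\Psi}^\T\bv{A}\bv{\Pi}$ to test whether $\bv{\Pi}$ would be ``productive.'' If a large fraction of candidates $\bv{B}$ still have $\|\bv{A}\bv{\Pi}-\bv{B}\bv{\Pi}\|_\Fro$ small, the algorithm finds a \emph{representative} $\bv{R}$ with $\|\bv{A}\bv{\Pi}-\bv{R}\bv{\Pi}\|_\Fro$ small (certified via $\bv{\Psi}^\T$) and filters with $\|\bv{R}\bv{\Pi}-\bv{B}\bv{\Pi}\|_\Fro$ as a proxy, at zero right-query cost; if no representative exists in a random sample, $\bv{\Pi}$ is so productive that actually spending $\ell=\tilde O(1)$ right queries on it cuts $|\mathcal C|$ by a $2^{-\sqrt{\log|\F|}}$ factor, and this can happen at most $\sqrt{\log|\F|}$ times. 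The speedup thus comes from a dichotomy and a one-time shared left sketch, not from Krylov-style iteration, and the stage structure is ``bounded number of productive iterations at fixed cost,'' not the geometric decay $M\to\sqrt M$ with shrinking per-stage budget that you describe. A further subtle point the paper must handle, absent from your sketch, is that reusing $\bv\Psi$ across adaptive iterations breaks independence, which they repair by pre-drawing $r=O(\log|\F|)$ right sketches per iteration and union-bounding over the $O(r^{\sqrt{\log|\F|}})$ possible trajectories of the candidate set. Finally, your claim that a non-adaptive $k$-column sketch can only separate $\exp(\Theta(k))$ hypotheses is asserted, not proven, and the paper explicitly states it has no lower bound showing adaptivity is necessary, so you should not lean on that as a load-bearing fact.
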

The algorithm achieving \Cref{thm:robust} strongly relies on the ability to query both $\bv{A}$ and $\bv{A}^\transpose$. It is not hard to prove a lower bound of $\Omega(\log|\mathcal{\F}|)$ if we can only query $\bv{A}$ on one side (see \Cref{sec:weakerModels}). In addition to querying on both sides, our algorithm selects queries adaptively -- the best known non-adaptive method achieves complexity $O(\log |\F|)$. However, we have no lower bounds showing that adaptivity is necessary.

Roughly speaking, we obtain \Cref{thm:robust} by efficiently simulating an iterative sketching method that solves \Cref{prob:intro} using ${O}(\log |\F|)$ \emph{one-sided} queries. Described in \Cref{sec:finite}, the idea is to multiply $\bv{A}$ on the right with a random matrix, $\bv{\Pi}$, with $\tilde{O}(1)$ columns at each iteration, and to eliminate any matrix $\bv{B} \in \mathcal{F}$ for which $\|\bv{A}\bv{\Pi}-\bv{B}\bv{\Pi}\|_{\Fro}$ is large. Using standard results on how well the sketch $\|\bv{A}\bv{\Pi}-\bv{B}\bv{\Pi}\|_\Fro$ approximates $\|\bv{A}-\bv{B}\|_\Fro$, it is easy to show that this approach eliminates all matrices that are not good approximations to $\bv{A}$ after $\tilde{O}(\log |\mathcal{F}|)$ rounds. 

Our main idea is to use an additional left-sketch, $\bv{A}^\transpose \bv{\Psi}$, to identify what fraction of bad approximations $\bv B \in \F$ a particular set of right queries, $\bv{\Pi}$, will eliminate. Importantly, we can do so without ever computing $\bv{A}\bv{\Pi}$ directly -- we only use information in the sketch  $\bv{\Psi}^\T \bv{A}\bv{\Pi}$, which can be computed using our left sketch alone, and no additional matvec queries on the right. We show a dichotomy: either we find that $\bv{\Pi}$ eliminates a very large fraction of bad approximations, in which case we issue the queries and continue. Otherwise, if $\bv{\Pi}$ would remove too few bad approximations (i.e., it provides less information about $\bv{A}$), we prove that it is possible to simulate the result of issuing $\bv{\Pi}$ using the sketch  $\bv{\Psi}^\T \bv{A}\bv{\Pi}$. Balancing these two cases gives our improvement from $O(\log |\F|)$ to $\tilde O(\sqrt{\log |\F|})$.

\smallskip\noindent \textbf{Lower Bound.} 
A natural question is if the bound from \Cref{thm:robust} can be improved. For one, our current techniques only achieve a constant factor approximation; it would be interesting to push the result to achieve approximation factor  $1+\epsilon$. We discuss this and other open problems in \Cref{sec:open_problems}.
Nevertheless,  
 we show that significantly improving the dependence on $|\mathcal{F}|$ in \Cref{thm:robust} is not possible in general for finite matrix families. In \Cref{sec:lower_bounds} we prove:
 
\begin{theorem}[Finite Family Lower Bound]
    \label{thm:two_sided_lower_bound} 
    Any algorithm that solves \Cref{prob:intro} with constant probability for general finite matrix families $\mathcal{F}$ and approximation factor $\gamma$ requires $\Omega(\sqrt{\log |F|/\log\gamma})$ matvec queries.
\end{theorem}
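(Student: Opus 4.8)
The plan is to fix one finite family $\family\subset\R^{n\times n}$ and a distribution $\mathcal D$ over inputs $\bv A$ on which every \emph{deterministic} algorithm issuing fewer than $m=cn$ matvec queries fails, with probability $\ge 1/3$ over $\bv A\sim\mathcal D$, to output a $\gamma$-approximation; the theorem then follows by Yao's minimax principle. Crucially $\mathcal D$ must \emph{not} be supported on $\family$: if $\bv A\in\family$ always, a single query $\bv A\bv x_0$ for $\bv x_0$ outside the finite union $\bigcup_{\bv B\ne\bv B'}\ker(\bv B-\bv B')$ identifies $\bv A$ outright, so the hardness has to come from the agnostic regime. I would take $\family$ to be a fixed Frobenius $\epsilon$-net of the scaled orthogonal group $\kappa\cdot O(n)=\{\kappa\bv Q:\bv Q^\transpose\bv Q=\bv I\}$, and $\mathcal D$ to be $\bv A=\kappa\bv Q^{\ast}+\sigma\bv G$ with $\bv Q^{\ast}$ Haar on $O(n)$ and $\bv G$ having i.i.d.\ $\mathcal N(0,1)$ entries, with scales $\epsilon=1$, $\kappa=\gamma^{2}/\sqrt n$, $\sigma=1/n$ (so $\|\kappa\bv Q^{\ast}\|_{\Fro}=\gamma^{2}$ and $\|\sigma\bv G\|_{\Fro}\approx 1$). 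Since $O(n)$ has dimension $\binom n2$ and Frobenius diameter $\Theta(\sqrt n)$, standard metric-entropy estimates give $\logfamily=\Theta\!\big(n^{2}\log(\kappa\sqrt n/\epsilon)\big)=\Theta(n^{2}\log\gamma)$, hence $n=\Theta\!\big(\sqrt{\logfamily/\log\gamma}\big)$, and it suffices to prove $m=\Omega(n)$. (For $\gamma$ within an absolute constant of $1$ one instead runs the argument with $\gamma$ replaced by a suitable constant, which still yields $\Omega(\sqrt{\logfamily})=\Omega(\sqrt{\logfamily/\log\gamma})$.)

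\textbf{A good approximation forces near-identification of $\bv Q^{\ast}$.} Because $\kappa\bv Q^{\ast}$ lies within $\epsilon=1$ of a net point, $\OPT=\min_{\bv B\in\family}\|\bv A-\bv B\|_{\Fro}\le\|\sigma\bv G\|_{\Fro}+1=O(1)$. Expanding $\|\bv A-\kappa\bv Q\|_{\Fro}^{2}=\|\kappa\bv Q^{\ast}-\kappa\bv Q\|_{\Fro}^{2}+2\sigma\langle\kappa\bv Q^{\ast}-\kappa\bv Q,\bv G\rangle+\|\sigma\bv G\|_{\Fro}^{2}$ and using $\sigma=1/n$ to bound the cross term, any output $\bv B=\kappa\bv Q$ with $\|\bv A-\bv B\|_{\Fro}\le\gamma\cdot\OPT=O(\gamma)$ must satisfy $\|\kappa\bv Q^{\ast}-\kappa\bv Q\|_{\Fro}=O(\gamma)$, which is $\ll\|\kappa\bv Q^{\ast}\|_{\Fro}=\gamma^{2}$. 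So a $\gamma$-approximator must pin down the hidden rotation essentially exactly.

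\textbf{What a matvec budget buys, and the conclusion.} If the algorithm makes $m_{r}$ queries to $\bv A$ and $m_{\ell}$ to $\bv A^{\transpose}$ with $m=m_{r}+m_{\ell}$, let $S_{r},S_{\ell}\subseteq\R^{n}$ be the (adaptively determined) spans of the query vectors, $\dim S_{r}\le m_{r}$, $\dim S_{\ell}\le m_{\ell}$. The transcript is a deterministic function of $(\bv A P_{S_{r}},\,P_{S_{\ell}}\bv A)$, so the part of $\bv A$ never seen is the corner block $P_{S_{\ell}^{\perp}}\bv A\,P_{S_{r}^{\perp}}$. Orthogonality of $\bv Q^{\ast}$ gives the deterministic bound $\|P_{S_{\ell}^{\perp}}(\kappa\bv Q^{\ast})P_{S_{r}^{\perp}}\|_{\Fro}^{2}\ge\|\kappa\bv Q^{\ast}\|_{\Fro}^{2}-\|(\kappa\bv Q^{\ast})P_{S_{r}}\|_{\Fro}^{2}-\|P_{S_{\ell}}(\kappa\bv Q^{\ast})\|_{\Fro}^{2}=\kappa^{2}(n-\dim S_{r}-\dim S_{\ell})\ge\kappa^{2}(n-m)$, so for $m\le n/4$ the unseen block of the \emph{signal} has Frobenius norm $\ge\tfrac12\kappa\sqrt n=\tfrac12\gamma^{2}$. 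The key probabilistic claim is that, conditioned on the whole transcript, the law of $P_{S_{\ell}^{\perp}}(\kappa\bv Q^{\ast})P_{S_{r}^{\perp}}$ stays \emph{diffuse}: its conditional variance is $\Omega(\kappa^{2}n)$, so every transcript-measurable guess $\widehat{\bv C}$ has $\|P_{S_{\ell}^{\perp}}(\kappa\bv Q^{\ast})P_{S_{r}^{\perp}}-\widehat{\bv C}\|_{\Fro}\ge c_{0}\kappa\sqrt n=c_{0}\gamma^{2}$ with probability $\ge 1/2$. Granting this, the algorithm's output $\bv B$ satisfies $\|\bv A-\bv B\|_{\Fro}\ge\|P_{S_{\ell}^{\perp}}(\bv A-\bv B)P_{S_{r}^{\perp}}\|_{\Fro}\ge c_{0}\gamma^{2}-\|\sigma\bv G\|_{\Fro}\ge\tfrac{c_{0}}2\gamma^{2}$ with probability $\ge1/2$, whereas $\gamma\cdot\OPT=O(\gamma)\ll\gamma^{2}$; so it fails. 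Hence $m\ge n/4=\Omega(n)=\Omega\!\big(\sqrt{\logfamily/\log\gamma}\big)$.

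\textbf{Main obstacle.} The only non-routine step is the diffuseness claim under \emph{adaptive} querying: one cannot simply fix $S_{r},S_{\ell}$ in advance, since later queries could try to "aim" at directions where $\kappa\bv Q^{\ast}$ has already revealed structure. The resolution is that, because the prior on $\bv Q^{\ast}$ is rotation-invariant, such aiming is useless — after conditioning on the observed columns of $\bv A P_{S_{r}}$ and rows of $P_{S_{\ell}}\bv A$, the residual $(n-m_{\ell})\times(n-m_{r})$ block of $\bv Q^{\ast}$ is still distributed like a (scaled) Haar block on a Stiefel manifold of dimension $\Omega(n^{2})$, whose Frobenius fluctuations around any fixed matrix are $\Omega(\kappa\sqrt n)$. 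Propagating this symmetry through the adaptive interaction and controlling the conditional second moment (e.g.\ by a martingale/rotational-averaging argument over the query rounds) is the technical heart, and is analogous to the adaptivity-elimination arguments used in prior matvec lower bounds for eigenvector estimation and linear systems.
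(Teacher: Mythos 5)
Your high-level plan — pick a hard distribution over $\bv A$ and a fine net $\F$ so that $\log|\F| = \Theta(n^2\log\gamma)$, then show that after $m < n/4$ adaptive matvecs the "unseen" corner block of $\bv A$ remains too uncertain to $\gamma$-approximate — is exactly the shape of the paper's argument, and your computation that the signal's unseen block deterministically retains Frobenius mass $\kappa^2(n-m)$ is a nice observation. But the instance itself is genuinely different: the paper takes $\bv A = \bv G\bv G^\T$ (a Wishart matrix) and lets $\F$ be an $\alpha$-net of a Frobenius ball of radius $\Theta(n^{1.5})$, whereas you take $\bv A = \kappa\bv Q^* + \sigma\bv G$ with a Haar-random $\bv Q^*$ and let $\F$ be a net of the scaled orthogonal group. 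This matters because of how adaptivity is handled. The paper's whole analysis rests on a clean, ready-made tool — Lemma~13 of Braverman--Hazan--Simchowitz~\cite{BravermanHazanSimchowitz:2020} — which says that, after \emph{any} $m$ adaptive matvec queries to a Wishart matrix, one can write $\bv A = \bv\Delta + \bv V\bv W\bv V^\T$ where $\bv W$ is an $(n-m)\times(n-m)$ Wishart matrix \emph{exactly independent} of the entire transcript. The lower bound then reduces to a per-entry anti-concentration statement (their Lemma~6, the paper's Lemma~4) for Gaussian inner products. No conditional-distribution bookkeeping is required.

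Your analogue of that tool is the "diffuseness claim" for $P_{S_\ell^\perp}(\kappa\bv Q^*)P_{S_r^\perp}$ conditioned on the full (adaptive) transcript, and this is the genuine gap: you state it, you correctly identify it as "the technical heart," but you do not prove it. Two things make it harder than you suggest. First, the observations $\bv A\bv q_i = \kappa\bv Q^*\bv q_i + \sigma\bv G\bv q_i$ are \emph{noisy} linear measurements of $\bv Q^*$, so conditioning on the transcript does not reduce to conditioning on a block of $\bv Q^*$ and appealing to Stiefel-manifold symmetry — you have to control a convolution of Haar and Gaussian conditionals, and the spans $S_r,S_\ell$ are themselves random functions of the noise. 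Second, even granting a martingale/rotational-averaging scheme across rounds, you would need a quantitative anti-concentration bound for the residual block around an arbitrary transcript-measurable guess $\widehat{\bv C}$, not just a variance bound; a second-moment bound alone does not give "fails with probability $\ge 1/2$" without more work. These are precisely the difficulties the Wishart decomposition was designed to sidestep. Your route is plausible and would likely go through with a BHS-style residual-independence lemma for Haar orthogonal matrices (or by dropping the noise and working directly with the exact conditional law of $\bv Q^*$), but as written the proof is missing its central lemma, whereas the paper closes the argument by importing one off the shelf.
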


Setting $\gamma = O(1)$, \Cref{thm:two_sided_lower_bound} shows that the dependence on $\log|\F|$ in \Cref{thm:robust} cannot be improved except by log-log factors. 
The ``hard'' family used to prove this result is simple: let $\F$ be a discrete net over the set of all $n \times n$ matrices with bounded norm. Not surprisingly, we require $\Omega(n) = \Omega(\sqrt{\log |\F|})$ matvec queries to learn any meaningful approximation from this family: essentially, we need to read all of $\bv{A}$. 
Our lower bound can also be proven using the more realistic hard instance of constant-rank butterfly matrices, which are commonly used to approximate non-uniform Fourier transforms and other operators \cite{LiuXingGuo:2021}. Such matrices require $\tilde{O}(n)$ parameters to represent, but our best-known butterfly approximation algorithms require $\tilde{O}(\sqrt{n})$ matvecs queries \cite{LiYang:2017}. As discussed further in \Cref{sec:lower_bounds}, our lower bound implies that this is optimal. 

\smallskip\noindent \textbf{Beyond Finite Families.} 
In addition to being a natural starting point for work on general matrix families, one reason to study finite families is that upper bounds immediately imply bounds for infinite families that can be appropriately discretized (i.e., are well-approximated by a finite family). Using standard arguments, we show that \Cref{thm:robust} immediately implies that a near optimal approximation from any family with covering number bounded by $\Gamma$ can be computed with $\tilde{O}(\sqrt{\log \Gamma})$ queries. We conclude new results for well-studied infinite families. For example, there has been interest in approximation via \emph{linearly parameterized families} with dimension $q$, meaning that any matrix in the family can be written as $c_1 \bv{P}_1 + \ldots + c_1 \bv{P}_q$, where $\bv{P}_1, \ldots, \bv{P}_q \in \R^{n \times n}$ are a fixed set of base matrices and $c_1, \ldots, c_q \in \R$ are arbitrary scalar coefficients \cite{EldarLiMusco:2020,HalikiasTownsend:2023,Otto:2023}. Linearly parameterized families include sparse families like banded and block diagonal matrices, displacement-structured matrices like Toeplitz, circulant, and Hankel matrices, and more. In \Cref{sec:covering} we prove:

 \begin{corollary}[Linear Family Approximation]\label{cor:linear}
            For any linearly parameterized matrix family, $\mathcal{L} \subseteq \mathbb{R}^{n \times n}$, with dimension $q$ and any $\alpha,\epsilon \in (0,1)$, it is possible to learn $\bv{\tilde{B}} \in \mathcal{F}$ satisfying, with high probability,
            \begin{align*}
                \|\bv{A} - \bv{\tilde{B}}\|_\Fro \leq (3+\epsilon)\min_{\bv{B}\in \mathcal{L}}\|\bv{A} - \bv{B}\|_\Fro + \alpha\|\bv{A}\|_\Fro,
            \end{align*}
             using $\tilde O \left ({\sqrt{q \log(1/\alpha)}}/{\epsilon^2} \right )$ matrix-vector product queries with $\bv A$ and $\bv A^\T$.
\end{corollary}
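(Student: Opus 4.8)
The plan is to deduce \Cref{cor:linear} from the finite-family result \Cref{thm:robust} by replacing the infinite linear family $\mathcal{L}$ with a finite $\delta$-net. First I would observe that $\bv{0}\in\mathcal{L}$ (take all coefficients $c_i=0$), so the optimal approximant $\bv B^\star:=\argmin_{\bv B\in\mathcal{L}}\|\bv A-\bv B\|_\Fro$ satisfies $\|\bv A-\bv B^\star\|_\Fro\le\|\bv A\|_\Fro$ and hence, by the triangle inequality, $\|\bv B^\star\|_\Fro\le 2\|\bv A\|_\Fro$. Thus we lose nothing by restricting the search to the Frobenius ball $\mathcal{B}:=\{\bv B\in\mathcal{L}:\|\bv B\|_\Fro\le 2\|\bv A\|_\Fro\}$.

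Next, since $\mathcal{L}$ is a $q$-dimensional linear subspace of $\R^{n\times n}$, a standard volumetric argument produces a $\delta$-net $\family$ of $\mathcal{B}$ in Frobenius distance with $|\family|\le(6\|\bv A\|_\Fro/\delta)^q$, so $\logfamily\le q\log(6\|\bv A\|_\Fro/\delta)$. Taking $\delta=\tfrac14\alpha\|\bv A\|_\Fro$ gives $\logfamily=O(q\log(1/\alpha))$. I would then run \Cref{alg:binary_search} on $\family$ with accuracy parameter $\epsilon$: it issues $\tilde O(\sqrt{\logfamily}/\epsilon^2)=\tilde O(\sqrt{q\log(1/\alpha)}/\epsilon^2)$ matvec queries and returns some $\bv{\tilde B}\in\family\subseteq\mathcal{L}$ with $\|\bv A-\bv{\tilde B}\|_\Fro<(3+\epsilon)\min_{\bv B\in\family}\|\bv A-\bv B\|_\Fro$. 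Since $\family$ is a $\delta$-net of $\mathcal{B}\ni\bv B^\star$, there is $\bv B'\in\family$ with $\|\bv B'-\bv B^\star\|_\Fro\le\delta$, whence $\min_{\bv B\in\family}\|\bv A-\bv B\|_\Fro\le\min_{\bv B\in\mathcal{L}}\|\bv A-\bv B\|_\Fro+\delta$ and therefore
\[
\|\bv A-\bv{\tilde B}\|_\Fro<(3+\epsilon)\min_{\bv B\in\mathcal{L}}\|\bv A-\bv B\|_\Fro+(3+\epsilon)\delta\le(3+\epsilon)\min_{\bv B\in\mathcal{L}}\|\bv A-\bv B\|_\Fro+\alpha\|\bv A\|_\Fro .
\]
Abstractly this is exactly the covering-number reduction mentioned in the introduction: any family whose net at scale $\Theta(\alpha\|\bv A\|_\Fro)$ has size $\Gamma$ is handled with $\tilde O(\sqrt{\log\Gamma})$ queries.

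The only genuine subtlety is that both the radius and the resolution of the net are expressed in terms of $\|\bv A\|_\Fro$, which is not known in advance. I would handle this with a short preprocessing step: $O(\log(1/\eta))$ Gaussian sketch queries yield an estimate $\hat s\in[\|\bv A\|_\Fro,2\|\bv A\|_\Fro]$ with probability $1-\eta$ (a standard Johnson--Lindenstrauss norm estimate), and we then build the net of $\{\bv B\in\mathcal{L}:\|\bv B\|_\Fro\le 4\hat s\}$ at resolution $\Theta(\alpha\hat s)$; this only affects hidden constants and is absorbed by $\tilde O(\cdot)$. Constructing $\family$ requires no queries at all (it can be produced by gridding the coefficient vector $\bv c\in\R^q$ after orthonormalizing $\bv P_1,\dots,\bv P_q$), so its cost is irrelevant to the query bound. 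Beyond this bookkeeping the argument is routine — the essential content is the $(R/\delta)^q$ covering bound for a ball in a $q$-dimensional subspace, together with the observation that discretization introduces only additive $O(\delta)$ error, which is precisely why the corollary carries an additive $\alpha\|\bv A\|_\Fro$ term rather than a purely multiplicative guarantee.
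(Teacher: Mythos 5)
Your proof is correct and follows essentially the same route as the paper: restrict $\mathcal{L}$ to the ball of radius $O(\|\bv A\|_\Fro)$ (using $\bv 0\in\mathcal{L}$ to control $\|\bv B^\star\|_\Fro$), build a $\Theta(\alpha\|\bv A\|_\Fro)$-net of size $\exp(O(q\log(1/\alpha)))$ via a volume bound in the $q$-dimensional subspace, estimate $\|\bv A\|_\Fro$ with a preliminary Hutchinson-style sketch to fix the scale, and invoke the finite-family result. The paper merely factors the middle step through a standalone covering-number theorem before instantiating it for linear families, and is slightly more careful about the explicit constants and about noting that the minimizer over the closed convex set $\mathcal{L}$ exists; you gloss over the precise constant in the $\Theta(\alpha\hat s)$ resolution needed to land the additive term at exactly $\alpha\|\bv A\|_\Fro$, but this is benign bookkeeping. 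One cosmetic remark: you reuse the symbol $\delta$ both as the net resolution and (implicitly, via the paper's algorithms) as a failure probability; since they play very different roles, pick distinct names to avoid confusion.
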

Observe that, due to our need to fix a scale for covering $\mathcal{L}$, \Cref{cor:linear} is not a pure multiplicative error result: we incur additive error $\alpha\|\bv{A}\|_\Fro$. While the query complexity depends only logarithmically on $\alpha$, it is still interesting to ask if a purely multiplicative bound can be achieved. In both the one-sided matvec and vector-matrix-vector models, $(1+\epsilon)$ multiplicative error can be obtained with $O(q/\epsilon)$ queries -- i.e., with a quadratically worse dependence on $q$. In particular, leveraging subspace embedding results for Kronecker structured matrices \cite{AhleKapralovKnudsenPaghVelingkerWoodruff2020}, it can be shown that it suffices to return $\min_{\bv{B}\in \mathcal{L}}\|\bv{A}\bv{G} - \bv{B}\bv{G}\|_\Fro$, where $\bv{G}$ is a matrix with i.i.d. sub-Gaussian entries and $O(q/\epsilon)$ columns. Alternatively, results on leverage score sampling and related active regression methods can achieve the same bound with $O(q/\epsilon)$ samples of entries from $\bv{A}$ \cite{ChenPrice:2019a}, which can be obtained in the weaker vector-matrix-vector model. 
Using similar techniques to those used in proving the lower bound in \Cref{thm:two_sided_lower_bound}, it is not hard to argue that the $\sqrt{q}$ dependence offered by \Cref{cor:linear} is unimprovable.

\subsection{Additional Related Work}
\label{sec:related_work}
While we believe our work is the first to address \Cref{prob:intro} for \emph{general matrix families}, as discussed, the problem of learning a matrix from matvec queries has been studied extensively for \emph{specific} structured families, $\F$. We summarize this work below, along with additional related work in slightly different models of computation. 

\smallskip \noindent \textbf{Low-Rank Approximation.}
One of the most familiar settings is when $\family$ is the class of rank-$k$ matrices for some parameter $k$. The query complexity of rank-$k$ approximation has been studied extensively in numerical analysis and theoretical computer science: work on the randomized SVD and related methods \cite{HalkoMartinssonTropp:2011,Woodruff:2014,MuscoMusco:2015} has culminated in a state-of-the-art upper bound of $O(k/\epsilon^{1/3})$ queries to achieve approximation factor $\gamma = 1+\epsilon$ for solving \Cref{prob:intro} \cite{BakshiClarksonWoodruff:2022,MeyerMuscoMusco:2024}. Our best known lower bound for rank-$k$ approximation is  $\Omega(k + 1/\epsilon^{1/3})$ \cite{BakshiNarayanan:2023}. 

\smallskip \noindent \textbf{Sparse Approximation.} There has also been interest in families with a fixed pattern of $s$ non-zeros per row, for a parameter $s$. Such families including diagonal ($s=1$), banded, and block diagonal matrices \cite{BekasKokiopoulouSaad:2007,TangSaad:2011,BastonNakatsukasa:2022,DharangutteMusco:2023}. Matching upper and lower bounds of $\Theta(s/\epsilon)$ queries are known for achieving approximation factor $\gamma = 1+\epsilon$ in this setting \cite{AmselChenHalikias:2024}. Extensive work has also considered sparse matrices with unknown sparsity patterns \cite{CurtisPowellReid:1974,ColemanCai:1986,ColemanMore:1983,WimalajeewaEldarVarshney:2013,SchaferOwhadi:2024,ParkNakatsukasa:2024}.  Sparse matrix approximation finds applications in a variety of settings. One example is in first-order optimization, where methods for learning diagonal and sparse approximations to a Hessian matrix (for which $\bv{x} \rightarrow \bv{A}\bv{x}$ can often computed efficiently using automatic differentiation \cite{Pearlmutter:1994}) are used to implement quasi-second-order methods or to construct preconditioners \cite{BordesBottouGallinari:2009,DauphinVriesBengio:2015,YaoGholamiKeutzer:2020}. 

\smallskip \noindent \textbf{Hierarchical and Block Structured Matrices.}
In computational science, more complex classes of structured matrices are also important. 
For example, among other applications \cite{AmbartsumyanBoukaramBui-Thanh:2020}, methods for approximation via \emph{hierarchical low-rank matrices} (HODLR, HSS, etc.) can be used to construct fast direct solvers for integral equations given access to an implicit solver, like a fast multipole method. 
Algorithms for learning such matrices from a small number of matvec queries, as well as related block low-rank approximations, have been studied heavily in recent years \cite{LinLuYing:2011,Martinsson:2016,LevittMartinsson:2024,LevittMartinsson:2024a,PearceYesypenkoLevitt:2025}. 
Such matrices are also studied in scientific machine learning (SciML), where rank-structured matrix approximation has been utilized to learn the solution operators of linear PDEs given input/output pairs \cite{BoulleTownsend:2023,BoulleHalikiasTownsend:2023,WangTownsend:2023,BoulleTownsend:2024,BoulleHalikiasOtto:2024}. Relatedly, there has been work on learning butterfly matrices from matvecs, which can be characterized by a so-called ``complementary low-rank'' property \cite{LiYangMartin:2015,LiYang:2017,LiuXingGuo:2021}.

Two recent papers established the first multiplicative error bounds in the agnostic setting of \Cref{prob:intro} for HODLR and HSS matrices, two important subclasses of hierarchical low-rank matrices \cite{ChenHalikiasKeles:2025,AmselChenHalikias:2025}.

\smallskip \noindent \textbf{Different Models.}
Beyond matvec query algorithms, there has been extensive work on learning structured matrices (typically, sparse or low-rank) from \emph{linear measurements} of the form $\tr(\bv{A}\bv{M})$ for a measurement matrix $\bv{M}$ \cite{WatersSankaranarayananBaraniuk:2011,CandesPlan:2011,KachamWoodruff:2023}. Most related to our work is the setting when $\bv{M}$ is restricted to be rank-$1$, i.e., $\bv{M} = \bv{y}\bv{x}^\T$ for vectors $\bv{y}$ and $\bv{x}$. In this case, which is referred to as rank-1 matrix sensing \cite{ZhongJainDhillon:2015}, $\tr(\bv{A}\bv{M}) = \bv{x}^\T\bv{A}\bv{y}$, so the problem is equivalent to learning from vector-matrix-vector queries, which have been studied more broadly as a model for linear algebraic computation \cite{RashtchianWoodruffZhu:2020,WimmerWuZhang:2014}. 
A model that sits in between vector-matrix-vector and matrix-vector queries is the \emph{one-sided} matvec model, where we can only access $\bv{A}$ via multiplication on the right (i.e., queries of the form $\bv{x}\rightarrow \bv{Ax}$, but not $\bv{x}\rightarrow \bv{A}^\transpose\bv{x}$). This model arises in some settings, including some operator learning problems in SciML, where it may not be possible to efficiently compute $\bv{A}^\transpose \bv{x}$ \cite{BoulleHalikiasOtto:2024}.

 tronger than the vector-matrix-vector model, we argue in \Cref{sec:weakerModels} that $\Omega(\log |\F|)$ one-sided matvec queries are still needed to learn a generic finite family $\mathcal{F}$ (i.e., quadratically worse than what is possible via \Cref{thm:robust} with two-sided queries). This gap is intuitive: let $\F$ be a finite net over the set of matrices whose first row is an arbitrary unit vector, and whose remaining rows are all zero. This family has size $2^{O(n)}$, but we require $n$ one-sided queries to learn a meaningful approximation -- we essentially need to read each entry in $\bv{A}$'s first row. If two-sided queries are allowed, we could multiply the first standard basis vector by $\bv{A}^\transpose$ to learn the first row with a single query. 

Finally, regardless of the specific query model studied, many of the works above focus on the \emph{non-agnostic} or \emph{exact recovery} setting, where $\bv A$ is assumed to be contained in our family $\mathcal{F}$, i.e. $\min_{\bv{B}\in \mathcal{F}}\|\bv{A} - \bv{B}\|_\Fro = 0$. The goal is to exactly identify $\bv A$ -- see e.g., \cite{HalikiasTownsend:2023}. For many matrix classes, the easier exact recovery problem has been tackled first, preceding results in the agnostic setting. In a preliminary version of this paper \cite{AmselAviChenKelesHegdeMuscoMuscoPersson:2025}, we prove exact recovery results for arbitrary finite families using an algorithm that is similar, but simpler, than the one used to prove \Cref{thm:robust}.

\subsection{Notation}
Bold lowercase letters denote vectors and bold uppercase letters denote matrices. 
    For $\bv a \in \R^n$, let $a_i$ denote the $i^\text{th}$ entry. Let $\norm{\bv a}_2$ denote its Euclidean norm. For $\bv B \in \R^{n \times m}$, let $B_{ij}$ denote the entry in the $i^\text{th}$ row and $j^\text{th}$ column. We let $\|\bv{B}\|_\Fro=(\sum_{i=1}^n \sum_{j=1}^m B_{ij}^2)^{1/2}$ denote the Frobenius norm and $\| \bv B \|_2 =  \max_{\bv x: \norm{\bv x}_2 = 1} \norm{\bv B \bv x}_2$ denote the spectral norm. 
    For a random event $\mathcal{A}$ we let $\ones[\mathcal{A}]$ denote the indicator random variable that is $1$ if $\mathcal{A}$ occurs and $0$ otherwise. We let $\mathcal{N}(0,1)$ denote the standard Gaussian distribution and $\mathcal{N}(\bv \mu, \bv C)$ denote the multivariate Gaussian distribution with mean $\bv \mu \in \R^{n}$ and covariance matrix $\bv C \in  \R^{n\times n}$.

\section{Finite Family Matrix Approximation}\label{sec:finite}
    In this section, we establish our main result: an $\tilde{O}(\sqrt{\log |\mathcal{F}|})$ query algorithm for computing a near-optimal approximation to $\bv{A}$ (i.e, for solving \Cref{prob:intro}) from any finite family $\mathcal{F}$. Our proof is based on an efficient ``simulation'' of a natural one-sided query algorithm that only interacts with $\bv{A}$ via right matrix-vector products of the form $\bv x \rightarrow \bv{A}\bv{x}$, and which requires ${O}({\log |\mathcal{F}|})$ queries. Indeed, this is the best possible for a one-sided algorithm (see \Cref{sec:weakerModels}).

     We begin by first introducing and analyzing this one-sided algorithm in \Cref{sec:one_sided}. We then show how to simulate the method with quadratically fewer two-sided queries in \Cref{sec:two_sided}. In both of these sections, we assume we have knowledge of an upper bound, $M$, on the {cost} of the optimum solution, $\OPT = \min_{\bv{B}\in \mathcal{F}} \|\bv{A} - \bv{B}\|_\Fro$. Our error bounds will initially have a dependence on $M$, which we will later remove in \Cref{sec:est}, by showing how to efficiently identify a value of $M$ with $\OPT \leq M \leq (1+\epsilon)\OPT$ for any $\epsilon > 0$. 

    \subsection{One-Sided Algorithms}
    \label{sec:one_sided}

    The one-sided method that we simulate differs a bit from the most obvious approach, which we describe first.
    \smallskip \noindent\textbf{The standard approach.}
    An optimal one-sided algorithm for solving \Cref{prob:intro} for finite matrix families can be obtained immediately from prior work on Frobenius norm sketching, or equivalently, stochastic trace estimation (i.e., the Hutchinson-Girard trace estimator) \cite{Hutchinson:1990,Girard:1987,ClarksonWoodruff:2009,AvronToledo:2011,rudelson2013hanson,Roosta-KhorasaniAscher:2015,CortinovisKressner:2022}. In particular, a standard result, following, e.g., from the Hanson-Wright inequality \cite{rudelson2013hanson} is the following:
    \begin{fact}[Frobenius-Norm Sketching]\label{fact:hutch_standard} For any $\epsilon,\delta \in (0,1)$ and  $\ell \geq c\log(1/\delta)/\epsilon^2$  for a fixed constant $c$, let $\bv{\Pi} \in \{-\frac{1}{\sqrt{\ell}},\frac{1}{\sqrt{\ell}}\}^{n\times \ell}$ have i.i.d. scaled Rademacher entries. Then for any matrix $\bv{C}$ with $n$ columns, with probability $> 1-\delta$, 
    \begin{align}
        \label{eq:hutch_standard}
    (1-\epsilon)\|\bv{C}\|_\Fro \leq \|\bv{C}\bv{\Pi}\|_\Fro \leq (1+\epsilon)\|\bv{C}\|_\Fro.
    \end{align}
    \end{fact}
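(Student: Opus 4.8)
I would prove \Cref{fact:hutch_standard} by writing $\|\bv C\bv\Pi\|_\Fro^2$ as a single quadratic form in the i.i.d.\ Rademacher entries of $\bv\Pi$ and applying the Hanson--Wright inequality \cite{rudelson2013hanson}; the crucial point is to analyze this quadratic form \emph{all at once}, rather than one row of $\bv C$ at a time, so that no dependence on the number of rows of $\bv C$ enters. Concretely, write $\bv C \in \R^{d\times n}$ for an arbitrary number of rows $d$, let $\bv\pi_1,\dots,\bv\pi_\ell$ be the columns of $\bv\Pi$, and stack them into $\bv x = (\bv\pi_1^\T,\dots,\bv\pi_\ell^\T)^\T \in \R^{n\ell}$, a vector with i.i.d.\ entries uniform on $\{-1/\sqrt\ell,\,1/\sqrt\ell\}$. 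Then $\|\bv C\bv\Pi\|_\Fro^2 = \sum_{j=1}^\ell \bv\pi_j^\T \bv C^\T\bv C \bv\pi_j = \bv x^\T \bv M \bv x$, where $\bv M = \bv I_\ell \otimes \bv C^\T\bv C$ is block diagonal with $\ell$ copies of $\bv C^\T\bv C$ along its diagonal. Since the entries of $\bv x$ are mean-zero with variance $1/\ell$, we get $\E[\bv x^\T\bv M\bv x] = \tfrac1\ell\tr(\bv M) = \tr(\bv C^\T\bv C) = \|\bv C\|_\Fro^2$, so the sketch is unbiased in the squared norm.

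For the concentration step, I would rescale $\bv x = \tfrac1{\sqrt\ell}\bv z$ with $\bv z \in \{-1,1\}^{n\ell}$ having $O(1)$ sub-Gaussian coordinates, so that $\|\bv C\bv\Pi\|_\Fro^2 = \tfrac1\ell\,\bv z^\T\bv M\bv z$, and apply Hanson--Wright to the event $|\bv z^\T\bv M\bv z - \tr\bv M| > \epsilon\ell\|\bv C\|_\Fro^2$. The two structural estimates needed are $\|\bv M\|_2 = \|\bv C\|_2^2 \le \|\bv C\|_\Fro^2$ and $\|\bv M\|_\Fro^2 = \ell\,\|\bv C^\T\bv C\|_\Fro^2 = \ell\sum_i \sigma_i(\bv C)^4 \le \ell\,\|\bv C\|_\Fro^4$. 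With deviation parameter $u = \epsilon\ell\|\bv C\|_\Fro^2$, the ``Frobenius'' term of the Hanson--Wright exponent is $u^2/\|\bv M\|_\Fro^2 \ge \epsilon^2\ell$ and the ``operator'' term is $u/\|\bv M\|_2 \ge \epsilon\ell$, so for $\epsilon\in(0,1)$ the exponent is at least $c_0\epsilon^2\ell$ for an absolute constant $c_0$, and the bad event has probability at most $2\exp(-c_0\epsilon^2\ell)$. Taking $\ell \ge c\log(1/\delta)/\epsilon^2$ with $c = \Theta(1/c_0)$ makes this at most $\delta$. On the complementary event $\|\bv C\bv\Pi\|_\Fro^2 \in \big[(1-\epsilon)\|\bv C\|_\Fro^2,\,(1+\epsilon)\|\bv C\|_\Fro^2\big]$; taking square roots and using $\sqrt{1-\epsilon}\ge 1-\epsilon$ and $\sqrt{1+\epsilon}\le 1+\epsilon$ for $\epsilon\in[0,1]$ gives the two-sided bound in \eqref{eq:hutch_standard}. (If a self-contained argument is preferred, Hanson--Wright here can be replaced by a direct bound on the moment generating function of the quadratic form $\bv z^\T\bv M\bv z$, using the block-diagonal structure of $\bv M$ to decouple the $\ell$ columns.)

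The only genuinely delicate point — and the reason the statement has no $\log d$ factor despite allowing arbitrarily many rows — is that one must avoid the naive route of writing $\|\bv C\bv\Pi\|_\Fro^2$ as a sum over rows of $\bv C$ and union-bounding per-row Johnson--Lindenstrauss guarantees, which would force $\ell \gtrsim \log(d/\delta)/\epsilon^2$. Keeping $\bv C^\T\bv C$ intact and using the inequality $\|\bv C^\T\bv C\|_\Fro \le \|\bv C\|_\Fro^2$ is exactly what exploits cancellation across rows and kills the $d$-dependence; the block-diagonal structure of $\bv M$ — which makes the deviation scale like $\ell$ while $\|\bv M\|_\Fro$ scales only like $\sqrt\ell$ — is what yields the clean $\exp(-\Theta(\epsilon^2\ell))$ tail and hence the stated $\ell = \Theta(\log(1/\delta)/\epsilon^2)$. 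Everything else is bookkeeping.
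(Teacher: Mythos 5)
Your proposal is correct and follows exactly the route the paper indicates: the paper does not prove \Cref{fact:hutch_standard} explicitly but cites the Hanson--Wright inequality \cite{rudelson2013hanson} as the source, and your argument is the standard way to instantiate that bound — write $\|\bv C\bv\Pi\|_\Fro^2$ as the quadratic form $\bv z^\T(\bv I_\ell\otimes\bv C^\T\bv C)\bv z/\ell$, use $\|\bv C^\T\bv C\|_\Fro\le\|\bv C\|_\Fro^2$ and $\|\bv C^\T\bv C\|_2\le\|\bv C\|_\Fro^2$ to bound the two Hanson--Wright parameters, and take square roots at the end. All the inequalities you invoke (including $\sqrt{1\pm\epsilon}$ bounds) are valid on $\epsilon\in(0,1)$, so the proof is complete as written.
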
 
    To obtain a one-sided solution to \Cref{prob:intro}, we simply use \Cref{fact:hutch_standard} to estimate the error of \emph{every} possible $\bv{B} \in \F$ by setting $\bv{C} = \bv{A} - \bv{B}$. In particular, we can return:
    \begin{align}
        \label{eq:hutch_est}
    \bv{\tilde B} \in \argmin_{\bv{B}\in \F}\|\bv{A}\bv{\Pi}- \bv{B}\bv{\Pi}\|_\Fro. 
    \end{align}
    Computing \eqref{eq:hutch_est} only requires $\ell$ right matrix-vector products of the form $\bv x \rightarrow \bv{Ax}$. Moreover, if we set $\delta' =\delta/|\mathcal{F}|$, and use $\ell = O(\log(1/\delta')/\epsilon^2) = O(\log(|\F|/\delta)/\epsilon^2)$ matvecs, by a union bound, we have that, with probability at least $1-\delta$, $\|\bv{A}\bv{\Pi}- \bv{B}\bv{\Pi}\|_\Fro$ is a $(1\pm\epsilon/3)$-approximation to $\|\bv{A} - \bv{B}\|_\Fro$ \emph{for all} $\bv{B} \in \F$. It follows that $\|\bv{A} - \bv{\tilde B}\|_\Fro \leq (1+\epsilon)\OPT$ for a constant $c$ with high probability. 
    We conclude:
    \begin{claim}
    \label{clm:one_sided_hutch} 
    For any finite matrix family $\mathcal{F}$ and $\epsilon,\delta \in (0,1)$, there is an algorithm that solves \Cref{prob:intro} with approximation factor $\gamma = 1+\epsilon$ and success probability $> 1-\delta$ using $O(\log (|\F|/\delta)/\epsilon^2)$ one-sided matvec queries of the form $\bv x \rightarrow \bv{A}\bv x$. 
    \end{claim}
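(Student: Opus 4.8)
The plan is to formalize the sketching argument sketched above by combining \Cref{fact:hutch_standard} with a union bound over $\F$. Since $\F$ is finite, $\OPT = \min_{\bv B \in \F}\|\bv A - \bv B\|_\Fro$ is attained; fix $\bv B^\star \in \F$ achieving it. First I would draw a single sketching matrix $\bv \Pi \in \{-1/\sqrt{\ell},1/\sqrt{\ell}\}^{n\times \ell}$ of i.i.d. scaled Rademachers with $\ell = c\log(|\F|/\delta)/(\epsilon/3)^2 = O(\log(|\F|/\delta)/\epsilon^2)$, and form $\bv A \bv \Pi$ using exactly $\ell$ one-sided matvec queries $\bv x \to \bv A\bv x$. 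This is the only place queries are spent: for a known candidate $\bv B \in \F$, the quantity $\bv A\bv\Pi - \bv B\bv\Pi$ requires no further access to $\bv A$, since $\bv B\bv\Pi$ is computed directly. In particular, a single shared sketch serves all $|\F|$ comparisons, so the query cost is $\ell$ rather than $\ell\cdot|\F|$.

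Next I would apply \Cref{fact:hutch_standard} with error parameter $\epsilon/3$ and failure probability $\delta' = \delta/|\F|$ to $\bv C = \bv A - \bv B$, for each $\bv B \in \F$. Since $\bv C \bv \Pi = \bv A \bv \Pi - \bv B \bv \Pi$, each application yields, with probability greater than $1-\delta'$,
\[
(1-\tfrac{\epsilon}{3})\|\bv A - \bv B\|_\Fro \;\le\; \|\bv A \bv \Pi - \bv B \bv \Pi\|_\Fro \;\le\; (1+\tfrac{\epsilon}{3})\|\bv A - \bv B\|_\Fro .
\]
A union bound over the $|\F|$ matrices shows that this two-sided estimate holds simultaneously for every $\bv B \in \F$ with probability greater than $1 - |\F|\cdot\delta' = 1-\delta$; call this event $\mathcal E$.

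Finally, on $\mathcal E$, I would output $\bv{\tilde B} \in \argmin_{\bv B \in \F}\|\bv A \bv \Pi - \bv B \bv \Pi\|_\Fro$ (computable from the sketch alone, no queries) and chain: the lower bound applied to $\bv{\tilde B}$, optimality of $\bv{\tilde B}$ for the sketched objective, then the upper bound applied to $\bv B^\star$,
\[
\|\bv A - \bv{\tilde B}\|_\Fro \le \frac{\|\bv A \bv \Pi - \bv{\tilde B}\bv\Pi\|_\Fro}{1-\epsilon/3} \le \frac{\|\bv A \bv \Pi - \bv{B}^\star\bv\Pi\|_\Fro}{1-\epsilon/3} \le \frac{1+\epsilon/3}{1-\epsilon/3}\,\|\bv A - \bv B^\star\|_\Fro = \frac{1+\epsilon/3}{1-\epsilon/3}\,\OPT .
\]
An elementary estimate gives $\tfrac{1+\epsilon/3}{1-\epsilon/3} = 1 + \tfrac{2\epsilon/3}{1-\epsilon/3} < 1+\epsilon$ for all $\epsilon\in(0,1)$, since $1-\epsilon/3 > 2/3$. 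This yields approximation factor $\gamma = 1+\epsilon$, success probability $> 1-\delta$, and query complexity $O(\log(|\F|/\delta)/\epsilon^2)$, as claimed. There is no real obstacle; the only points requiring care are that one shared sketch $\bv A\bv\Pi$ suffices for all comparisons, and that the per-matrix failure probability must be deflated to $\delta/|\F|$ before the union bound — which is precisely what introduces the $\log|\F|$ factor.
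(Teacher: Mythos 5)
Your proof is correct and follows essentially the same route as the paper's: draw a single shared Rademacher sketch $\bv\Pi$ with $\ell = O(\log(|\F|/\delta)/\epsilon^2)$ columns, apply \Cref{fact:hutch_standard} to each $\bv A-\bv B$ with failure probability $\delta/|\F|$, union bound, and return the minimizer of the sketched objective. The only difference is that you spell out the final $(1+\epsilon/3)/(1-\epsilon/3)<1+\epsilon$ step, which the paper leaves implicit.
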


    As discussed, it is possible to achieve the same bound as in \Cref{clm:one_sided_hutch} using just vector-matrix-vector queries of the form $\bv x, \bv y \rightarrow \bv x^\T \bv{A} \bv y$. 
    We prove this in \Cref{sec:weakerModels}, where we also show that the bound is essentially tight, for both one-sided matvec queries and vector-matrix-vector queries.

    We briefly note that one might hope to improve the dependence on $\epsilon$ in \Cref{clm:one_sided_hutch} using improvements to the Hutchinson-Girard estimator like Hutch++ and related techniques \cite{MeyerMuscoMusco:2021,PerssonCortinovisKressner:2022,EpperlyTroppWebber:2024,JiangPhamWoodruff:2024}. However, all of these methods require matrix-vector multiplications with $(\bv{A} - \bv{B})^\T(\bv{A} - \bv{B})$. It is unclear how to compute this in a query-efficient manner, since we would need to compute $\bv{A}^\T\bv{B}\bv{x}$ for all $\bv{B} \in \F$.

    \smallskip \noindent\textbf{An alternative iterative approach.} While simple, it is not clear how to simulate the above approach with a smaller number of two-sided queries. Instead, we consider a closely related alternative method based on \emph{iteratively refining} $\F$. At every step of the algorithm, we use a very small sketch $\bv{\Pi}$ to remove a large fraction of candidate approximations from $\mathcal{F}$ until, after roughly $O(\log|\F|)$ rounds, we converge on a near-optimal approximation.

    This approach is formalized as \Cref{alg:one_side_refinement}. The method performs $\log |\F|/\log \log |\F|$ iterations of refinement, using a sketch of size $O(\log \log |\F|)$ in each iteration, for a total query complexity of $O(\log|\F|)$, matching the standard approach of \Cref{clm:one_sided_hutch}.

    \begin{algorithm}[H]
        \caption{One-Sided Iterative Candidate Refinement}\label{alg:one_side_refinement}    
        \begin{algorithmic}[1]
            \Require Finite matrix family $\mathcal{F} \subset \mathbb{R}^{n \times n}$, failure probability $\delta \in (0,1)$, accuracy parameter $\epsilon \in (0,1)$, upper bound $M \geq \min_{\bv{B}\in \F} \|\bv{A} - \bv{B}\|_\Fro = \OPT$.
            \Ensure Matrix $\bv{\tilde B}\in \F$ satisfying $\|\bv{A} - \bv{\tilde B}\|_\Fro \leq (1 + \epsilon)\cdot M$ with probability at least $1-\delta$.
            \algrule
            \State Initialize candidate set $\mathcal{C}_0=\mathcal{F}$.
Set $\ell = \frac{c}{\epsilon^2}\log(\log|\F|/\delta)$ for sufficiently large constant $c$. Set $T = \frac{\log |\F|}{\log \log |\F|}$.\label{line:alg1Init}
            \For{$i = 1,\ldots, T$}    
                \State Draw $\bv{\Pi} \in \{-1/\sqrt{\ell},1/\sqrt{\ell}\}^{n\times \ell}$ with i.i.d. scaled Rademacher entries.
                \State Compute $\bv{Z} = \bv{A}\bv{\Pi}$. \Comment{Requires $\ell$ matvec queries.}
                \State Let $\mathcal{C}_i = \left \{\bv{B} \in \mathcal{C}_{i-1} : \|\bv{Z} - \bv{B}\bv{\Pi}\|_\Fro \leq (1+\epsilon/2)\cdot M \right \}$.\label{line:alg1Check} 
            \EndFor
            \State \Return Any remaining $\bv{B} \in \mathcal{C}_{T}$.\label{line:alg1Return} 
        \end{algorithmic}
    \end{algorithm}    

    Note that \Cref{alg:one_side_refinement}
    assumes knowledge of an upper bound $M$ on the cost of the optimal solution $\min_{\bv{B}\in \F} \|\bv{A} - \bv{B}\|_\Fro = \OPT$. Using a standard binary search method, we will show how to efficiently compute an accurate upper bound in \Cref{sec:est}, which leads to an algorithm with a relative error guarantee. As written, \Cref{alg:one_side_refinement} guarantees error at most $(1+\epsilon)M \geq (1+\epsilon)\OPT$.

    The proof of correctness is straightforward. We make two observations. First, consider any fixed $\bv{B}^* \in \argmin_{\bv{B}\in \F} \|\bv{A} - \bv{B}\|_\Fro$. Via the standard Frobenius norm sketching bound of \Cref{fact:hutch_standard}, as long as the constant $c$ is set large enough in \Cref{line:alg1Init}, at each iteration of the loop, we have that the estimate $\|\bv{Z} - \bv{B}^*\bv{\Pi}\|_\Fro = \|\bv{A}\bv{\Pi} -\bv{B}^*\bv{\Pi}\|_\Fro$ is a $(1\pm \epsilon/2)$ multiplicative approximation to $\|\bv{A} - \bv{B}^*\|_\Fro$ with probability at least $1 - \delta / \log |\F|$. By a union bound, this holds at every iteration with probability at least $1-\delta$. So,  with high probability, $\bv{B}^*$ survives the check at \Cref{line:alg1Check} in all iterations, and $\bv{B}^* \in \mathcal{C}_{T}$. In other words, the optimal solution remains as a candidate for output in \Cref{line:alg1Return}.
    
    Therefore, we obtain an accurate solution $\bv{\tilde B}$ with $\|\bv{A} -\bv{\tilde B}\|_\Fro \leq (1+\epsilon) \cdot M$ as long as we can ensure that $\mathcal{C}_{T}$ 
    contains no ``bad'' matrices $\bv{B}$ for which $\|\bv{A} - \bv{B}\|_\Fro > (1+\epsilon) \cdot M$. To argue that this is indeed the case, let $\mathcal{B}\subset \mathcal{F}$ denote this set of ``bad'' matrices with $\|\bv{A} - \bv{B}\|_\Fro > (1+\epsilon) \cdot M$. %Let 
    Fix a particular ${\bv{B}} \in \mathcal{B}$. Again invoking \Cref{fact:hutch_standard}, as long as the constant $c$ is chosen sufficiently large in \Cref{line:alg1Init}, at each iteration $i$, $\|\bv{Z} - {\bv{B}}\bv{\Pi}\|_\Fro > (1-\epsilon/6)\|\bv{A} - {\bv{B}}\|_\Fro > (1-\epsilon/6)\cdot (1+\epsilon) \cdot M> (1+\epsilon/2) \cdot M$  
    with probability at least $1-{\delta}/{\log|\F|}$. I.e., $\bv B$ is eliminated at \Cref{line:alg1Check} with probability at least $1-{\delta}/{\log|\F|}$ in each iteration. Therefore, after $T  = \frac{\log|\F|}{\log \log |\F|}$ iterations, the probability that ${\bv{B}}$ remains in the candidate set is at most $({\delta}/{\log|\F|} )^{\frac{\log|\F|}{\log \log |\F|}} < \frac{\delta}{|\mathcal{F}|}$. 
    Via a union bound, since $|\mathcal{B}| < |\mathcal{F}|$, no ${\bv{B}} \in \mathcal{B}$ remains in $\mathcal{C}_{T}$ with probability at least $1-\delta$, and thus we always return $\bv{\tilde B} \notin \mathcal{B}$.
    After noting that \Cref{alg:one_side_refinement} uses $\ell \cdot T = O \left (\frac{\log(\log|\F|/\delta)}{\epsilon^2} \cdot \frac{\log|\F|}{\log\log|\F|} \right ) = O \left (\frac{\log|\F| \cdot \log(1/\delta)}{\epsilon^2}\right)$ queries,
    we conclude:
    \begin{claim}
    \label{claim:one_sided_cutting_method} 
    For any finite matrix family $\mathcal{F}$, \Cref{alg:one_side_refinement} with inputs $\epsilon,\delta \in (0,1)$ and $M \ge \min_{\bv{B}\in \F} \|\bv{A} - \bv{B}\|_\Fro = \OPT$
     solves \Cref{prob:intro} with error $\gamma = (1+\epsilon) \cdot \frac{M}{\OPT}$ and success probability $> 1-\delta$ using ${O}(\log|\F| \cdot \log(1/\delta)/\epsilon^2)$ one-sided matvec queries of the form $\bv x \rightarrow \bv{A}\bv x$.
    \end{claim}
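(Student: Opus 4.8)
The plan is to verify the two claimed quantities separately: the query count, which is immediate, and the correctness guarantee, which I would derive from two applications of the Frobenius-norm sketching bound (\Cref{fact:hutch_standard}) together with a union bound whose arithmetic is engineered by the choice $T = \log|\F|/\log\log|\F|$. For the query complexity, note that \Cref{alg:one_side_refinement} issues exactly $\ell$ matvecs per round (to form $\bv{Z} = \bv{A}\bv{\Pi}$) over $T$ rounds, so the total is $\ell T = O\!\left(\frac{\log(\log|\F|/\delta)}{\epsilon^2}\cdot\frac{\log|\F|}{\log\log|\F|}\right)$; writing $\log(\log|\F|/\delta) = \log\log|\F| + \log(1/\delta)$ and simplifying shows this is $O(\log|\F|\cdot\log(1/\delta)/\epsilon^2)$.

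For correctness, I would fix a minimizer $\bv{B}^\star \in \argmin_{\bv{B}\in\F}\|\bv{A}-\bv{B}\|_\Fro$ together with the ``bad set'' $\mathcal{B} = \{\bv{B}\in\F : \|\bv{A}-\bv{B}\|_\Fro > (1+\epsilon)M\}$, and establish two facts. (i) \emph{The optimum survives.} Choosing the constant $c$ in \Cref{line:alg1Init} large enough, \Cref{fact:hutch_standard} applied with accuracy $\epsilon/2$ and per-round failure probability $\delta/(2\log|\F|)$ to the matrix $\bv{A}-\bv{B}^\star$ gives $\|\bv{Z}-\bv{B}^\star\bv{\Pi}\|_\Fro \leq (1+\epsilon/2)\|\bv{A}-\bv{B}^\star\|_\Fro \leq (1+\epsilon/2)M$, so $\bv{B}^\star$ passes the test on \Cref{line:alg1Check}; a union bound over the $T\le\log|\F|$ rounds keeps it in $\mathcal{C}_T$ with probability $\ge 1-\delta/2$. (ii) \emph{Every bad matrix is killed.} Fix $\bv{B}\in\mathcal{B}$; by \Cref{fact:hutch_standard} (now with accuracy, say, $\epsilon/6$ and per-round failure probability $\delta/\log|\F|$), in each round $\|\bv{Z}-\bv{B}\bv{\Pi}\|_\Fro \geq (1-\epsilon/6)\|\bv{A}-\bv{B}\|_\Fro > (1-\epsilon/6)(1+\epsilon)M > (1+\epsilon/2)M$ using $\epsilon<1$, so $\bv{B}$ is removed on \Cref{line:alg1Check} that round. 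Since $\bv{B}$ remains in $\mathcal{C}_T$ only if it survives every one of the $T$ rounds, and the sketches across rounds are independent, $\PP[\bv{B}\in\mathcal{C}_T] \leq (\delta/\log|\F|)^T$. Combining (i) and (ii) via a union bound over $\mathcal{B}$ (which has size $< |\F|$ since $\bv{B}^\star\notin\mathcal{B}$) shows that, except with probability $\delta$, $\mathcal{C}_T$ is nonempty and avoids $\mathcal{B}$, so the returned $\bv{\tilde B}$ satisfies $\|\bv{A}-\bv{\tilde B}\|_\Fro \leq (1+\epsilon)M = (1+\epsilon)\tfrac{M}{\OPT}\cdot\OPT$, i.e.\ $\gamma = (1+\epsilon)M/\OPT$.

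The step I expect to require the most care is this final probability bookkeeping. For the union bound over $\mathcal{B}$ to close I need $(\delta/\log|\F|)^T \le \delta/|\F|$, and the point is that $T = \log|\F|/\log\log|\F|$ is chosen precisely so that $(\log|\F|)^T = |\F|$, giving $(\delta/\log|\F|)^T = \delta^T/|\F| \le \delta/|\F|$ since $\delta<1$ and $T\ge 1$ — while the complementary constraint that $\ell T$ stay $O(\log|\F|\cdot\log(1/\delta)/\epsilon^2)$ is exactly what dictates the sketch size $\ell$. Everything else is a routine invocation of \Cref{fact:hutch_standard}, the elementary inequality $(1-\epsilon/6)(1+\epsilon) > 1+\epsilon/2$ for $\epsilon\in(0,1)$, and some care in allocating the overall failure budget $\delta$ between (i) and (ii) and in absorbing the accuracy rescalings ($\epsilon/2$, $\epsilon/6$) into the constant $c$.
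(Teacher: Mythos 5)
Your proof is correct and follows essentially the same argument as the paper: fix the minimizer and the bad set, apply \Cref{fact:hutch_standard} per round to show the minimizer survives and each bad matrix is eliminated with high per-round probability, then exploit $T=\log|\F|/\log\log|\F|$ so that $(\log|\F|)^{-T}=1/|\F|$ closes the union bound over $\mathcal B$. The only (cosmetic) difference is a slight overcounting of the failure budget to roughly $3\delta/2$, which — as in the paper's own proof — is absorbed by rescaling $\delta$ into the constant $c$.
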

As mentioned, in \Cref{sec:est} we will show how to compute $M$ with $M/\OPT \le 1+\epsilon$, giving a $1+\epsilon$ relative error guarantee for \Cref{alg:one_side_refinement} after plugging into \Cref{claim:one_sided_cutting_method}.

    \subsection{Near Quadratic Improvement via Two-Sided Simulation}
    \label{sec:two_sided}
    In this section, we show how to use matvecs with both $\bv{A}$ and $\bv{A}^\T$ to solve \Cref{prob:intro} with just $\tilde{O}(\sqrt{\log|\mathcal{F}|})$ queries. Our approach is based on an efficient ``simulation'' of \Cref{alg:one_side_refinement} from the previous section.
    We first sketch the main ideas behind this simulation in \Cref{sec:sketch}.
    We then formally present the approach as \Cref{alg:two_sided_simulation} and analyze it in \Cref{sec:formal}, yielding \Cref{thm:main_opt_known}. Combining \Cref{thm:main_opt_known} with an efficient method for estimating an accurate upper bound $M \geq \OPT$, given in \Cref{sec:est},  yields our main result, \Cref{thm:robust}.

    \subsubsection{Proof Overview}\label{sec:sketch}
For simplicity of asymptotic notation,  we assume throughout this proof sketch that both our error parameter $\epsilon$ and the failure probability $\delta$ are fixed constants. We also suppress all $\log \log |\F|$ factors with $\tilde O(\cdot)$ notation, and write $\log |\F|/\log\log|\F| = O(\log |\F|)$.

    Recall that the goal of  \Cref{alg:one_side_refinement} is to continually shrink the candidate set $\mathcal{C}$ until it no longer contains any ``bad'' matrices that are not good approximations to $\bv{A}$. In the proof of \Cref{claim:one_sided_cutting_method}, we argue that for every block of $\ell = {O}(\log\log |\F|)$ right queries, we reduce the set of bad candidates by an $O(1/\log|\F|)$ factor, eliminating all bad candidates after $\log |\F|/\log \log |\F|$ iterations. This leads to an overall query complexity of  ${O}(\log |\F|)$. 
    
    To obtain an algorithm that uses only $\tilde{O}(\sqrt{\log|\mathcal{F}|})$ queries, we need to cut down the set of candidates far more aggressively, ideally by an $O(1/2^{\sqrt{\log|\F|}})$ factor with each block of queries.
    To do so, our simulation 
   leverages the fact that each iteration of \Cref{alg:one_side_refinement} falls under one of two cases, depending on the random sketching matrix $\bv \Pi$:

    \medskip
    \noindent\textbf{Case 1.}
    First, if we get lucky, $\bv \Pi$ might be a highly ``productive'' sketch. In particular, 
    recalling that are targeting constant error in this proof sketch, we could have that, for a constant $c$,
    $\|\bv{A}\bv{\Pi} - \bv{B}\bv{\Pi}\|_\Fro > c M$ for all but an $O(1/2^{\sqrt{\log|\F|}})$ fraction of $\bv{B}\in \mathcal{C}_i$. In this case, we can simply issue the right queries $\bv{\Pi}$ and remove a large fraction of remaining candidates.

    \medskip
    \noindent\textbf{Case 2.}
    Alternatively, if we are not so lucky in our choice of $\bv{\Pi}$, there could be \emph{many  candidates} $\bv{B}$ for which $\|\bv{A}\bv{\Pi}-\bv{B}\bv{\Pi}\|_\Fro \le c M$ (at least an $\Omega(1/2^{\sqrt{\log|\F|}})$ fraction).
    However, we can take advantage of this situation to filter the candidate set $\mathcal C_i$ \emph{with no right queries at all}.
    In particular, suppose hypothetically that we knew that a certain matrix $\bv{Y}$ had  $\|\bv Y-\bv{A}\bv{\Pi}\|_\Fro \le c  M$, we could use it as a proxy, discarding candidates $\bv{B}$ for which $\|\bv{Y} - \bv{B}\bv{\Pi}\|_\Fro \ge 2c M$, without needing to compute $\bv A \bv \Pi$ and thus make any right queries to $\bv A$.
    By the triangle inequality, the discarded matrices would all have $\|\bv{A \Pi} - \bv{B}\bv{\Pi}\|_\Fro \ge c M$, and we would still successfully eliminate a large fraction of bad candidates with this proxy rule.

    Our key insight is that, if we find ourselves in Case 2, it is easy to construct such a $\bv{Y}$, since there are many candidates $\bv{B}$ for which $\bv{Y} = \bv{B}\bv{\Pi}$ is close to $\bv{A}\bv{\Pi}$.
    In particular, we just need to find a suitable ``representative'', $\bv{R}\in \mathcal{C}_i$ such that $\|\bv{A}\bv{\Pi} - \bv{R}\bv{\Pi}\|_\Fro \le c  M$. 
    To do so, we uniformly sample roughly $q = O(2^{\sqrt{\log|\F|}})$ candidates $\bv{B}_1, \ldots, \bv{B}_q$ from $\mathcal C_i$. By the assumption of Case 2, at least an $\Omega (1/q)$ fraction of the candidates in $\mathcal C_i$ have $\|\bv{A}\bv{\Pi} - \bv{B}\bv{\Pi}\|_\Fro \leq c M$. So, with high probability, we will have at least one such candidate in our sample. % at least one of our sampled candidates will satisfy $\|\bv{A}\bv{\Pi} - \bv{B}_i\bv{\Pi}\|_\Fro \leq (1+\epsilon)\cdot M$.
    We can use Frobenius norm sketching (\Cref{fact:hutch_standard}) with \emph{left} queries to identify this representative candidate. That is, we sample a left sketching matrix $\bv{\Psi}$ with $\tilde O(\log q) = \tilde{O}(\sqrt{\log |\F|})$ columns and check the value of $\|\bv{\Psi}^\T\bv{A}\bv{\Pi} - \bv{\Psi}^\T\bv{B}_j\bv{\Pi}\|_\Fro$ for each $j = 1, \ldots, q$. By a union bound over our $q$ samples, this estimate will be a constant factor approximation to $\|\bv{A}\bv{\Pi} - \bv{B}_j\bv{\Pi}\|_\Fro$ with high probability for all $\bv B_j$, and thus enough to identify our representative $\bv R \in \{\bv B_1,\ldots,\bv B_q\}$.
          
    Importantly, the above process requires \emph{no right queries to $\bv A$}, since we can compute $\bv{\Psi}^\T \bv A$ using left queries, and then just multiply this matrix by $\bv \Pi$ to compute  $\|\bv{\Psi}^\T\bv{A}\bv{\Pi} - \bv{\Psi}^\T\bv{B}_j\bv{\Pi}\|_\Fro$.
    With our representative $\bv{R}$ in hand, we can filter out candidates where $\|\bv{R\Pi}-\bv{B\Pi}\|_\Fro$ is large, eliminating a large fraction of bad candidates using this proxy rule (i.e., all but a $O(1/\log|\F|)$ fraction as in \Cref{alg:one_side_refinement}) with high probability.

    \paragraph{Putting the cases  together.}
    The above case analysis suggests the following algorithm:
    We first precompute the left sketch $\bv{\Psi}^\T \bv{A}$ using a total of $\tilde O(\sqrt{\log |\mathcal F|})$ left queries. This sketch will be used to search for a representative $\bv R$ in each of the $O(\log|\F|)$ iterations of \Cref{alg:one_side_refinement}.
    At each iteration, we draw a $\bv{\Pi}$ and determine if it falls under Case 1 or Case 2. This determination is made by searching for a representative $\bv{R}$. If none is found, we conclude we are in Case 1.
    In Case 1, we issue $\ell = \tilde O(1)$ right queries and eliminate all but an $O(1/2^{\sqrt{\log |\mathcal F|}})$ fraction of the remaining candidates. We can encounter this case at most $O(\sqrt{\log |\F|})$ times before removing all candidates, and thus will issue at most $\tilde O(\sqrt{\log |\F|})$ right queries in total.
    In Case 2, we issue no new queries but still eliminate all but a $O(1/\log|\F|)$ fraction of bad candidates, meaning that we still terminate after at most $O(\log |\F|)$ iterations. 
    Regardless, we issue at most $\tilde O(\sqrt{\log |\mathcal F|})$ right and left queries in total, and identify a constant factor solution to \Cref{prob:intro} with high probability.

\paragraph{Independence Across Iterations.} The description above captures the key idea behind our $\tilde O(\sqrt{\log |\F|})$ query algorithm. However, it ignores a key issue:
to obtain the claimed complexity, we need to \emph{reuse} the same  left sketching matrix $\bv{\Psi}$ across all iterations of the algorithm. 
However, for iteration $i > 0$, the candidate set $\mathcal{C}_i$ could depend on $\bv{\Psi}$, since it may depend on the choice of representative $\bv{R}$ at an earlier iteration, which is identified using the left sketch $\bv{\Psi}^\T \bv A$. Hence, the set of randomly sampled potential representatives, $\bv{B}_1, \ldots, \bv{B}_q$ may depend on $\bv{\Psi}$ too.
Due to this loss of independence, we cannot directly argue that $\|\bv{\Psi}^\T\bv{A}\bv{\Pi} - \bv{\Psi}^\T\bv{B}_j\bv{\Pi}\|_\Fro$ is a good approximation to $\|\bv{A}\bv{\Pi} - \bv{B}_j\bv{\Pi}\|_\Fro$ for each $\bv B_j\in \{\bv B_1, \ldots, \bv B_q\}$ as required to identify a good representative $\bv R$.

To avoid this issue, we slightly modify the procedure. Instead of drawing a single right sketch for each iteration, we draw $r = O(\log|\F|)$ sketches for each iteration, $\bv{\Pi}^{1}, \ldots, \bv{\Pi}^{r}$.
We then attempt to find representatives $\bv R^{1}, \ldots \bv R^{r}$ (via sampling as described above) for each of these sketches.
If we fail to find a representative for a particular $\bv{\Pi}^{j}$, it means that $\bv{\Pi}^{j}$ lands in Case 1, and we can issue these queries to filter the candidate set down to just an $O(1/2^{\sqrt{\log |\F|}})$ fraction of its size.
Otherwise, if we succeed in finding a representative for \emph{every} sketch $\bv{\Pi}^1,\ldots,\bv{\Pi}^{r}$, we can use these representatives to filter the candidate set for free according to \emph{all of these sketches at once}. 
For $r = O(\log|\F|)$, this process filters out all the remaining bad candidates with high probability in a single shot, and the algorithm terminates.
The benefit of this approach is that in Case 1, the set of candidates that survives the filtering depends on $\bv{\Psi}$ only through the choice of the sketching matrix $\bv{\Pi}^j$ for which we could not find a representative. Since there are just $r = O(\log|\F|)$ sketching matrices to choose from at each iteration, we can control the number of candidate sets that we might possibly want to sketch with $\bv{\Psi}$ and use a union bound to guarantee that the sketch is accurate for all possible such candidate sets. In Case 2, while we use representatives $\bv R^1,\ldots,\bv R^r$ that depend on $\bv \Psi$, we terminate in a single shot, and thus there are no future iterations where dependency issues could arise.

\subsubsection{Formal Analysis}\label{sec:formal}

We next formalize the approach described in \Cref{sec:sketch} as \Cref{alg:two_sided_simulation}, and give a full analysis in \Cref{thm:main_opt_known}. 

\begin{algorithm}[h!]
        \caption{Two-Sided Iterative Candidate Refinement}\label{alg:two_sided_simulation}    
        \begin{algorithmic}[1]
            \Require Finite matrix family $\mathcal{F} \subset \mathbb{R}^{n \times n}$, failure probability $\delta \in (0,1)$, accuracy parameter $\epsilon \in (0,1)$, upper bound $M \geq \min_{\bv{B}\in \F} \|\bv{A} - \bv{B}\|_\Fro = \OPT$.
            \Ensure Matrix $\bv{\tilde B}\in \F$ satisfying $\|\bv{A} - \bv{\tilde B}\|_\Fro \leq (3 + \epsilon)\cdot M$ with probability at least $1-\delta$.
            \algrule
            \State Let $c$ be a sufficiently large universal constant and set:
            \begin{align*}
            &m = \frac{c}{\epsilon^2}\sqrt{\log |\F|}\cdot \log (\log |\F|/\delta) &&{\color{gray} \triangleright\text{ Left queries}} \\
            &\ell = \frac{c}{\epsilon^2}\log(\log|\F|/\delta) &&{\color{gray}\triangleright\text{ Queries per right sketch}} \\
            &r = \log |\F|/\log\log|\F| &&{\color{gray}\triangleright\text{ Right sketches per iteration}} \\
            &q = 2^{1.5 \cdot \sqrt{\log|\F|}\cdot \log(\log|\F|/\delta)} &&{\color{gray}\triangleright\text{ Candidates sampled per iteration}}
            \end{align*}
            \State Draw the following sketching matrices with i.i.d Rademacher entries:
            \begin{align*}
           \bv{\Psi} &\in \{-1/\sqrt m,1/\sqrt m\}^{n\times m} &&{\color{gray}\triangleright\text{ Left sketching}}\\
            % \text{for all} \quad \substack{i = 0, \ldots, \sqrt{\log |\F|} \\ j = 1, \ldots, r \phantom{\sqrt{\log |\F}}}, \quad 
            \bv{\Pi}^{i,j} &\in \{-1/\sqrt \ell,1/\sqrt \ell\}^{n\times \ell} 
            \quad \text{for each} (i,j) \in \{0, \ldots, \sqrt{\log |\F|}\} \times \{1, \ldots, r\}
            &&{\color{gray}\triangleright\text{ Right sketching}}
            \end{align*}
            \State Compute $\bv{W} = \bv{\Psi}^\T\bv{A}$. 
            \Comment{Requires $m$ matvecs with $\bv{A}^\T$}
            \State Initialize candidate set $\mathcal{C}_0=\mathcal{F}$.
            \For{$i = 0,\ldots, \sqrt{\log |\F|}$}
                \State Sample set of possible representatives $\mathcal{R}_{i}$ uniformly without replacement from $\mathcal{C}_{i}$, with $|\mathcal{R}_{i}| = \min\left(q, |\mathcal{C}_i|\right)$.\label{line:sample}
                % If $q < |\mathcal{C}_{i}|$, then simply set $\mathcal{R}_{i} = \mathcal{C}_{i}$.
                \State For each $j\in \{1, \ldots, r\}$, let \begin{align*}\bv{R}^{i,j} &= \argmin_{\bv{B}\in \mathcal{R}_{i}} \left\|\bv{W}\bv{\Pi}^{i,j} - \bv{\Psi}^\T\bv{B}\bv{\Pi}^{i,j}\right\|_\Fro &&{\color{gray}\triangleright\text{ Best representative for sketch $\bv{\Pi}^{i,j}$}}\\
                E_{i,j} &= \left\|\bv{W}\bv{\Pi}^{i,j} - \bv{\Psi}^\T\bv{R}^{i,j}\bv{\Pi}^{i,j}\right\|_\Fro &&{\color{gray}\triangleright\text{ Error of best representative}}
                \end{align*}\label{line:minCheck}
                \If{$E_{i,j} \le (1+\epsilon/6)\cdot M$ for all $j \in \{1,\ldots,r\}$}\label{line:conditional} \Comment{Good representative found for all sketches.}
                       \State \label{line:return} \Return $\displaystyle\bv{\tilde B} \in \argmin_{\bv{B} \in \mathcal{C}_{i}}\max_{j\in \{1, \ldots, r\}} \|\bv{R}^{i,j}\bv{\Pi}^{i,j} - \bv{B}\bv{\Pi}^{i,j}\|_\Fro.$ \Comment{Filter down to best candidate in a single shot}
                 \Else
                    \State \label{step:special j}Select any $j^*_i$ for which $E_{i,j^*_i} > (1+\epsilon/6)\cdot M$. \Comment{$\bv{\Pi}^{i,j^*_i}$ is a highly productive sketch. Use it to filter $\mathcal C_i$.}
                    \State Compute $\bv{Z}_i = \bv{A}\bv{\Pi}^{i,j^*_i}$. \Comment{Requires $\ell$ matvecs with $\bv{A}$.}
                    \State Update the candidate set: \label{line:filter}\begin{equation}\label{eq:candidate_update}\mathcal{C}_{i+1} = \Big\{\bv B \in \mathcal{C}_i \, : \, \|\bv{Z}_i - \bv{B}\bv{\Pi}^{i,j^*_i}\|_\Fro \leq (1+\epsilon/12)\cdot M \Big\}.\end{equation}
                \EndIf
            \EndFor
            \State \Return \texttt{Failed}\label{line:fail} \Comment{With high probability, this is never reached.}
        \end{algorithmic}
    \end{algorithm}

    \begin{theorem}
    \label{thm:main_opt_known}
    For any finite family of matrices $\mathcal{F}$ and target matrix $\bv{A}$, \Cref{alg:two_sided_simulation} run with inputs $\epsilon, \delta \in (0,1)$ and $M \geq \min_{\bv{B}\in \mathcal{F}} \|\bv{A} - \bv{B}\|_\Fro = \OPT$, returns $\bv{\tilde B}\in \F$ satisfying, with probability at least $1-\delta$,
    \begin{align*}
    \|\bv{A} - \bv{\tilde B}\|_\Fro \leq (3 + \epsilon) \cdot M.
    \end{align*}
    Further, the algorithm uses $O(\sqrt{\log|\F|} \cdot \log(\log|\F|/\delta)/\epsilon^2) = \tilde{O}(\sqrt{\log|\F|}/\epsilon^2)$ matvec queries with $\bv{A}$ and $\bv{A}^\T$.
\end{theorem}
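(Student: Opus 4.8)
The plan is to isolate a short list of ``good events'' concerning the random matrices $\bv\Psi$, $\{\bv\Pi^{i,j}\}$ and the random samples $\mathcal R_i$, and show that conditioned on all of them, \Cref{alg:two_sided_simulation} is forced to return at \Cref{line:return} --- never reaching \Cref{line:fail} --- with an output that is not \emph{bad}, where I call $\bv B\in\F$ bad if $\|\bv A-\bv B\|_\Fro>(3+\epsilon)M$. Since returning a non-bad matrix is exactly the claimed accuracy, and since the good events will hold with probability at least $1-\delta$, this proves the theorem. The query bound holds unconditionally: forming $\bv W=\bv\Psi^\T\bv A$ costs $m$ matvecs with $\bv A^\T$, and the only other queries are the $\ell$ matvecs with $\bv A$ used to build $\bv Z_i$, which appear only inside the \textbf{else} branch and hence at most once per iteration, i.e.\ at most $\sqrt{\log|\F|}+1$ times; the total is $m+(\sqrt{\log|\F|}+1)\ell=O(\sqrt{\log|\F|}\log(\log|\F|/\delta)/\epsilon^2)$.

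Fix $\bv B^\star\in\argmin_{\bv B\in\F}\|\bv A-\bv B\|_\Fro$, so $\|\bv A-\bv B^\star\|_\Fro\le M$, and a small $\epsilon'=\Theta(\epsilon)$ (say $\epsilon'=\epsilon/24$). I would condition on four events, each a consequence of \Cref{fact:hutch_standard} or an elementary sampling bound, and each of probability at least $1-\delta/4$ once $c$ is large: \textbf{(i)} every $\bv\Pi^{i,j}$ is a $(1\pm\epsilon')$-accurate Frobenius sketch of $\bv A-\bv B^\star$ --- a union bound over the $(\sqrt{\log|\F|}+1)r$ sketches needs only $\ell=\Theta(\log(\log|\F|/\delta)/\epsilon^2)$; \textbf{(ii)} for every bad $\bv B$ and every iteration $i$, at least one $j\in[r]$ has $\|(\bv A-\bv B)\bv\Pi^{i,j}\|_\Fro\ge(1-\epsilon')\|\bv A-\bv B\|_\Fro$ --- a fixed sketch fails this with probability $(\log|\F|/\delta)^{-\Omega(c)}$, so by independence all $r$ fail with probability $(\log|\F|/\delta)^{-\Omega(cr)}$, which $r=\log|\F|/\log\log|\F|$ turns into $|\F|^{-\Omega(c)}$, swamping the union over the $\le|\F|$ bad matrices and $O(\sqrt{\log|\F|})$ iterations; \textbf{(iii)} $\bv\Psi$ is a $(1\pm\epsilon')$-accurate Frobenius sketch of every matrix in a certain finite collection, built in the next paragraph; and \textbf{(iv)} for every candidate set $\mathcal C$ the algorithm could hold, at level $i$, and every $j^\star\in[r]$: if the sample $\mathcal R(\mathcal C)$ misses $S=\{\bv B\in\mathcal C:\|(\bv A-\bv B)\bv\Pi^{i,j^\star}\|_\Fro\le(1+\epsilon/12)M\}$, then $|S|\le2^{-\sqrt{\log|\F|}}|\mathcal C|$ --- because a size-$q$ sample misses a $p$-fraction with probability $(1-p)^q\le e^{-pq}$, which is tiny once $p\ge2^{-\sqrt{\log|\F|}}$ since $q=2^{1.5\sqrt{\log|\F|}\log(\log|\F|/\delta)}$, even after union-bounding over all $(\mathcal C,j^\star)$.

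Making (iii)--(iv) precise is the technical heart, and the reason $m=\tilde O(\sqrt{\log|\F|}/\epsilon^2)$ suffices rather than $\tilde O(\log|\F|/\epsilon^2)$: the sample $\mathcal R_i$ --- hence the representatives $\bv R^{i,j}$ read off from $\bv W=\bv\Psi^\T\bv A$ --- depends on the candidate set $\mathcal C_i$, which for $i>0$ depends on $\bv\Psi$ through the earlier choices of $j^\star$. I would break this circularity with a \emph{candidate-set tree}: after conditioning on all of $\{\bv\Pi^{i,j}\}$, form the complete $r$-ary tree of depth $\sqrt{\log|\F|}+1$ rooted at $\F$, in which the $j$-th child of a level-$i$ node $\mathcal C$ is $\{\bv B\in\mathcal C:\|(\bv A-\bv B)\bv\Pi^{i,j}\|_\Fro\le(1+\epsilon/12)M\}$. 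Every candidate set the algorithm can ever hold is a node of this tree, the tree depends only on $\F,\bv A,M,\{\bv\Pi^{i,j}\}$ and not on $\bv\Psi$, and it has at most $2r^{\sqrt{\log|\F|}+1}=2^{O(\sqrt{\log|\F|}\log\log|\F|)}$ nodes. Pre-drawing, with fresh randomness, the size-$\min(q,|\mathcal C|)$ sample $\mathcal R(\mathcal C)$ at every node $\mathcal C$ and coupling it to the algorithm's $\mathcal R_i$, the only matrices $\bv\Psi$ must sketch accurately for (iii) are those of the form $(\bv A-\bv B)\bv\Pi^{i,j}$ with $\bv B\in\mathcal R(\mathcal C)$ for a node $\mathcal C$, together with $(\bv A-\bv B^\star)\bv\Pi^{i,j}$ --- a collection of size $2^{O(\sqrt{\log|\F|}\log\log|\F|)}\cdot q\cdot r=2^{O(\sqrt{\log|\F|}\log(\log|\F|/\delta))}$, for which \Cref{fact:hutch_standard} gives exactly the stated $m$. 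The same count of $(\mathcal C,j^\star)$ pairs delivers (iv) with the pre-drawn samples.

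Conditioned on (i)--(iv) the rest is deterministic. By (i), $\bv B^\star$ survives every filter at \Cref{line:filter}, so $\bv B^\star\in\mathcal C_i$ in every iteration reached. When the \textbf{else} branch runs at node $\mathcal C_i$, (iii) converts $E_{i,j^\star_i}>(1+\epsilon/6)M$ into $\|(\bv A-\bv B)\bv\Pi^{i,j^\star_i}\|_\Fro>(1+\epsilon/12)M$ for \emph{every} $\bv B\in\mathcal R_i$ (since $\bv R^{i,j^\star_i}$ is the $\bv\Psi$-smallest among them), so $\mathcal R_i$ misses $\mathcal C_{i+1}$; as $\bv B^\star\in\mathcal C_{i+1}$ makes $\mathcal C_{i+1}$ nonempty we cannot have $\mathcal R_i=\mathcal C_i$, hence $|\mathcal C_i|>q$ and (iv) gives $|\mathcal C_{i+1}|\le2^{-\sqrt{\log|\F|}}|\mathcal C_i|$. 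Thus the \textbf{else} branch cannot fire in all $\sqrt{\log|\F|}+1$ iterations --- that many rounds of shrinkage by $2^{\sqrt{\log|\F|}}$ would force $|\mathcal C|<1$, contradicting $\bv B^\star\in\mathcal C$ --- and in fact once $|\mathcal C_i|\le q$ we have $\mathcal R_i=\mathcal C_i\ni\bv B^\star$, which makes the test at \Cref{line:conditional} pass and the algorithm return; so \Cref{line:fail} is never reached. At the returning iteration, (iii) and $E_{i,j}\le(1+\epsilon/6)M$ give $\|(\bv A-\bv R^{i,j})\bv\Pi^{i,j}\|_\Fro\le(1+\epsilon/3)M$ for all $j$, so by (i) the objective value $\max_j\|\bv R^{i,j}\bv\Pi^{i,j}-\bv B^\star\bv\Pi^{i,j}\|_\Fro$ of $\bv B^\star$ is at most $(2+\tfrac{3}{8}\epsilon)M$, whereas by (ii) any bad $\bv B\in\mathcal C_i$ has some $j_0$ with $\|\bv R^{i,j_0}\bv\Pi^{i,j_0}-\bv B\bv\Pi^{i,j_0}\|_\Fro\ge(1-\epsilon')(3+\epsilon)M-(1+\epsilon/3)M>(2+\tfrac{5}{12}\epsilon)M$; hence the minimizer $\bv{\tilde B}$ returned at \Cref{line:return} is not bad, i.e.\ $\|\bv A-\bv{\tilde B}\|_\Fro\le(3+\epsilon)M$. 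The single real obstacle, dealt with by the tree-plus-pre-sampling device above, is this dependence of $\bv\Psi$ on the very candidate sets it is used to prune.
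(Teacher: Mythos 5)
Your proof is correct and follows essentially the same approach as the paper's: the ``candidate-set tree'' you construct to break the dependence between $\bv\Psi$ and the pruned candidate sets is precisely the paper's $\mathcal C[j_0,\ldots,j_i]$ indexing device, and the remaining steps (optimal solution survives, sample hits any large surviving set, shrinkage forces termination, triangle-inequality accuracy argument at the returning iteration) match one-for-one. Your packaging into four explicit good events is slightly cleaner bookkeeping but not a substantively different argument.
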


\begin{proof}
As discussed, a challenging aspect of our analysis is ensuring that the left sketch $\bv{\Psi}$ can  be reused across different iterations of the algorithm. We begin by arguing that this is the case. 
    
\paragraph{Left Sketch Analysis.}
Our goal is to prove that $\bv \Psi$ accurately estimates the distance between $\bv A \bv{\Pi}^{i,j}$ and $\bv B \bv{\Pi}^{i,j}$ for any candidate representative $\bv{B}$ tested at Line 7 of the algorithm. Concretely, we claim that, with probability $\geq 1-\delta$, the following holds for all $i \in \{1, \ldots, \sqrt{\log |\F|}\}$, $j \in \{1, \ldots, r\}$, and $\bv{B} \in \mathcal{R}_{i}$:
\begin{align}
    \label{eq:left_sketch_guarantee}
    (1-\epsilon/16) \|\bv{A}\bv{\Pi}^{i,j} - \bv{B}\bv{\Pi}^{i,j}\|_\Fro \leq \|\bv{\Psi}^\T\bv{A}\bv{\Pi}^{i,j} - \bv{\Psi}^\T\bv{B}\bv{\Pi}^{i,j}\|_\Fro \leq (1+\epsilon/16) \|\bv{A}\bv{\Pi}^{i,j} - \bv{B}\bv{\Pi}^{i,j}\|_\Fro.
\end{align}

We prove \eqref{eq:left_sketch_guarantee} by combining 
\Cref{fact:hutch_standard} with a union bound over all possible candidates $\bv B$ that may be considered. First, fix the randomness used to generate each $\bv{\Pi}^{i,j}$. 
Define:
\[ \mathcal C[j] = \Big\{\bv B \in \mathcal{C}_0 \, : \, \|\bv{Z}_i - \bv{B}\bv{\Pi}^{0,j}\|_\Fro \leq (1+\epsilon/12)\cdot M \Big\} \]
That is, $\mathcal C[1], \ldots, \mathcal C[r]$ are the possible values that $\mathcal C_1$ may be set to if \Cref{line:filter} is reached in the first iteration of the for loop. $\mathcal C_1 = \mathcal C[j^*_1]$, where $j^*_1$ is selected at \cref{step:special j} and is a random variable depending on $\bv{\Psi}$. 
In general, define
\[
\mathcal C[j_0, \ldots, j_{i-1}, j_{i}] = \Big\{\bv B \in \mathcal C[j_0, \ldots, j_i] \, : \, \|\bv{Z}_{i-1} - \bv{B}\bv{\Pi}^{i-1,j_{i}}\|_\Fro \leq (1+\epsilon/12)\cdot M \Big\}. 
\]
$\mathcal C[j_0, \ldots, j_i]$ is one possible value that $\mathcal C_{i}$ may be set to.
Let $\mathcal{R}[j_0, \ldots, j_{i}]$ be a random sample of $\min(q,|\mathcal C[j_0, \ldots, j_i]|)$ candidates from $\mathcal C[j_0, \ldots, j_i]$ -- i.e., the sample that will be taken at \Cref{line:sample} if $\mathcal{C}_i = C[j_0, \ldots, j_i]$. Note that $\mathcal{R}[j_0, \ldots, j_{i}]$ is independent from $\bv{\Psi}$. Now, let $\mathcal B$ denote the set of all possible candidate sketches $\bv B \bv{\Pi}^{i,j}$ that may be considered at \Cref{line:minCheck} (i.e., all matrices for which we may need \eqref{eq:left_sketch_guarantee} to hold):
\begin{align}\label{eq:masterSet}\mathcal B = \left \{\bv B \bv{\Pi}^{i,j} : \bv B \in {R}[j_0, \ldots, j_{i}] \text{ for some } i \in \{1,\ldots \sqrt{\log|\F|}\}, \text{ and }(j,j_0,\ldots,j_i) \in \{1,\ldots,r\}^{i+1} \right \}.
\end{align}
$\mathcal B$ is a random set depending on the choices of $R[j_0, \ldots, j_{i}]$, but importantly, is independent of $\bv \Psi$. Further the number of sets $R[j_0,\ldots,j_i]$ is bounded by $1 + r + r^2 + \ldots + r^{\sqrt{\log|\F|}}  = O\left(r^{\sqrt{\log|\F|}}\right)$. Since each such set contains at most $q$ elements and since $j$ in \eqref{eq:masterSet} can take $r$ values, $|\mathcal{B}| = O\left(q\cdot r^{\sqrt{\log|\F|}+1}\right)$.

Applying \Cref{fact:hutch_standard} with failure probability $\delta' = \delta/|\mathcal{B}|$, it  follows from a union bound that \eqref{eq:left_sketch_guarantee} holds all $i \in \{1, \ldots, \sqrt{\log |\F|}\}$, $j \in \{1, \ldots, r\}$, and $\bv{B} \in \mathcal{R}_{i}$ with probability at least $1 - \delta$ as long as $\bv{\Psi}$ has
\[
m = \Omega\left(\frac{1}{\epsilon^2}\log\left(q\cdot r^{\sqrt{\log|\F|} + 1}/\delta\right)\right) = \Omega\left(\frac{1}{\epsilon^2}\sqrt{\log |\F|} \cdot \log(\log |\F|/\delta) \right)
\text{ columns.}
\]

        \paragraph{Optimal Solution Survives.}
        Fix some $\bv{B}^* \in \argmin_{\bv{B}\in \mathcal{F}} \|\bv{A} - \bv{B}\|_\Fro$. Our next goal is to prove that, as in the proof of \Cref{claim:one_sided_cutting_method}, $\bv{B}^*$ never gets removed from the candidate set $\mathcal{C}_i$. In particular, with probability at least $1-\delta$, 
        \begin{align}
            \label{eq:bstar_never_removed}
    \bv{B}^* \in \mathcal{C}_i \text{ for all } \mathcal{C}_i\text{ formed by \Cref{alg:two_sided_simulation} before termination.}
        \end{align} 
        To prove \eqref{eq:bstar_never_removed}, note that there are $O(\log^{1.5}|\F|)$ right sketches $\bv{\Pi}^{i,j}$, and each has $\ell = \Omega\left(\log(\log|\F|/\delta)/\epsilon^2\right)$ columns. Applying \Cref{fact:hutch_standard} and a union bound, we have that, with probability $\geq 1-\delta$, for all $i,j$, 
        \begin{align}
            \label{eq:norm_of_bstar}
        \|\bv{A}\bv{\Pi}^{i,j} - \bv{B}^*\bv{\Pi}^{i,j}\|_\Fro \leq (1+\epsilon/12) \cdot M.
        \end{align}
        Accordingly, $\bv{B}^*$ always satisfies the condition \eqref{eq:candidate_update} on \cref{line:filter}, so is always included in $\mathcal{C}_{i}$.

        \paragraph{Accuracy.}
        We now show that whenever the algorithm returns an output on \cref{line:return}, it is an accurate approximation to $\bv{A}$ with high probability. Specifically, we aim to show that $\|\bv{A} - \bv{\tilde B}\|_\Fro < (3+\epsilon)\cdot M$ with probability at least $1-\delta$.

        Suppose the algorithm terminates at iteration $i$.
        This occurs because $\|\bv{W}\bv{\Pi}^{i,j} - \bv{\Psi}^\T\bv{R}^{i,j}\bv{\Pi}^{i,j}\|_\Fro = \|\bv{\Psi}^\T\bv{A}\bv{\Pi}^{i,j} - \bv{\Psi}^\T\bv{R}^{i,j}\bv{\Pi}^{i,j}\|_\Fro \leq (1+\epsilon/6)\cdot M$ for all $j\in \{1, \ldots, r\}$, triggering the conditional at \Cref{line:conditional}.
         Applying \eqref{eq:left_sketch_guarantee}, we conclude that with probability $1-\delta$, the following holds for all $j$:
        \begin{align}
            \label{eq:error_of_r}
         \|\bv{A}\bv{\Pi}^{i,j} - \bv{R}^{i,j}\bv{\Pi}^{i,j}\|_\Fro \leq \frac{1+\epsilon/6}{1-\epsilon/16}\cdot M \leq (1+\epsilon/4)\cdot M.
        \end{align}
        Using the triangle inequality to combine this bound with that of \eqref{eq:norm_of_bstar}, we then have that, for all $j$,
        \begin{align}
        \|\bv{R}^{i,j}\bv{\Pi}^{i,j} - \bv{B}^*\bv{\Pi}^{i,j} \|_\Fro \leq  \|\bv{A}\bv{\Pi}^{i,j} - \bv{R}^{i,j}\bv{\Pi}^{i,j}\|_\Fro +  \|\bv{A}\bv{\Pi}^{i,j} - \bv{B}^*\bv{\Pi}^{i,j}\|_\Fro \leq (2+\epsilon/2)\cdot M.
        \end{align}
        Additionally, by \eqref{eq:bstar_never_removed}, we have that $\bv{B}^* \in \mathcal{C}_i$ with probability at least $1-\delta$.
        Therefore, we return $\bv{\tilde B}$ for which the following holds for all $j$:
        \begin{align}
        \|\bv{R}^{i,j}\bv{\Pi}^{i,j} - \bv{\tilde B}\bv{\Pi}^{i,j} \|_\Fro
        &\le \min_{\bv{B} \in \mathcal{C}_{i}}\max_{j\in \{1, \ldots, r\}} \|\bv{R}^{i,j}\bv{\Pi}^{i,j} - \bv{B}\bv{\Pi}^{i,j}\|_\Fro \nonumber \\
        &\leq \max_{j\in \{1, \ldots, r\}} \|\bv{R}^{i,j}\bv{\Pi}^{i,j} - \bv{B}^*\bv{\Pi}^{i,j}\|_\Fro \nonumber\\
        &\leq (2+\epsilon/2)\cdot M.\label{eq:upper bound on output}
        \end{align}
        Now, we would like to prove that $\|\bv{A} - \bv{\tilde B}\|_\Fro \le (3+\epsilon)\cdot M$ with high probability. It suffices to argue that, with probability at least $1-\delta$, for all ``bad'' $\bv B$ with $\|\bv{A} - \bv{B}\|_\Fro > (3+\epsilon)\cdot M$,
        \begin{align}
            \label{eq:bad_matrices_look_bad}
        \|\bv{R}^{i,j}\bv{\Pi}^{i,j} - \bv{B}\bv{\Pi}^{i,j} \|_\Fro > (2+\epsilon/2)\cdot M \text{ for at least one } j\in \{1, \ldots, r\}.
        \end{align}
        Thus, by \eqref{eq:upper bound on output},  
        $\bv{\tilde B}$ cannot be bad. By triangle inequality and \eqref{eq:error_of_r}, to prove \eqref{eq:bad_matrices_look_bad}, it suffices to show that, for all bad $\bv{B}$, 
        \begin{align}
            \label{eq:bad_matrices_look_bad_after_triangle_inequality}
        \|\bv{A}\bv{\Pi}^{i,j} - \bv{B}\bv{\Pi}^{i,j} \|_\Fro > (3+3\epsilon/4)\cdot M \text{ for at least one } j\in \{1, \ldots, r\}.
        \end{align}
        Fixing any bad $\bv{B}$, 
        since $\bv{\Pi}^{i,j}$ has $\Omega\left(\log(\log|\F|/\delta)/\epsilon^2\right)$ columns, by \Cref{fact:hutch_standard}, $\|\bv{A}\bv{\Pi}^{i,j} - \bv{B}\bv{\Pi}^{i,j}\|_\Fro \geq (3+3\epsilon/4)\cdot M$ for any given $j$ with probability $\geq 1 - \delta/\log|\F|$.
        The sketching matrices $\bv{\Pi}^{i,1}, \ldots, \bv{\Pi}^{i,r}$ are drawn independently.
        Therefore,
        \[
        \Pr\left[\exists j : \|\bv{A}\bv{\Pi}^{i,j} - \bv{B}\bv{\Pi}^{i,j}\|_\Fro
        \geq (3+3\epsilon/4)\cdot M\right] \quad \geq 1 - (\delta / \log|\F|)^{r} \geq 1 - \delta / |\F|,
        \]
        where the last inequality follows from $r := \frac{\log|\F|}{\log \log |\F|}$.
        Since there are at most $|\F|$ bad matrices, by a union bound, \eqref{eq:bad_matrices_look_bad_after_triangle_inequality} holds for all bad $\bv{B}$ with probability at least $1 - \delta$. Thus \eqref{eq:bad_matrices_look_bad} holds, and by \eqref{eq:upper bound on output}, $\bv{\tilde B}$ cannot be bad and must satisfy $\|\bv{A} - \bv{\tilde B}\|_\Fro \leq (3+\epsilon)\cdot M$ as desired. Overall, this bound holds with probability $1-3\delta$. 
        The factor of 3 on $\delta$ comes from applying a union bound to conclude that \eqref{eq:bstar_never_removed}/\eqref{eq:norm_of_bstar}, \eqref{eq:left_sketch_guarantee}/\eqref{eq:error_of_r},
         %\eqref{eq:upper bound on output}, 
        and \eqref{eq:bad_matrices_look_bad}/\eqref{eq:bad_matrices_look_bad_after_triangle_inequality}
             all hold simultaneously.

        \paragraph{Guaranteed Termination.}
        \Cref{alg:two_sided_simulation} runs for at most $\sqrt{\log|\F|}$ iterations.
        We now argue that, with high probability, it returns a solution before then --- that is, the if-statement on \Cref{line:conditional} evaluates to true for some iteration $i \le \sqrt{\log |\F|}$ --- so \cref{line:fail} is never reached.
        To do this, we show that the size of $\mathcal{C}_i$ shrinks by a factor of $1/2^{\sqrt{\log|\F|}}$ at every iteration where the if-statement evaluates to false. Combined with \eqref{eq:bstar_never_removed}, which shows that the optimal solution $\bv{B}^*$ always remains in $\mathcal{C}_i$, we conclude that the number of times the if-statement evaluates to false before it evaluates to true (and an output is returned) can be at most $\sqrt{\log|\F|}$. 
        Define $\mathcal{C}_i[j]\subseteq \mathcal C_i$ as:
        \begin{align*}
            \mathcal{C}_i[j] = \left\{\bv{B} \in \mathcal{C}_{i} : \|\bv{A}\bv{\Pi}^{i,j} - \bv{B}\bv{\Pi}^{i,j}\|_\Fro \leq (1+\epsilon/12)\cdot M\right\}.
        \end{align*}
        That is, $\mathcal{C}_i[j]$
        is what $\mathcal{C}_{i+1}$ would be set to if we set $j_i^* = j$ at \cref{step:special j}. 
        Recalling that in \Cref{line:minCheck} we set $E_{i,j} = \min_{\bv{B}\in \mathcal{R}_{i}} \|\bv{\Psi}^\T\bv{A}\bv{\Pi}^{i,j} - \bv{\Psi}^\T\bv{B}\bv{\Pi}^{i,j}\|_\Fro$, we claim that the following holds with high probability:
        \begin{align}
            \label{eq:sampling_succeeds}
            \text{For all } i, j \text{ with } |\mathcal{C}_{i}[j]| \geq \frac{|\mathcal{C}_{i}|}{2^{\sqrt{\log|\F|}}}, \quad E_{i,j} \leq (1+\epsilon/6)\cdot M.
        \end{align}
     That is, if $\bv{\Pi}^{i,j}$ \emph{would not} significantly reduce the size of $\mathcal{C}_{i}$, then we will successfully find an accurate representative $\bv{R}^{i,j}$, and thus $\bv{\Pi}^{i,j}$ will not be selected on \Cref{step:special j}. 
     To prove \eqref{eq:sampling_succeeds}, we first prove that, with probability at least $1-\delta$,
     \begin{align}\label{eq:allSamplesSucceed}
        \text{For all } i, j \text{ with } |\mathcal{C}_{i}[j]| \geq \frac{|\mathcal{C}_{i}|}{2^{\sqrt{\log|\F|}}},\quad |\mathcal C_i[j] \cap \mathcal{R}_i| \ge 1.
     \end{align}
     That is, for any $i,j$ such that $\mathcal{C}_{i}[j]$ is large enough, we will successfully sample at least one representative from $\mathcal{C}_{i}[j]$ into $\mathcal{R}_{i}$. Since $\mathcal{R}_i$ contains $q = 2^{1.5 \cdot \sqrt{\log|\F|}\cdot \log(\log|\F|/\delta)}$ random elements of $\mathcal C_i$, assuming the condition of \eqref{eq:allSamplesSucceed} holds, some $\bv{B}\in \mathcal{C}_{i}[j]$ is included in $\mathcal{R}_{i}$ with probability at least:
       \begin{align*}
       1 - \left(1 - \frac{|\mathcal{C}_{i}[j]|}{|\mathcal{C}_{i}|}\right)^{q} \geq 1- \left(1 - \frac{1}{2^{\sqrt{\log|\F|}}}\right)^{q} \geq 1 - \frac{1}{e^{1.5 \cdot \log(\log|\F|/\delta)}} \geq 1- \frac{1}{\delta \log^{1.5}|\F|}. 
       \end{align*}
       Taking a union bound over  $O(\log^{1.5}|\F|)$ choices for $(i,j)$ yields \eqref{eq:allSamplesSucceed}. From \eqref{eq:left_sketch_guarantee}, we thus have
       \begin{align*}
       E_{i,j}
       \leq \|\bv{\Psi}^\T\bv{A}\bv{\Pi}^{i,j} - \bv{\Psi}^\T\bv{B}\bv{\Pi}^{i,j}\|_\Fro &\leq (1+\epsilon/16)\|\bv{A}\bv{\Pi}^{i,j} - \bv{B}\bv{\Pi}^{i,j}\|_\Fro \\ &\leq (1+\epsilon/16)\cdot (1+\epsilon/12)\cdot M \leq (1+\epsilon/6)\cdot M.
       \end{align*}
       for all such $i,j$.
       We conclude that \eqref{eq:sampling_succeeds} holds with probability at least $1-2\delta$, where the factor of two comes from union bounding over both \eqref{eq:left_sketch_guarantee}  and \eqref{eq:allSamplesSucceed}   holding.

       With \eqref{eq:sampling_succeeds} in place, we see that, whenever the if-condition on \Cref{line:conditional} 
       evaluates to false at iteration $i$, all $j$ for which $E_{i,j} > (1+\epsilon/6)\cdot M$ (and in particular, $j_i^*$), have $|\mathcal{C}_{i}[j]| < \frac{|\mathcal{C}_{i}|}{2^{\sqrt{\log|\F|}}}$. Thus, $|\mathcal C_{i+1}| = |\mathcal C_i[j_i^*]| \le |\mathcal{C}_{i}|/2^{\sqrt{\log|\F|}}$.
       It follows that, if the if-condition evaluates to false for all $i < z$, then $|\mathcal{C}_{z}| \leq |\F|\cdot (1/2)^{z\sqrt{\log|\F|}}$. 

       Now, suppose by way of contradiction that the if-condition evaluates to false for all $i \in \{0, \ldots, \sqrt{\log|\F|}\}$, causing us to reach \Cref{line:fail}. 
       Then we would have $\left|\mathcal{C}_{\sqrt{\log|\F|} + 1}\right| \leq |\F|\cdot (1/2)^{\left(\sqrt{\log|\F|} + 1\right)\sqrt{\log|\F|}} < 1$. 
       However, by \eqref{eq:bstar_never_removed}, $\left|\mathcal{C}_{\sqrt{\log|\F|} + 1}\right| \geq 1$ since this set contains $\bv{B}^*$. This is a contradiction. We conclude that the if-condition must evaluate to true for some $i \leq \sqrt{\log|\F|}$, and thus the algorithm must successfully return some $\bv {\tilde B}$.

       Thus, we finally conclude that, with probability at least $1-4\delta$, \Cref{alg:two_sided_simulation} returns a solution $\bv{\tilde B}$ satisfying $\|\bv{A} - \bv{\tilde B}\|_\Fro \leq (3+\epsilon)\cdot M$, as desired.

       \paragraph{Query Complexity.} We conclude the proof by calculating the query complexity of \Cref{alg:two_sided_simulation}. The method issues $m = O\left(\frac{1}{\epsilon^2}\sqrt{\log |\F|} \cdot \log(\log|\F|/\delta)\right)$ left queries to compute $\bv{W} = \bv{\Psi}^\T \bv{A}$. At every iteration of the for-loop, it issues at most $\ell = O\left(\frac{1}{\epsilon^2}\log(\log|\F|/\delta)\right)$ additional right queries to compute $\bv{Z} = \bv{A}\bv{\Pi}^{i,j}$. There are $\sqrt{\log |\F|} + 1$ iterations, so we conclude a final query complexity of $O\left(\frac{1}{\epsilon^2}\sqrt{\log |\F|}\log(\log|\F|/\delta)\right)$. 
    \end{proof}

\subsection{Estimating the Optimal Error}\label{sec:est}
We next show how to obtain an accurate upper bound $M \ge \min_{\bv B \in \F} \norm{\bv A - \bv B}_\Fro = \OPT$ for use in \Cref{alg:two_sided_simulation}.   Plugging this bound into  \Cref{thm:main_opt_known} will yield our main result, \Cref{thm:robust}.

\paragraph{Overview of Approach.}
Our approach starts by showing that a simple one-sided sketching algorithm using $O(\log(1/\delta))$ matvecs can efficiently find, with probability $> 1-\delta$, $M_{\rm init}$ satisfying the very coarse (but still multiplicative) error bound $\OPT \le M_{\rm init} \le 6 |\mathcal F| \cdot \OPT$. 
Starting with $M_{\rm init}$, we then run a binary search procedure to identify a more accurate upper bound on $\OPT$. 
In particular, we consider $O(\log|\F|/\epsilon)$ possible bounds: $M_i = \frac{(1+\epsilon)^i M_{\rm init}}{6|\F|}$ for $i \in \{0, 1, \ldots, \lceil\log_{(1+\epsilon)}(6|\F|) \rceil\}$. The smallest of these bounds is $\frac{M_{\rm init}}{6|\F|} \le \OPT$ and the largest is at least  $M_{\rm init} \ge \OPT$. Thus, at least one of them  must satisfy $\OPT \le M_i \le (1+\epsilon) \cdot \OPT$.

Furthermore, we can find this $M_i$ in $O(\log(\log|\F|/\epsilon))$ iterations of binary search, with \Cref{alg:two_sided_simulation} used to test whether a given bound is too large or too small. In particular, initialize an array containing all possible values of $M_i$. Elements of this array will be eliminated during the search process. Assume that at a step of the binary search, the median element of our current array is $M_z$. We can run \Cref{alg:two_sided_simulation} with $M_z$ as input to obtain an approximation $\bv{\tilde B}$. We can accurately estimate $\|\bv{A}-\bv{\tilde B}\|_\Fro$ using a small number of matvec queries via \Cref{fact:hutch_standard}. If we find that $\|\bv{A} - \bv{\tilde B}\|_\Fro \le (3+\epsilon) M_z$, then we know that either: Case 1) $M_z < \OPT$, but we have luckily found a solution with error $< (3+\epsilon) \cdot \OPT$ anyway, as desired, or Case 2): $M_z \ge \OPT$ is a valid upper bound and will proceed to find a tighter one.
Either way, we can safely remove all $M_j$ with $j > k$ from our binary search. Alternatively, if the algorithm returns $\bv{\tilde B}$ with $\|\bv{A} - \bv{\tilde B}\|_\Fro > (3+\epsilon) M_z$, then we know that with high probability, $M_z < \OPT$ and we can remove all $M_j$ with $j < k$ from our binary search. 

Overall, after running \Cref{alg:two_sided_simulation} $O(\log(\log|\F|/\epsilon))$ times, we will either have found some $\bv {\tilde B}$ with $\|\bv{A} - \bv{\tilde B}\|_\Fro \le (3+\epsilon) \cdot \OPT$, or we will have terminated our search on $M_z$ satisfying $\OPT \le M_i \le (1+\epsilon) \cdot \OPT$. Plugging $M_z$ into \Cref{alg:two_sided_simulation} will yield a solution $\bv{\tilde B}$ with $\|\bv{A} - \bv{\tilde B}\|_\Fro \le (3+\epsilon) M_z \le (3+5\epsilon) \cdot \OPT$. Adjusting $\epsilon$ by a constant factor gives a $(3+\epsilon)$ approximation and yields our main result, \Cref{thm:robust}.

\subsubsection{Coarse Approximation of \texorpdfstring{$\OPT$}{OPT}}
We first show that a coarse initial bound $M_{\rm init}$ can be obtained via a simple sketching algorithm (\Cref{alg:coarse}).

\begin{algorithm}[H]
    \caption{Coarse Approximation of $\OPT$}\label{alg:coarse}    
    \begin{algorithmic}[1]
        \Require Finite matrix family $\mathcal{F} \subset \mathbb{R}^{n \times n}$, failure probability $\delta \in (0,1)$.
        \Ensure $M$ such that, with probability at least $1-\delta$, $\OPT \le M \le 6|\mathcal F| \cdot \OPT$.
        \algrule
        \State Set $t = c \log(1/\delta)$ for a sufficiently large universal constant $c$.
        \State Let $\bv{\Pi}^1, \ldots, \bv{\Pi}^t \in \mathbb{R}^{n \times 2}$ be independent, each with i.i.d. $\mathcal{N}(0,1/2)$ Gaussian entries.\label{line:sketches} 

        \State For $i \in \{1,\ldots, t\}$, let $M_i = \sqrt{6|\mathcal F|} \cdot \min_{\bv B \in \mathcal{F}} \|\bv A \bv \Pi^i - \bv B \bv \Pi^i\|_\Fro$.\label{line:independentBounds}
    \State
        \Return $M = \mathrm{median}(M_1,\ldots,M_t)$.\label{line:median}
    \end{algorithmic}
\end{algorithm}  

The analysis of \Cref{alg:coarse} relies on a Frobenius-norm sketching guarantee, similar to \Cref{fact:hutch_standard}, but with a fixed constant sketch size and a very coarse approximation bound. In particular:
\begin{claim}[Coarse Frobenius-Norm Sketching]\label{clm:hutch_coarse}
    Let $\bv \Pi \in \mathbb{R}^{n \times 2}$ have independent $\mathcal{N}(0,1/2)$  Gaussian entries. For any $\delta \in (0,1)$ and any matrix $\bv C$ with $n$ columns, with probability $\ge 1-\delta$,
    \begin{align*}
        \sqrt{\frac{\delta}{2}} \|\bv C\|_\Fro \leq \|\bv{C \Pi}\|_\Fro \le \sqrt{\frac{2}{\delta}} \|\bv C\|_\Fro.
    \end{align*}
\end{claim}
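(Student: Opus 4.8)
The plan is to diagonalize the problem via rotational invariance of the Gaussian sketch, then treat the two tails separately. Writing the thin SVD $\bv{C} = \bv{U}\bv{\Sigma}\bv{V}^\T$, we have $\|\bv{C}\bv{\Pi}\|_\Fro^2 = \tr\big((\bv{V}^\T\bv{\Pi})^\T\bv{\Sigma}^2(\bv{V}^\T\bv{\Pi})\big)$, and since $\bv{V}$ has orthonormal columns, $\bv{V}^\T\bv{\Pi}$ again has i.i.d. $\mathcal{N}(0,1/2)$ entries. Expanding the trace, $\|\bv{C}\bv{\Pi}\|_\Fro^2 = \sum_k \sigma_k^2\cdot\tfrac12\big(g_{k,1}^2+g_{k,2}^2\big)$ with all $g_{k,j}$ i.i.d. $\mathcal{N}(0,1)$. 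Crucially $g_{k,1}^2+g_{k,2}^2\sim\chi^2_2$, which is exactly an exponential of mean $2$, so $\|\bv{C}\bv{\Pi}\|_\Fro^2 = \|\bv{C}\|_\Fro^2\sum_k w_k E_k$ where $w_k = \sigma_k^2/\|\bv{C}\|_\Fro^2\ge 0$ sum to $1$ and $E_1,E_2,\dots$ are i.i.d. standard exponentials. It therefore suffices to show $\Pr[\sum_k w_k E_k \ge 2/\delta]\le\delta/2$ and $\Pr[\sum_k w_k E_k \le \delta/2]\le\delta/2$, then union bound. The first bound is immediate from $\E[\sum_k w_k E_k] = 1$ and Markov's inequality.

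The lower tail is the crux: we must establish anti-concentration of a convex combination of i.i.d. exponentials below its mean, with the right constant. The key structural observation is that $\Pr[\sum_k w_k E_k \le a]$, for $a$ at most the mean $1$, is maximized over the simplex of weight vectors at the degenerate point $w=(1,0,0,\dots)$: balancing two unequal weights only makes $\sum_k w_k E_k$ more concentrated, hence less likely to fall below its mean, so the least balanced weight vector is extremal. (This is a peakedness / Schur-convexity property of linear combinations of i.i.d. log-concave random variables; one can prove it directly by showing that a single ``unbalancing'' transfer $(w_i,w_j)\mapsto(w_i+\eta,w_j-\eta)$ with $w_i\ge w_j$ only increases the probability, which reduces to the conditional-expectation inequality $\E[E_i \mid \sum_k w_k E_k = a]\le \E[E_j\mid\sum_k w_k E_k=a]$ when $w_i\ge w_j$.) Granting extremality, in the degenerate case $\sum_k w_k E_k = E_1\sim\mathrm{Exp}(1)$, so $\Pr[\sum_k w_k E_k\le\delta/2]\le 1-e^{-\delta/2}\le\delta/2$, using $1-e^{-x}\le x$. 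It is worth noting that this is precisely where having \emph{two} sketch columns matters: with two columns the extremal variable is $\chi^2_2=\mathrm{Exp}(1)$, whose CDF $1-e^{-x}$ is bounded by $x$, whereas with a single column it would be $\chi^2_1$, whose CDF behaves like $\sqrt{2x/\pi}$ and is \emph{not} bounded by $x$ near $0$.

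I expect the main difficulty to be making the extremality step fully self-contained. A naive Chernoff / moment-generating-function bound on $\Pr[\sum_k w_k E_k\le\delta/2]$, which uses only the elementary product inequality $\prod_k(1+tw_k)\ge 1+t$, yields at best $\tfrac{\delta}{2}e^{1-\delta/2}$ --- off by the constant factor $e$, and hence not tight enough for the stated bound. So one genuinely has to exploit that the single-singular-direction configuration is the worst case, either via the transfer argument sketched above or by invoking a classical peakedness theorem for linear combinations of i.i.d. gamma random variables.
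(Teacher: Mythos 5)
Your diagonalization to $\|\bv{C\Pi}\|_\Fro^2 / \|\bv{C}\|_\Fro^2 = \sum_k w_k E_k$ with i.i.d. $\mathrm{Exp}(1)$ variables, your Markov step for the upper tail, your observation that a naive Chernoff bound is off by a factor of $e$, and your remark on why two sketch columns (giving $\chi^2_2 = \mathrm{Exp}(1)$ rather than $\chi^2_1$) are essential are all correct. The reduction of the lower-tail extremality claim to the conditional-expectation inequality via a one-parameter transfer is also correct. However, you stop short of proving the conditional-expectation inequality itself --- equivalently, the Schur-convexity of $w \mapsto \Pr[\sum_k w_k E_k \le a]$ for $a \le 1$ --- and this is exactly the nontrivial content of the step. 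As you anticipate, this is not a routine calculation: it rests on a genuine peakedness property of weighted sums of i.i.d.\ gamma variables (see, e.g., Bock, Diaconis, Huffer, and Perlman, 1987), and making it self-contained would require something like a monotone-likelihood-ratio argument for the conditional density of $E_i$ given $\sum_k w_k E_k = a$.

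The paper sidesteps this entirely by citing Lemma 2.2 of Persson, Cortinovis, and Kressner (2022), which packages precisely the inequality you are after, namely $\Pr[\|\bv{C\Pi}\|_\Fro^2 < \alpha\|\bv{C}\|_\Fro^2] \le \int_0^\alpha e^{-x}\,dx = 1 - e^{-\alpha}$. Your proposal correctly identifies that the extremal configuration is the single-direction (rank-one) one, that the extremal bound is the CDF of $\mathrm{Exp}(1)$, and that a product-inequality Chernoff bound is not sharp enough, so the two routes coincide modulo this citation. As written, your proof has a genuine gap at the extremality step --- you are, to your credit, explicit about it --- and the fix is either to prove the transfer inequality in full or to invoke the Persson--Cortinovis--Kressner lemma as the paper does.
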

\begin{proof}
    The upper bound follows directly from Markov's inequality: it is well-known that $\EE[\|\bv{C \Pi}\|_\Fro^2] = \|\bv C\|_\Fro^2$. Thus, $\Pr[\|\bv{C \Pi}\|_\Fro^2 > 2/\delta \cdot \|\bv C\|_\Fro^2] \le \delta/2$, and so $\Pr[\|\bv{C \Pi}\|_\Fro \le \sqrt{2/\delta} \|\bv C\|_\Fro] \ge 1-\delta/2$.
    The lower bound can be obtained in several ways. Concretely, Lemma 2.2 in \cite{PerssonCortinovisKressner:2022}, shows that, for any $\alpha \in (0,1)$, 
    \begin{align*}
       \Pr[\|\bv{C \Pi}\|_\Fro < \sqrt{\alpha} \|\bv C\|_\Fro] \le  \int_0^{\alpha} e^{-x} dx = 1-e^{-\alpha} \le \alpha.
    \end{align*}
    Setting $\alpha = \delta/2$, we have that $\sqrt{\frac{\delta}{2}} \| \bv C\|_\Fro \le \|\bv{C \Pi}\|_\Fro$ with probability at least $1-\delta/2$. 
    By a union bound, both the upper and lower bounds hold simultaneously with probability at least $1-\delta$, giving the claim.
\end{proof} 
Leveraging \Cref{clm:hutch_coarse}, we prove the following bound for \Cref{alg:coarse}:
\begin{claim}[Coarse Upper Bound on $\OPT$] For any finite matrix family $\F$, \Cref{alg:coarse} with input $\delta \in (0,1)$ uses $O(\log(1/\delta))$ matvec queries of the form $\bv x \mapsto \bv A \bv x$ and outputs $M$ that with probability $1-\delta$ satisfies:
    \begin{align*}
    \OPT &\le M \le 6|\F| \cdot \OPT & &\text{where} &\OPT &= \min_{\bv B \in \F} \norm{\bv A - \bv B}_\Fro.
    \end{align*}
\end{claim}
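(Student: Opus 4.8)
The plan is to control a single draw $M_i$ and then amplify with the median, so I would start with the query count, which is immediate: each $\bv{\Pi}^i$ has two columns, so $\bv{A}\bv{\Pi}^i$ costs two matvecs with $\bv{A}$, and the inner minimization $\min_{\bv B\in\F}\|\bv{A}\bv{\Pi}^i-\bv B\bv{\Pi}^i\|_\Fro$ uses only the already-computed sketch $\bv{A}\bv{\Pi}^i$ together with the query-free matrices $\bv B\bv{\Pi}^i$; hence the total is $2t=O(\log(1/\delta))$ matvecs of the form $\bv x\mapsto\bv{A}\bv x$. For correctness, fix $\bv{B}^* \in \argmin_{\bv B\in\F}\|\bv{A}-\bv B\|_\Fro$, so $\|\bv{A}-\bv{B}^*\|_\Fro=\OPT$. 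For the \emph{upper} direction, bound $M_i \le \sqrt{6|\F|}\,\|(\bv{A}-\bv{B}^*)\bv{\Pi}^i\|_\Fro$ and use the identity $\E\big[\|(\bv{A}-\bv{B}^*)\bv{\Pi}^i\|_\Fro^2\big]=\OPT^2$ (which holds because $\E[\bv{\Pi}^i(\bv{\Pi}^i)^\T]=\bv I$ for $\mathcal N(0,1/2)$ entries); since $\sqrt{6|\F|}\ge\sqrt 6$, Markov's inequality then gives $\Pr[M_i>6|\F|\cdot\OPT] \le \Pr[\|(\bv{A}-\bv{B}^*)\bv{\Pi}^i\|_\Fro^2>6\,\OPT^2] \le 1/6$. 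For the \emph{lower} direction, apply \Cref{clm:hutch_coarse} with parameter $1/(3|\F|)$ to $\bv C=\bv{A}-\bv B$ for each of the $|\F|$ matrices $\bv B\in\F$: using $\|\bv{A}-\bv B\|_\Fro\ge\OPT$, its lower inequality gives $\|(\bv{A}-\bv B)\bv{\Pi}^i\|_\Fro\ge \OPT/\sqrt{6|\F|}$, and (by the proof of \Cref{clm:hutch_coarse}) this one-sided event fails with probability at most $1/(6|\F|)$. A union bound over $\F$ shows that with probability $\ge 5/6$ we have $\min_{\bv B\in\F}\|(\bv{A}-\bv B)\bv{\Pi}^i\|_\Fro\ge \OPT/\sqrt{6|\F|}$, hence $M_i\ge\OPT$. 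Combining the two directions, $\Pr[M_i\notin[\OPT,\,6|\F|\cdot\OPT]]\le 1/3$.

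For the median step, observe that the indicators $\ones[M_i>6|\F|\cdot\OPT]$ are independent across $i$, each with mean $\le 1/6$, so a Chernoff (Hoeffding) bound shows that at least $t/2$ of them occur with probability at most $e^{-\Omega(t)}$, and symmetrically for $\ones[M_i<\OPT]$. Off both of these failure events, strictly fewer than half the $M_i$ exceed $6|\F|\cdot\OPT$ and strictly fewer than half fall below $\OPT$, so $M=\mathrm{median}(M_1,\dots,M_t)$ lies in $[\OPT,\,6|\F|\cdot\OPT]$. Choosing $t=c\log(1/\delta)$ for a sufficiently large absolute constant $c$ makes the total failure probability at most $\delta$, which together with the query count proves the claim.

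The one place that needs care is the union bound over all of $\F$ in the lower-bound step: it is affordable only because the small-ball estimate behind \Cref{clm:hutch_coarse} has a tail $\Pr[\|\bv C\bv{\Pi}\|_\Fro<\sqrt{\alpha}\,\|\bv C\|_\Fro]\le\alpha$ that is \emph{linear} (not logarithmic) in $\alpha$, so taking $\alpha=\Theta(1/|\F|)$ costs only a $\sqrt{|\F|}$ factor of slack in the approximation ratio -- comfortably inside the stated $6|\F|$ bound (the argument in fact yields the slightly stronger $M\le 6\sqrt{|\F|}\cdot\OPT$). Everything else is a routine first/second-moment computation plus standard median-of-means amplification.
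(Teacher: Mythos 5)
Your proof is correct and relies on the same two ingredients as the paper's: \Cref{clm:hutch_coarse} to control each individual $M_i$ with constant probability, followed by median-of-$t$ Chernoff amplification. The one place you diverge is in the per-iteration analysis. The paper invokes the \emph{two-sided} bound of \Cref{clm:hutch_coarse} uniformly for every $\bv B \in \F$ with $\delta = 1/(3|\F|)$ and union-bounds the total failure to $1/3$. You instead observe that the upper direction only needs to control the single matrix $\bv B^*$: from $M_i \le \sqrt{6|\F|}\,\|(\bv A - \bv B^*)\bv\Pi^i\|_\Fro$ and $\E\big[\|(\bv A - \bv B^*)\bv\Pi^i\|_\Fro^2\big]=\OPT^2$, a single Markov step gives failure probability $1/6$ with no union bound and no dependence on $|\F|$; the union bound over $\F$ is genuinely needed only for the lower direction, because $M_i$ is a minimum over all of $\F$. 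This asymmetry is the right way to see what makes the argument work --- the $\sqrt{\delta/2}$ lower-tail rate in \Cref{clm:hutch_coarse} is linear in the failure probability, so the union bound over $\F$ costs only a $\sqrt{|\F|}$ factor in the approximation ratio --- and, as you note, it yields the slightly tighter bound $M_i \le 6\sqrt{|\F|}\cdot\OPT$ for free. The query count and median-amplification steps are identical to the paper's.
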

\begin{proof}
    We first analyze the accuracy of each of the random sketches $\bv{\Pi}^1,\ldots,\bv{\Pi}^t \in \R^{n \times 2}$ initialized in \Cref{line:sketches} of \Cref{alg:coarse}.
    Let $\bv \Pi \in \R^{n \times 2}$ be a Gaussian sketch with i.i.d. $\mathcal{N}(0,1/2)$  Gaussian entries. Applying \Cref{clm:hutch_coarse} with $\delta = 1/(3|\F|)$ and applying a union bound over all matrices of the form $\bv C = \bv A - \bv B$ for $\bv B \in \F$, we have that with probability at least $1-|\F| \cdot \delta = 2/3$, for all $\bv B \in \F$, $$1/\sqrt{6|\F|} \cdot \|\bv A - \bv B\|_\Fro \le \|\bv A \bv \Pi - \bv B \bv \Pi\|_\Fro \le \sqrt{6|\F|} \|\bv A - \bv B\|_\Fro.$$
 Thus, if we set $M = \sqrt{6 |\F|} \cdot \min_{\bv B \in \F} \|\bv A \bv \Pi - \bv B \bv \Pi\|_\Fro$, we have $\OPT \le M \le 6|\F| \cdot \OPT$ with probability $\ge 2/3$.
I.e., each $M_i$ in \Cref{line:independentBounds} satisfies $\OPT \le M_i \le 6|\F| \cdot \OPT$ with probability at least $2/3$. So in expectation, at least $2/3 \cdot t$ of these matrices satisfy the bound. By a standard Chernoff bound, since $t = c\log(1/\delta)$ for a large enough constant $c$, more than $t/2$ of the matrices $M_1,\ldots,M_t$ satisfy the desired bound with probability at least $1-\delta$. Thus, setting $M = \mathrm{median}(M_1, \ldots, M_t)$ in \Cref{line:median}, with probability $\ge 1-\delta$, we have $\OPT \le M \le 6|\F| \cdot \OPT$.
\end{proof}

\subsubsection{Improving the Bound via Binary Search}
We next show how to improve our initial bound $M_{\rm init}$ via a binary search procedure that uses \Cref{alg:two_sided_simulation} as a subroutine. This approach is formalized as \Cref{alg:binary_search} below.

\begin{algorithm}[H]
    \caption{Relative Error Approximation via Binary Search}\label{alg:binary_search}    
    \begin{algorithmic}[1]
        \Require Finite matrix family $\mathcal{F} \subset \mathbb{R}^{n \times n}$, failure probability $\delta \in (0,1)$, accuracy parameter $\epsilon \in (0,1)$.
        \Ensure Matrix $\bv{\tilde B} \in \mathcal{F}$ satisfying $\|\bv{A} - \bv{\tilde B}\|_\Fro \leq (3+\epsilon) \cdot \min_{\bv B \in \mathcal{F}} \|\bv A - \bv B\|_\Fro$ with probability at least $1-\delta$.
        \algrule
        \State Use \Cref{alg:coarse} to compute $M_{\rm init}$ satisfying $\OPT \le M_{\rm init} \le 6|\mathcal F| \cdot \OPT$ with probability at least $1-\delta/2$.\label{line:coarse}
        \State For  $i \in \{0, 1, \ldots, \lceil\log_{(1+\epsilon/12)}(6|\mathcal F|) \rceil\}$, set $M_i = \frac{(1+\epsilon/12)^i M_{\rm init}}{6|\mathcal F|}$.
        \State Let $\ell = \frac{c}{\epsilon^2} \cdot \log(\log|\F|/(\delta\epsilon))$ for a sufficiently large constant $c$. 
        \State Draw $\bv \Pi \in \{1/\sqrt{\ell}, -1/\sqrt{\ell}\}^{n \times \ell}$ with i.i.d. Rademacher entries.\label{line:camSketch1}
        \State Let $\bv W = \bv A \bv \Pi$.\label{line:camSketch}\Comment{Sketch used to estimate the error at each iteration of binary search.}
        \State Initialize $L = 0$, $R = \lceil\log_{(1+\epsilon/12)}(6|\mathcal F|) \rceil$, and $\bv{\tilde B} = \bv{0}$. \Comment{$\bv{\tilde B}$ tracks the output seen so far with the lowest error.}
        \While{L < R}
            \State Set $z = \lfloor (L + R)/2 \rfloor$.\label{line:binary_search_i}
            \State Let $\bv{\tilde B}_z$ be the output of \Cref{alg:two_sided_simulation} run with inputs $\delta' = \frac{\delta}{4 + 4R}$, \label{line:params} 
            $\epsilon' = \epsilon/24$, and  bound $M_z$.
            \If{$\|\bv{A} \bv{\Pi} - \bv{\tilde B}_z \bv{\Pi}\|_\Fro \leq (3+\epsilon/6) \cdot M_z$}\label{line:camCondition}
                \State $R = z$. 
                \Comment{Either $\OPT < M_z$ or else $\bv{\tilde B}$ is probably nearly optimal, so remove all larger bounds.}
                \State $\displaystyle \bv{\tilde B} = \argmin_{\bv{B} \in \{\bv{\tilde B},\bv{\tilde B}_z\}} \|\bv{A}\bv \Pi - \bv{B} \bv \Pi\|_\Fro$.
            \Else
                \State $L = z+1$.\Comment{With high probability $M_z < \OPT$, so remove all smaller bounds.}
            \EndIf
            \EndWhile
            \State \Return $\bv{\tilde B}$
    \end{algorithmic}
\end{algorithm}

We next analyze \Cref{alg:binary_search}, yielding our main result for finite family matrix approximation, restated below:
\begin{reptheorem}{thm:robust}
    For any finite family $\mathcal{F} \subset \R^{n \times n}$, \Cref{alg:binary_search} with inputs $\epsilon,\delta \in (0,1)$ returns $\bv{\tilde B} \in \mathcal{F}$ satisfying, with probability at least $1-\delta$,
    $$\|\bv{A} - \bv{\tilde B}\|_\Fro \leq (3+\epsilon) \cdot \min_{\bv B \in \mathcal{F}} \|\bv A - \bv B\|_\Fro.$$
    Further, the algorithm uses $O\left(\frac{\sqrt{\log|\F|} \cdot \log(\log|\F|/(\delta \epsilon)) \cdot \log(\log|\F|/\epsilon)}{\epsilon^2}\right) = \tilde O(\sqrt{\log|\F|}/\epsilon^2)$ matvecs with $\bv A$ and $\bv{A}^\transpose$.
\end{reptheorem}
\begin{proof}
    We claim that, with high probability, the following two bounds hold simultaneously for all $i \in \{0, \ldots, R\}$, where $\bv{\tilde B}_i$ is the output of \Cref{alg:two_sided_simulation} run with input $M_i$ and $\delta',\epsilon'$ as specified on \Cref{line:params}:
    \begin{align}
       (1-\epsilon/24) \|\bv A - \bv{\tilde B}_i\|_\Fro \le \|\bv A \bv \Pi - \bv{\tilde B}_i \bv \Pi\|_\Fro \le (1+\epsilon/24) \|\bv A - \bv{\tilde B}_i \|_\Fro. \label{eq:good_solution_if_good_bound_sketch},
    \end{align}
    and moreover,
    \begin{align}
        \text{if }M_i &\ge \OPT, & &\text{then} & \|\bv{A} - \bv{\tilde B}_i\|_\Fro &\leq (3+\epsilon/24) \cdot M_i. \label{eq:good_solution_if_good_bound}
    \end{align}
    Equation \eqref{eq:good_solution_if_good_bound_sketch} holds with probability at least $1-\delta/4$ by applying \Cref{fact:hutch_standard} with failure probability $\delta' = \delta/(4t)$ and sketch size $\ell = c \log(1/\delta')/\epsilon^2$ for a sufficiently large constant $c$, as set in \Cref{line:camSketch1}.
    Equation \eqref{eq:good_solution_if_good_bound} holds with probability at least $1-\delta/4$ by applying  \Cref{thm:main_opt_known} with failure probability $\delta' = \delta/(4(R+1))$ and union bounding over values of $i$.  By a union bound, both equations, along with the bound $\OPT \le M_{\rm init} \le 6 |\F| \cdot \OPT$ claimed on \Cref{line:coarse} hold with probability at least $1-\delta$. Going forward, we will assume all three of these bounds hold. 

    \paragraph{Accuracy.}

    Since $M_0 = \frac{M_{\rm init}}{6|\F|} \le \OPT$ and $M_t = (1+\epsilon/12)^{\lceil\log_{(1+\epsilon/12)}(6|\mathcal F|) \rceil} \cdot \frac{M_{\rm init}}{6|\F|} \ge M_{\rm init} \ge  \OPT$, we have that at least one of the bounds $M_i$ satisfies $\OPT \le M_i \le (1+\epsilon/12) \cdot \OPT$. Call this bound $M_{z^*}$. 
    
    \smallskip
    
    \noindent\textbf{Case 1: $\bv{\tilde B}_{z^*}$ is computed by the algorithm.} In this case, by \eqref{eq:good_solution_if_good_bound}, we have that $\|\bv{A} - \bv{\tilde B}_{z^*}\|_\Fro \le (3+\epsilon/24) \cdot M_{z^*} \le (3+\epsilon/24) \cdot (1+\epsilon/12) \cdot \OPT$. Further, since the output $\bv{\tilde B}$ minimizes $\|\bv A \bv \Pi - \bv{\tilde B} \bv \Pi\|_\Fro$ over all approximations computed by the algorithm, by \eqref{eq:good_solution_if_good_bound_sketch}, we have that
    \begin{align*}
        \|\bv A - \bv{\tilde B}\|_\Fro &\le \frac{1}{(1-\epsilon/24)} \cdot \|\bv A \bv \Pi - \bv{\tilde B} \bv \Pi\|_\Fro \le \frac{1}{(1-\epsilon/24)}\cdot \|\bv A \bv \Pi - \bv{\tilde B}_{z^*} \bv \Pi\|_\Fro \le \frac{(1+\epsilon/24)}{(1-\epsilon/24)} \|\bv A - \bv {\tilde B}_{z^*}\|_\Fro\\ &\le \frac{(1+\epsilon/24) \cdot (3+\epsilon/24)\cdot (1+\epsilon/12) \cdot \OPT}{(1-\epsilon/24)} \le (3+\epsilon) \cdot \OPT.
    \end{align*}
    \smallskip

    \noindent\textbf{Case 2: $\bv{\tilde B}_{z^*}$ is not computed by the algorithm.} In this case, $z^*$ must have been eliminated at some iteration of the while loop. This elimination could not happen at an iteration considering some $z > z^*$, since by \eqref{eq:good_solution_if_good_bound} and \eqref{eq:good_solution_if_good_bound_sketch}, we have for any such $z$, $\|\bv{A}\bv{\Pi} - \bv{\tilde B}_z \bv \Pi\|_\Fro \le (1+\epsilon/24) \cdot (3+\epsilon/24) \cdot M_z \le (3+\epsilon/6) \cdot M_z$, and thus the condition on \cref{line:camCondition} evaluates to true and $z^* < z$ is not eliminated. So, the elimination must happen at some iteration considering $z < z^*$. But for that to happen, \cref{line:camCondition} must evaluate to true at this iteration. So, we must have \[\|\bv A -\bv{\tilde B}_z\|_\Fro \le (3+\epsilon/6) \cdot M_z \le (3+\epsilon/6) \cdot M_{z^*} \le (3+\epsilon/6) \cdot (1+\epsilon/12) \cdot \OPT.\] Following the same argument as in Case 1, this gives that 
    \begin{align*}
        \|\bv A - \bv{\tilde B}\|_\Fro &\le \frac{(1+\epsilon/24)}{(1-\epsilon/24)} \cdot (3+\epsilon/6) \cdot (1+/12\epsilon) \cdot \OPT \le (3+\epsilon) \cdot \OPT.
    \end{align*}
     Thus, in both cases, we output a $(3+\epsilon)$ approximate solution, as desired. This bound holds as long as \eqref{eq:good_solution_if_good_bound_sketch},\eqref{eq:good_solution_if_good_bound} and the claimed bound on $M_{\rm init}$ hold simultaneously, which, as discussed, happens with probability at least $1-\delta$.

    \paragraph{Query Complexity.}  We use $\ell = O \left (\frac{\log(\log|\F|/(\delta\epsilon))}{\epsilon^2} \right)$ matvecs to compute the sketch $\bv W = \bv A \bv \Pi$ in \Cref{line:camSketch}, which is used for error estimation at each iteration.  Further, at each iteration of the while loop, we either set $R = z = \lfloor (L + R)/2 \rfloor$ or $L = z+1  = \lfloor (L + R)/2 \rfloor + 1$. In either case, $R-L$ decreases to at most $\frac{R-L}{2}$ after one iteration. Thus, the while loop runs for at most $1+\log_2(R) = O(\log(\log|\F|/\epsilon))$ iterations. In each iteration, by \Cref{thm:main_opt_known} we use $O\left(\frac{\sqrt{\log|\F|} \cdot \log(\log|\F|/\delta')}{(\epsilon')^2}\right)= O\left(\frac{\sqrt{\log|\F|} \cdot \log(\log|\F|/(\delta \epsilon))}{\epsilon^2}\right)$ matvecs to compute $\bv{\tilde B}_z$. Thus, our overall query complexity is $O\left(\frac{\sqrt{\log|\F|} \cdot \log(\log|\F|/(\delta \epsilon)) \cdot \log(\log|\F|/\epsilon)}{\epsilon^2}\right)$ matvecs, as claimed.\footnote{Note that we could improve this complexity by a $\log(\log|\F|/\epsilon)$ factor by sharing random queries across different calls to \Cref{alg:two_sided_simulation} and union bounding over the success probability of each call. However, for simplicity, we do not analyze this optimization here.}
\end{proof}

\section{Bounds for Infinite Families Via Covering}\label{sec:covering}

We next extend our results on finite family matrix approximation from  \Cref{sec:finite}  to infinite  families, which comprise most structured matrix families of interest in practice. We first state a generic result for any family $\F$ with finite \emph{covering number}, which follows essentially immediately from our main result on finite families, \Cref{thm:robust}. 

We then give an application to \emph{linear matrix families}, which are a common class of infinite families, including diagonal, tridiagonal, and banded matrices, other matrices with fixed sparsity patterns, Toeplitz matrices, Hankel matrices, and more. We show that a near-optimal approximation from a linear family of dimension $q$ can be obtained with $\tilde O(\sqrt{q})$ matvecs, improving on an $O(q)$ bound achievable via one-sided sketching techniques or vector-matrix-vector queries. It is not hard to see that both of these bounds are nearly optimal via the same lower bound approach used for finite families in \Cref{sec:lower_bounds}.

We start with a standard definition for the covering number of a matrix family $\mathcal{F}$, see e.g., \cite{Vershynin2018}.

        \begin{definition}[Covering Number]\label{def:cover}
            Consider a family of matrices $\F \subset \R^{n \times n}$. We say that $\mathcal{F}' \subseteq \mathcal{F}$ is an \emph{$\alpha$-cover} of $\mathcal{F}$ if, for all $\bv{X} \in \mathcal{F}$, there exists a $\bv{Y} \in \mathcal{F}'$ such that $\|\bv{X} - \bv{Y}\|_\Fro \leq \alpha$. The \emph{$\alpha$-covering number} of $\mathcal{F}$ is the minimum cardinality of any $\alpha$-cover of $\mathcal{F}$.
        \end{definition}

        We remark that an $\alpha$-cover is also frequently referred to as an $\alpha$-net in the literature. With this definition in hand, we are ready to give our result for infinite families with bounded covering numbers.

        \begin{theorem}\label{thm:covering-number-approximation}
            Consider any $\alpha > 0$. There is an algorithm that, for any matrix family $\mathcal{F} \subset \mathbb{R}^{n \times n}$ with a given $\alpha$-cover of cardinality $\Gamma_{\alpha}$, and any $\epsilon,\delta \in (0,1)$ uses  $O\left(\frac{\sqrt{\log\Gamma_\alpha} \cdot \log(\log\Gamma_\alpha/(\delta \epsilon)) \cdot \log(\log \Gamma_\alpha/\epsilon)}{\epsilon^2}\right) = \tilde O\left(\frac{\sqrt{\log \Gamma_{\alpha}}}{\epsilon^2} \right)$ matrix-vector product queries with $\bv A$ and $\bv A^\T$ and returns $\bv{\tilde B} \in \mathcal{F}$ satisfying, with probability at least $1 - \delta$,
            \begin{align*}
                \|\bv{A} - \bv{\tilde B}\|_\Fro \leq (3+\epsilon)\cdot \inf_{\bv{B}\in \mathcal{F}}\|\bv{A} - \bv{B}\|_\Fro + (3 + \epsilon) \alpha.
            \end{align*}
        \end{theorem}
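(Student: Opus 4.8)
The plan is to reduce to the finite case in a black-box way: run \Cref{alg:binary_search} on the given $\alpha$-cover $\mathcal{F}'$ (treating it as the finite family in \Cref{thm:robust}), and then control the gap between $\min_{\bv B \in \mathcal F'}\|\bv A - \bv B\|_\Fro$ and $\inf_{\bv B \in \mathcal F}\|\bv A - \bv B\|_\Fro$ with a single triangle inequality. Note first that, by \Cref{def:cover}, an $\alpha$-cover satisfies $\mathcal F' \subseteq \mathcal F$, so any matrix returned by the finite-family algorithm applied to $\mathcal F'$ is automatically a legitimate member of $\mathcal F$; feasibility is therefore not an issue.

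The key estimate is the following. Write $\OPT = \inf_{\bv B\in\mathcal F}\|\bv A - \bv B\|_\Fro$. For any $\eta > 0$ pick $\bv B_\eta \in \mathcal F$ with $\|\bv A - \bv B_\eta\|_\Fro \le \OPT + \eta$; by the covering property there is $\bv B'_\eta \in \mathcal F'$ with $\|\bv B_\eta - \bv B'_\eta\|_\Fro \le \alpha$, so by the triangle inequality $\|\bv A - \bv B'_\eta\|_\Fro \le \OPT + \eta + \alpha$. Letting $\eta \to 0$ gives $\min_{\bv B\in\mathcal F'}\|\bv A - \bv B\|_\Fro \le \OPT + \alpha$. (If the infimum over $\mathcal F$ is attained, the $\eta$ is unnecessary and this is a one-line triangle inequality.) Now invoke \Cref{thm:robust} with the finite family $\mathcal F'$ of size $\Gamma_\alpha$ and the same parameters $\epsilon,\delta$: with probability at least $1-\delta$ it returns $\bv{\tilde B} \in \mathcal F' \subseteq \mathcal F$ with
$$\|\bv A - \bv{\tilde B}\|_\Fro \le (3+\epsilon)\min_{\bv B\in\mathcal F'}\|\bv A - \bv B\|_\Fro \le (3+\epsilon)(\OPT + \alpha) = (3+\epsilon)\OPT + (3+\epsilon)\alpha,$$
which is exactly the claimed bound. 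The query complexity is inherited verbatim from \Cref{thm:robust} with $|\F| = \Gamma_\alpha$, namely $O\!\left(\frac{\sqrt{\log\Gamma_\alpha}\cdot\log(\log\Gamma_\alpha/(\delta\epsilon))\cdot\log(\log\Gamma_\alpha/\epsilon)}{\epsilon^2}\right) = \tilde O(\sqrt{\log\Gamma_\alpha}/\epsilon^2)$ matvecs with $\bv A$ and $\bv A^\T$.

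There is no genuine obstacle here — the proof is bookkeeping. The only points requiring a word of care are that $\inf_{\bv B\in\mathcal F}$ may fail to be attained (handled by the $\eta$-argument) and that the returned matrix must lie in $\mathcal F$ rather than merely near it (guaranteed since \Cref{def:cover} takes the cover to be a subset of $\mathcal F$). It is worth remarking, as in \Cref{cor:linear}, that the additive $(3+\epsilon)\alpha$ term is precisely the discretization price, and that the reduction is agnostic to how the $\alpha$-cover is obtained; \Cref{cor:linear} then follows by plugging in a standard covering-number bound for a bounded-coefficient ball inside a $q$-dimensional linear matrix family.
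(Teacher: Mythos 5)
Your proof is correct and follows essentially the same route as the paper's: reduce to the finite $\alpha$-cover $\mathcal{F}'$, invoke \Cref{thm:robust}, and bound $\min_{\bv B\in\mathcal F'}\|\bv A - \bv B\|_\Fro \leq \inf_{\bv B\in\mathcal F}\|\bv A - \bv B\|_\Fro + \alpha$ via a triangle inequality with a limiting argument ($\eta\to 0$, the paper's $t\to 0$) to handle the possibly unattained infimum. The only cosmetic difference is that you pass to the limit before invoking \Cref{thm:robust} while the paper does so afterward; this is immaterial.
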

Note that the bound of \Cref{thm:covering-number-approximation} is stated with $\inf_{\bv B \in \mathcal{F}} \|\bv A - \bv B\|_\Fro$ instead of $\min_{\bv B \in \mathcal{F}} \|\bv A - \bv B\|_\Fro$, since a minimizer may not exist for general infinite families. Additionally, note that the theorem assumes that an $\alpha$-cover of $\mathcal{F}$ is given as input. This is required for the algorithm to be constructive. For typical families encountered in practice, efficient coverings can be constructed using standard discretization techniques. 
        \begin{proof}
        Let $\mathcal{F}'$ be the given $\alpha$-cover of $\mathcal{F}$ (\Cref{def:cover}) with $|\mathcal{F}'| = \Gamma_{\alpha}$. 
        For $t > 0$, let $\bv{B}_t \in \mathcal{F}$ be any matrix satisfying $\|\bv{A} - \bv{B}_t\|_\Fro \leq \inf\limits_{\bv{B} \in \mathcal{F}}\|\bv{A} - \bv{B}\|_\Fro + t$ and let $\bv{B}_t' \in \mathcal{F}'$ be so that $\|\bv{B}_t' -\bv{B}_t\|_\Fro \leq \alpha$. 
        By \Cref{thm:robust}, there is an algorithm that issues $O\left(\frac{\sqrt{\log\Gamma_\alpha} \cdot \log(\log\Gamma_\alpha/(\delta \epsilon)) \cdot \log(\log \Gamma_\alpha/\epsilon)}{\epsilon^2}\right)$ matvec queries and returns $\bv{\tilde B}$ satisfying $\|\bv{A} - \bv{\tilde {B}}\|_\Fro \leq (3+\epsilon)\min\limits_{\bv{B} \in \mathcal{F}'}\|\bv{A} - \bv{B}\|_\Fro$ with probability at least $1-\delta$. Applying the triangle inequality yields
        \begin{align*}
            \|\bv{A} - \bv{\tilde B}\|_\Fro \leq (3+\epsilon)\min\limits_{\bv{B} \in \mathcal{F}'}\|\bv{A} - \bv{B}\|_\Fro &\leq (3+\epsilon)\|\bv{A} - \bv{B}_t'\|_\Fro\\
            & \leq (3+\epsilon)\|\bv{A} - \bv{B}_t\|_\Fro + (3+\epsilon)\|\bv{B}_t' - \bv{B}_t\|_\Fro\\
            & \leq (3+\epsilon)\inf\limits_{\bv{B} \in \F} \|\bv{A} - \bv{B}\|_\Fro + (3+\epsilon)t + (3+\epsilon) \alpha.
        \end{align*}
        Letting $t \to 0$ yields the desired result. 
        \end{proof}

        We next apply \Cref{thm:covering-number-approximation} to show that a near-optimal approximation from any linear matrix family of dimension $q$ can be computed using $\tilde O(\sqrt{q})$ matvec queries. We first formally define a linear matrix family.

\begin{definition}[Linear Matrix Family]\label{def:linear}
    $\mathcal{L} \subseteq \mathbb{R}^{n \times n}$ 
    is a linear matrix family of dimension $q$ if there exist $q$ linearly independent matrices $\bv P_1,\ldots,\bv P_q \in \mathbb{R}^{n \times n}$ such that $\mathcal{L} = \{\bv B : \bv B = \sum_{i=1}^q c_i \bv P_i\text{ for some }c_1,\ldots,c_q \in \mathbb{R}\}$. I.e., $\mathcal{L}$ is the set of all matrices that can be written as linear combinations of $\bv P_1,\ldots, \bv P_q$.
\end{definition}

        \begin{repcorollary}{cor:linear}[Linear Family Approximation]
            For any linear matrix family, $\mathcal{L} \subseteq \mathbb{R}^{n \times n}$, with dimension $q$ and any $\alpha,\epsilon,\delta \in (0,1)$, it there is an algorithm that outputs $\bv{\tilde{B}} \in \mathcal{L}$ satisfying, with probability $\ge 1-\delta$,
            \begin{align*}
                \|\bv{A} - \bv{\tilde{B}}\|_\Fro \leq (3+\epsilon)\min_{\bv{B}\in \mathcal{L}}\|\bv{A} - \bv{B}\|_\Fro + \alpha\|\bv{A}\|_\Fro,
            \end{align*}
             using $O\left(\frac{\sqrt{q \cdot \log(1/\alpha)} \cdot \log(\log(q \log(1/\alpha)/(\delta \epsilon)) \cdot \log(\log(q \log(1/\alpha)/\epsilon)}{\epsilon^2}\right) = \tilde O \left (\frac{\sqrt{q \log(1/\alpha)}}{\epsilon^2} \right )$ matvec queries with $\bv A$ and $\bv A^\T$.
        \end{repcorollary}
        Note that, unlike \Cref{thm:robust} for finite families, \Cref{cor:linear} does not give a pure relative error guarantee, but also has additive error $\alpha \|\bv A\|_\Fro$. However, the leading dependence on $\alpha$ in the query complexity is just $O(\sqrt{\log(1/\alpha)})$. Thus, $\alpha$ can be set very small, and the additive error can be made negligible, comparable to round-off error if $\bv{\tilde B}$ is represented in finite precision. Nevertheless, achieving a pure relative error guarantee is an interesting open problem.

        \begin{proof}
        Since $\mathcal{L}$ is unbounded, it has an unbounded covering number, and thus we cannot apply \Cref{thm:covering-number-approximation} directly. Instead, we will apply the theorem to $\mathcal{L}$ intersected with a ball of radius $O(\|\bv{A}\|_\Fro)$ around the origin. It is not hard to argue that this ball must contain the optimal approximation to $\bv A$ in $\mathcal{L}$, and thus we can restrict our search to this bounded family, which we will argue has bounded covering number, allowing us to apply \Cref{thm:covering-number-approximation}.

        Let $\bv{B^*} = \argmin_{\bv{B} \in \mathcal{L}} \|\bv{A} - \bv{B}\|_\Fro$. Since $\mathcal L$ is closed and convex, such a minimizer exists and is unique. Further, since $\bv{0} \in \mathcal{L}$, $
            \|\bv{A} - \bv{B^*}\|_\Fro \leq \|\bv{A} - \bm{0}\|_\Fro = \|\bv{A}\|_\Fro$. Thus, by triangle inequality, $\|\bv B^*\|_\Fro \le 2 \|\bv A \|_\Fro$ -- i.e., $\bv B^*$ is contained $\mathcal{L} \cap \mathcal{B}(2\|\bv A\|_\Fro)$, where $\mathcal{B}(2\|\bv A\|_\Fro)$ is the Frobenius norm ball of radius $2\|\bv A\|_\Fro$ centered at the origin.

        We can use sketching to estimate $\|\bv{A}\|_{\Fro}$ and thus find a small ball containing $\mathcal{B}(2\|\bv A\|_\Fro)$. Let $\bv{\Pi} \in \left\{\frac{-1}{\sqrt{\ell}}, \frac{1}{\sqrt{\ell}}\right\}^{n \times \ell}$ have independent Rademacher entries and $\ell = O(\log(1/\delta))$ columns. By \cref{fact:hutch_standard},  with probability at least $1 - \delta/2$, \begin{align}\label{eq:camAPi}\frac12\|\bv{A}\|_\Fro \leq \|\bv{A\Pi}\|_\Fro \leq \frac32 \|\bv{A}\|_\Fro.
        \end{align}
        Thus, $4 \|\bv A \bv \Pi \|_\Fro \ge 2\|\bv A\|_\Fro$, and
        we have $\bv B^* \in \mathcal G$ where $\mathcal G = \mathcal{L} \cap \mathcal{B}(4\|\bv{A\Pi}\|_\Fro)$. 

        Thus, it suffices to find a near-optimal approximation to $\bv A$ from $\mathcal G$. We do so using \Cref{thm:covering-number-approximation}, after arguing that $\mathcal G$ has bounded covering number. Let $\alpha' = \frac16\alpha\|\bv{A\Pi}\|_\Fro$. It is well known (c.f. \cite[Corollary 4.2.13]{Vershynin2018}) that, since $\mathcal G$ is nothing more than a $q$-dimensional ball, we can construct an $\alpha'$-cover $\mathcal G'$ of $\mathcal G$ whose size is bounded by\footnote{One can construct a cover for $\mathcal{G}$ by first orthonormalizing $\bv{P}_1,\ldots,\bv{P}_q$, in which case $\mathcal{G}=\{c_1\bv{P}_1 + \cdots + c_q \bv{P}_q: \left(c_1^2 + \cdots + c_q^2\right)^{1/2} \leq 4\|\bv{A}\|_\Fro\}$. A cover for $\mathcal{G}$ then follows from a cover of the Euclidean ball in $\mathbb{R}^q$ with radius $4\|\bv{A}\|_{\Fro}$.}
        \begin{equation}\label{eq:covernumber}
            |\mathcal G'| \leq \left(\frac{2 \cdot 4\|\bv{A\Pi}\|_\Fro}{\alpha'} + 1\right)^q = \left(\frac{48}{\alpha} + 1\right)^q.
        \end{equation}
        Thus, by \Cref{thm:covering-number-approximation}, we can issue $O\left(\frac{\sqrt{\log |\mathcal G'|} \cdot \log(\log |\mathcal G'|/(\delta \epsilon)) \cdot \log(\log |\mathcal G'|/\epsilon)}{\epsilon^2}\right)$ matvecs to obtain an approximation $\bv{\tilde B} \in \mathcal G$ such that, with probability at least $1-\delta/2$,
        \begin{align*}
            \|\bv A - \bv{\tilde B}\|_\Fro
            &\leq (3+\epsilon)\min_{\bv{B}\in \mathcal{G}}\|\bv{A}-\bv{B}\|_\Fro + (3 + \epsilon) \alpha'\\
            &= (3+\epsilon)\min_{\bv{B}\in \mathcal{L}}\|\bv{A} - \bv{B}\|_\Fro + (3 + \epsilon) \frac2{4 \cdot 3} \alpha \|\bv{A\Pi}\|_\Fro\\
            &\leq (3+\epsilon)\min_{\bv{B}\in \mathcal{L}}\|\bv{A} - \bv{B}\|_\Fro + \frac23 \alpha \|\bv{A\Pi}\|_\Fro
        \end{align*}
        Recalling from \eqref{eq:camAPi} that $\|\bv{A\Pi}\|_\Fro \leq \frac32 \|\bv{A}\|_\Fro$ and union bounding over the failure probability of \eqref{eq:camAPi} and \Cref{thm:covering-number-approximation}, the claimed bound on $\|\bv{A} - \bv{\tilde B}\|_\Fro$ holds with probability at least $1 - \delta$. The query complexity bound follows by observing that $\log|\mathcal G'| = O(q \cdot \log(1/\alpha))$.
    \end{proof}

\section{Lower bound}
\label{sec:lower_bounds}

We now argue that the dependence on $\log |\mathcal{F}|$ in \Cref{thm:robust} is essentially tight: there exist finite families where $\Omega(\sqrt{\log |\mathcal{F}|})$ queries are required to solve \Cref{prob:intro} even to a coarse approximation factor. We begin by stating a more detailed version of \Cref{thm:two_sided_lower_bound} below:
\begin{reptheorem}{thm:two_sided_lower_bound}
    For any $n > 0$, any approximation factor $\gamma \ge 1.1$, and any $k \leq c'n^2 \log \gamma$ for a universal constant $c'$, there 
    is a finite family of matrices $\F \subseteq \R^{n\times n}$ with $\log |\F| = O(k)$ and a distribution over  $\bv A \in \R^{n \times n}$ such that, if $\bv{\tilde B}$ is the output of any algorithm that makes fewer than $\frac{c\sqrt{\log|\F|}}{\sqrt{\log \gamma}}$ matvec queries to $\bv A$ and $\bv A^\T$ for a universal constant $c > 0$,
\[ \|\bv A- \bv{\tilde B}\|_\Fro \geq \gamma \cdot \min_{\bv B \in \F} \|\bv A - \bv B\|_\Fro, \]
 with probability $\geq 9/10$ (over the randomness in $\bv A$ and possibly the algorithm).
In particular, $\Omega(\sqrt{\log |\F|})$ matvec queries are necessary to achieve any constant approximation factor $\gamma$ with probability $\ge 1/10$.
\end{reptheorem}
Note that the dependence on $1/\sqrt{\log \gamma}$ in \Cref{thm:two_sided_lower_bound} is optimal: in \Cref{clm:hutch_coarse}, we show how to find a $\gamma = O(|\F|)$-approximate solution using just $O(1) = O(\sqrt{\log |\F|/\log(\gamma)})$ matvec queries.

Before proving \cref{thm:two_sided_lower_bound}, we first present several technical lemmas that will be used in the argument. As discussed in \Cref{sec:contributions}, the idea behind the lower bound is simple: we let $\mathcal F$  be a sufficiently fine net over all matrices that are non-zero on the top $z\times z$ block of $\bv{A}$ and zeros elsewhere. We will have $\log |\F| = O(z^2)$. To solve \Cref{prob:intro}, we essentially must recover all of $\bv A$'s top $z\times z$ block, which requires $\Omega(z) = \Omega(\sqrt{\log|\F|})$ queries. 

To formalize this argument, we need to chose a distribution over inputs $\bv A$ that is sufficiently anti-concentrated so that its best approximation in $\F$ is hard to learn. There are many possible options here -- we will let $\bv A$ be a Wishart matrix (i.e., $\bv A = \bv G \bv G^\T$ where $\bv G$ has i.i.d. Gaussian entries), since the distribution of these matrices after querying with a set of possibly adaptively chosen matvecs is well understood. In particular, we will use the follow result: 

\begin{lemma}[{\protect\textit{Lemma 13} in \cite{BravermanHazanSimchowitz:2020}}]\label{lem:wishart_query}
    Let $\bv A= \bv G \bv G^\T$, where $\bv{G} \in \mathbb{R}^{n \times n}$ has i.i.d. $\mathcal{N}(0, 1)$ entries. For any $m \leq n$, consider any set of adaptively chosen queries $\bv{q}_1, \ldots, \bv{q}_m \in \R^{n}$, where $\bv{q}_i$ is drawn from a distribution $\mathcal{D}_i$ that is a function of $\bv{q}_1, \ldots, \bv{q}_{i-1}$ and the responses $\bv{A} \bv q_1\ldots,\bv A \bv q_{i-1}$. We can construct a matrix $\bv{V} \in \mathbb{R}^{n\times n-m}$ with orthonormal columns and a matrix $\bv{\Delta}\in \mathbb{R}^{n\times n}$, both of which are functions only of $\bv{q}_1,\ldots,\bv q_m$ and $\bv A \bv q_1,\ldots,\bv A \bv q_m$,
    such that
    \[
    \bv A= \bv{\Delta}+\bv{V}\bv{W}\bv{V}^\T,
    \]
    where 
    $\bv{W}=\widehat{\bv{G}}\widehat{\bv{G}}^\T$ for $\widehat{\bv{G}} \in \mathbb{R}^{(n-m)\times(n-m)}$ with  i.i.d. $ \mathcal N(0,1)$ entries independent of $\bv{q}_1,\ldots,\bv q_m$ and $\bv A \bv q_1,\ldots,\bv A \bv q_m$.
\end{lemma}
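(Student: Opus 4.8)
I would prove the lemma by induction on the number of queries $m$, carrying the following invariant: after $i$ queries there are a matrix $\bv V_i \in \R^{n\times(n-i)}$ with orthonormal columns and a matrix $\bv\Delta_i\in\R^{n\times n}$, both measurable with respect to the transcript $\bv q_1,\ldots,\bv q_i,\bv A\bv q_1,\ldots,\bv A\bv q_i$, such that $\bv A = \bv\Delta_i + \bv V_i\bv W_i\bv V_i^\T$ with $\bv W_i = \widehat{\bv G}_i\widehat{\bv G}_i^\T$, where $\widehat{\bv G}_i\in\R^{(n-i)\times(n-i)}$ has i.i.d.\ $\mathcal N(0,1)$ entries that are independent of that transcript. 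The base case $i=0$ is immediate ($\bv V_0=\bv I_n$, $\bv\Delta_0=\bv 0$, $\widehat{\bv G}_0=\bv G$), and $i=m$ gives the lemma. The content is the inductive step.

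\textbf{Reducing the step to a single-Wishart fact.} Given the new, adaptively chosen query $\bv q_{i+1}$, set $\bv p := \bv V_i^\T\bv q_{i+1}$. Then $\bv A\bv q_{i+1} = \bv\Delta_i\bv q_{i+1} + \bv V_i\bv W_i\bv p$, so, modulo transcript-computable quantities, the response reveals exactly $\bv W_i\bv p = \bv V_i^\T(\bv A\bv q_{i+1}-\bv\Delta_i\bv q_{i+1})$ (using $\bv V_i^\T\bv V_i=\bv I$). Hence the step follows from: \emph{if $\bv W=\bv H\bv H^\T$ with $\bv H\in\R^{k\times k}$ i.i.d.\ Gaussian and $\bv p\in\R^k$ is independent of $\bv H$, then conditioned on $(\bv p,\bv W\bv p)$ one can write $\bv W=\bv\Delta'+\bv V'\bv W'\bv V'^\T$ with $\bv V'\in\R^{k\times(k-1)}$ having orthonormal columns spanning $\bv p^\perp$, with $\bv\Delta',\bv V'$ functions of $(\bv p,\bv W\bv p)$, and with $\bv W'=\widehat{\bv G}'\widehat{\bv G}'^\T$ a $(k-1)$-dimensional Wishart independent of $(\bv p,\bv W\bv p)$.} Granting this, I set $\bv V_{i+1}=\bv V_i\bv V'$, $\bv\Delta_{i+1}=\bv\Delta_i+\bv V_i\bv\Delta'\bv V_i^\T$, $\widehat{\bv G}_{i+1}=\widehat{\bv G}'$; a product of matrices with orthonormal columns has orthonormal columns, and the measurability claims hold since $\bv p$ and $\bv W_i\bv p$ are recoverable from the transcript. (If $\bv p=\bv 0$ the response carries no new information and one simply drops an arbitrary extra direction of $\bv V_i$.)

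\textbf{Proving the single-Wishart fact.} First reduce to $\bv p=\bv e_1$ by rotational invariance: choose orthogonal $\bv O$ (a function of $\bv p$) with $\bv O\bv p=\|\bv p\|\bv e_1$; then $\widetilde{\bv W}:=\bv O\bv W\bv O^\T=(\bv O\bv H)(\bv O\bv H)^\T$ is again a standard Wishart and $\bv W\bv p=\|\bv p\|\,\bv O^\T\widetilde{\bv W}\bv e_1$, so it suffices to treat $\widetilde{\bv W}$ conditioned on $\widetilde{\bv W}\bv e_1$ and conjugate back by $\bv O$. Now split $\bv H=\bigl(\begin{smallmatrix}\bv r^\T\\\bv H_\downarrow\end{smallmatrix}\bigr)$ into its first row $\bv r\in\R^k$ and remaining rows $\bv H_\downarrow\in\R^{(k-1)\times k}$, which are independent; since $\bv W\bv e_1=(\|\bv r\|^2,\ \bv H_\downarrow\bv r)$, conditioning on $\bv W\bv e_1$ equals conditioning on $(\|\bv r\|^2,\bv H_\downarrow\bv r)$. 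Let $\bv U\in\R^{k\times(k-1)}$ have orthonormal columns spanning $\bv r^\perp$; then $\bv H_\downarrow[\bv r/\|\bv r\|\mid\bv U]$ is still i.i.d.\ Gaussian, so $\widehat{\bv G}':=\bv H_\downarrow\bv U$ has i.i.d.\ $\mathcal N(0,1)$ entries and is independent of $\bv H_\downarrow\bv r$ and of $\bv r$, and $\bv H_\downarrow=\|\bv r\|^{-2}(\bv H_\downarrow\bv r)\bv r^\T+\widehat{\bv G}'\bv U^\T$. Expanding $\bv W=\bv H\bv H^\T$ in $2\times2$ blocks, the $(1,1)$ block is $\|\bv r\|^2$, the off-diagonal block is $\bv H_\downarrow\bv r$, and — with the cross terms killed by $\bv U^\T\bv r=\bv 0$ — the trailing $(k-1)\times(k-1)$ block equals $\|\bv r\|^{-2}(\bv H_\downarrow\bv r)(\bv H_\downarrow\bv r)^\T+\widehat{\bv G}'\widehat{\bv G}'^\T$. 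Collecting everything determined by $(\|\bv r\|^2,\bv H_\downarrow\bv r)$ into $\bv\Delta'$ and writing the leftover as $\bv V'\widehat{\bv G}'\widehat{\bv G}'^\T\bv V'^\T$ with $\bv V'=\bigl(\begin{smallmatrix}\bv 0^\T\\\bv I_{k-1}\end{smallmatrix}\bigr)$ proves the reduced case; conjugating by $\bv O$ (which maps $\bv e_1^\perp$ to $\bv p^\perp$ and preserves orthonormality of columns) gives the general case.

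\textbf{Main obstacle.} The linear algebra above is routine; the delicate part is the adaptivity bookkeeping, namely that the freshly produced $\widehat{\bv G}_{i+1}=\widehat{\bv G}'$ is independent not merely of $(\bv p,\bv W_i\bv p)$ but of the \emph{entire enlarged transcript} $(\bv q_1,\ldots,\bv q_{i+1},\bv A\bv q_1,\ldots,\bv A\bv q_{i+1})$ — even though the rotation $\bv O$, the basis $\bv U$, and hence the very definition of $\widehat{\bv G}'$ depend on the adaptively chosen $\bv q_{i+1}$. The way I would close this is to condition on the old transcript together with $\bv q_{i+1}$: given that conditioning, $\widehat{\bv G}_i$ is still standard Gaussian (inductive hypothesis, since $\bv q_{i+1}$ only adds fresh independent coins), $\widehat{\bv G}'$ is a fixed measurable function of $\widehat{\bv G}_i$ whose conditional law is standard Gaussian and does not depend on the value of $\bv W_i\bv p$, and $\bv A\bv q_{i+1}$ is a deterministic function of $\bv W_i\bv p$; hence $\widehat{\bv G}'$ is independent of $\bv A\bv q_{i+1}$ as well, which closes the induction.
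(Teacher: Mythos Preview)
The paper does not give its own proof of this lemma; it is quoted verbatim as Lemma~13 of \cite{BravermanHazanSimchowitz:2020} and used as a black box. Your inductive argument via the single-query Wishart reduction is correct and is essentially the standard way this result is established in the cited reference: peel off one query at a time, use rotational invariance of the Gaussian to reduce to $\bv p=\bv e_1$, and split off the Schur-complement-like piece $\|\bv r\|^{-2}(\bv H_\downarrow\bv r)(\bv H_\downarrow\bv r)^\T$ from $\bv H_\downarrow\bv H_\downarrow^\T$ to expose a fresh $(k-1)$-dimensional Wishart. Your handling of the adaptivity bookkeeping---conditioning on the old transcript together with $\bv q_{i+1}$ so that $\bv p$ becomes deterministic while $\widehat{\bv G}_i$ stays standard Gaussian---is the right way to close the independence claim. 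One tiny technical point you might make explicit: the orthonormal basis $\bv U$ for $\bv r^\perp$ must be chosen as a \emph{measurable} function of $\bv r$, but this is routine and does not affect the argument since $\bv U$ only enters the definition of the hidden $\widehat{\bv G}'$, not of the transcript-measurable $\bv\Delta'$ or $\bv V'$.
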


We will apply the following bound to argue that after querying with a sufficiently small set of matvecs, $\bv A$ will still have fairly anti-concentrated entries, so that finding its best approximation in $\F$ is difficult. In particular, we use that, since the entries of $\bv W$ from \Cref{lem:wishart_query} are the inner products of the Gaussian rows of $\widehat{\bv G}$, they are anti-concentrated.

\begin{lemma}\label{lem:gausian-inner-prod} There exists a constant $c >0$ such that if $\bv x,\bv y$ each have i.i.d. $\mathcal N(0, 1)$ entries, and either $\bv x = \bv y$ or $\bv x$ and $\bv y$ are independent,
    then for any $\epsilon>0$ and $t\in\R$, if $n > c/ \epsilon^2$,
        \[
        \Pr\Bigl[ \big| \bv{x}^\T\bv{y} - t \big| < \epsilon \sqrt{n} \Bigr] < \epsilon.
    \]
\end{lemma}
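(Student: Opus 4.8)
The plan is to split into the two cases the lemma allows, since the law of $\bv x^\T \bv y$ is genuinely different in each. When $\bv x = \bv y$, $\bv x^\T \bv y = \|\bv x\|_2^2$ is a $\chi^2_n$ variable — a sum of $n$ i.i.d.\ squared Gaussians, concentrated near $n$. When $\bv x$ and $\bv y$ are independent, conditioning on $\bv y$ makes $\bv x^\T\bv y$ a centered Gaussian with variance $\|\bv y\|_2^2 \approx n$. In both cases the inequality is equivalent to showing that the density of $\bv x^\T\bv y$ is bounded — everywhere, or everywhere on a high-probability region — by something strictly below $\tfrac{1}{2\sqrt n}$, because then the interval $(t-\epsilon\sqrt n,\, t+\epsilon\sqrt n)$ of length $2\epsilon\sqrt n$ carries probability strictly less than $\epsilon$. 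Before the main work I would dispose of the regime $\epsilon \ge 1$: there $\bv x^\T\bv y$ is a continuous random variable with unbounded support, so every bounded interval has probability $< 1 \le \epsilon$, with no constraint on $n$. For $\epsilon < 1$ the hypothesis $n > c/\epsilon^2$ forces $n$ large, which is all we will use.

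For $\bv x = \bv y$, I would write the $\chi^2_n$ density $f(u)\propto u^{n/2-1}e^{-u/2}$, observe that $\log f$ is maximized at the mode $u^\star = n-2$, and use Stirling's formula for $\Gamma(n/2)$ to obtain $\sup_u f(u) = f(n-2) = \tfrac{1+o(1)}{2\sqrt{\pi n}}$. Since $\tfrac{1}{2\sqrt\pi} \approx 0.28 < \tfrac12$, for all $n$ past an absolute constant threshold (which we absorb into $c$) this supremum is strictly below $\tfrac{1}{2\sqrt n}$, and then $\Pr[\,|\bv x^\T\bv y - t| < \epsilon\sqrt n\,] \le 2\epsilon\sqrt n \cdot \sup_u f(u) < \epsilon$ closes the case.

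For independent $\bv x, \bv y$, I would condition on $\bv y$, so $\bv x^\T\bv y \sim \mathcal N(0,\|\bv y\|_2^2)$ with density at most $\tfrac{1}{\sqrt{2\pi}\,\|\bv y\|_2}$. Fix the cutoff $\beta = \tfrac{9}{10}$. On the event $\|\bv y\|_2 \ge \tfrac{9}{10}\sqrt n$, the conditional probability of the target event is at most $\tfrac{2\epsilon\sqrt n}{\sqrt{2\pi}\cdot(9/10)\sqrt n} = \sqrt{2/\pi}\cdot\tfrac{\epsilon}{9/10} < 0.89\,\epsilon$; on the complementary event I bound the conditional probability by $1$ and control $\Pr[\|\bv y\|_2^2 < \tfrac{81}{100}n]$ via the standard $\chi^2_n$ lower-tail inequality (Laurent--Massart), which is $e^{-\Omega(n)}$. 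Taking expectations over $\bv y$ and using that $e^{-\Omega(n)} \le 0.1\,\epsilon$ once $n \gtrsim \log(1/\epsilon)$ — in particular once $n > c/\epsilon^2$ for $c$ large, since $\log(1/\epsilon)\le 1/\epsilon^2$ for $\epsilon < 1$ — gives $\Pr[\,|\bv x^\T\bv y - t| < \epsilon\sqrt n\,] < 0.89\,\epsilon + 0.1\,\epsilon < \epsilon$.

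The main obstacle is bookkeeping the constants, not any conceptual difficulty. A bound of the shape $\Pr[\cdots] \le C\epsilon$ with $C > 1$ cannot be rescaled into $\Pr[\cdots] < \epsilon$ (substituting $\epsilon/C$ for $\epsilon$ shrinks the window, which is the wrong direction), so the leading coefficient on the right-hand side has to come out genuinely below $1$ in each case. In Case 1 this rests on the sharp Stirling asymptotic of the $\chi^2_n$ peak density together with the numerical fact $\tfrac{1}{2\sqrt\pi} < \tfrac12$; in Case 2 it rests on taking $\beta$ close enough to $1$ that $\sqrt{2/\pi}/\beta < 1$ while keeping $\beta^2 n$ safely inside the $\chi^2_n$ bulk so the tail term is negligible. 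I would also remark that a black-box polynomial anti-concentration bound (Carbery--Wright) only yields $O(\sqrt\epsilon)$ here, since $\bv x^\T\bv y$ is a degree-$2$ polynomial of Gaussians; the improvement to the $O(\epsilon)$ rate relies on the special structure — norm-squared in one case, a bilinear form with an independent Gaussian factor in the other.
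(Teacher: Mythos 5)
Your proof is correct and carries out in full the argument the paper only gestures at. The paper's ``proof'' of this lemma is a two-sentence pointer (``standard anti-concentration bounds for Gaussians and concentration bounds for Chi-Squared random variables,'' plus a citation to prior work), and your case split --- a sup-density bound for $\chi^2_n$ when $\bv x = \bv y$, and Gaussian anti-concentration conditioned on $\bv y$ together with a $\chi^2_n$ lower-tail estimate when $\bv x$ and $\bv y$ are independent --- is exactly the argument that hint is describing, with the constants tracked carefully enough to land strictly below $\epsilon$.
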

\Cref{lem:gausian-inner-prod} can be proven using standard anti-concentration bounds for Gaussians and concenration bounds for Chi-Squared random variables.
It also follows as a special case of the more general Lemma 1 of \cite{AmselChenHalikias:2024} by setting $\bv u$ and $\bv v$ in that lemma to be standard basis vectors. 

We are now ready to prove \Cref{thm:two_sided_lower_bound}.

\begin{proof}[Proof of \Cref{thm:two_sided_lower_bound}] 
We focus on proving the result for $k^2 = c'n^2\log n$, as this case easily generalizes to smaller values of $k^2$ by considering the same hard instance on the top $z \times z$ principal submatrix of $\bv A$ for some $z \le n$, and letting all other entries outside this submatrix equal $0$.

Let $\alpha= \frac{n^{1.5}}{160 \cdot \gamma}$.
Let $\bv A= \bv G \bv G^\T$, where each entry of $\bv{G} \in \mathbb{R}^{n\times n}$ is drawn i.i.d. from $\mathcal{N}(0, 1)$ and let $\F$ be an $\alpha$-cover (\cref{def:cover}) of the ball $\mathcal{B}(c_1 n^{1.5}) = \{\bv X \in \R^{n\times n} : \|\bv X\|_{\Fro} \le c_1 n^{1.5}\}$ for a constant $c_1$ to be fixed shortly. It is well known (c.f. \cite[Corollary 4.2.13]{Vershynin2018}) that we can construct $\mathcal F$ with $|\mathcal F| = \left (\frac{2 c_1n^{1.5}}{\alpha} +1\right )^{n^2}$ and thus $\sqrt{\log|\F|} = O(n \cdot \sqrt{\log \gamma})$. 

\smallskip

\noindent \textbf{Upper bound on the optimal error.}
An elementary computation reveals that $\E[\|\bv G\bv G^\T\|_\Fro^2] = 2n^3 + n^2 $.
Hence, by Markov's inequality, for some constant $c_1$, $\|\bv G\bv G^\T\|_\Fro \leq c_1n^{1.5}$ with probability at least $19/20$; i.e. $\bv A=\bv G\bv G^\T$ lies within the Frobenius ball $\mathcal{B}(c_1 n^{1.5})$.
Assuming this event holds, then by the definition of an $\alpha$-cover:
\begin{align}\label{eq:upper2}
    \min_{\bv B \in \F} \|\bv A - \bv B\|_\Fro \le \alpha = \frac{n^{1.5}}{160 \cdot \gamma}.
\end{align}

\smallskip

\noindent \textbf{Lower bound on algorithm's error.}
By \Cref{lem:wishart_query}, after $m$ adaptive queries, we can construct $\bv{\Delta}\in \mathbb{R}^{n\times n}$ and $\bv{V} \in \mathbb{R}^{n\times n-m}$ with orthonormal columns such that
    $\bv A= \bv{\Delta}+\bv{V}\bv{W}\bv{V}^\T,$
    where $\bv{W}=\widehat{\bv{G}}\widehat{\bv{G}}^\T$ with each entry of $\widehat{\bv{G}} \in \mathbb{R}^{(n-m) \times (n-m)}$ is i.i.d. $ \mathcal N(0,1)$ and is independent of $\bv{q}_1,\ldots,\bv q_m$ and $\bv A \bv q_1,\ldots,\bv A \bv q_m$. Thus, $\bv W$ is independent of the output $\bv{\tilde B}$ of the algorithm.
    
For any fixed $\bv{\tilde B} \in \R^{n \times n}$, let $\bv{T}=\bv{\tilde B}-\bv{\Delta}$. Then, $\|\bv A - \bv{\tilde B}\|_\Fro = \| \bv{V}\bv{W}\bv{V}^\T -\bv T \|_\Fro$.
Let $\bv Z= \bv V^\T \bv T \bv V $. We have:
\begin{align*}
\|\bv A - \bv{\tilde B}\|_\Fro =
    \|\bv{V}\bv{W}\bv{V}^\T -\bv T \|_\Fro \geq \| \bv{V}\bv{W}\bv{V}^\T -\bv V \bv V^\T \bv T \bv V  \bv V^\T\|_\Fro = \| \bv{V}\bv{W}\bv{V}^\T -\bv V \bv Z \bv V^\T\|_\Fro
    = \|\bv W - \bv Z \|_\Fro.
\end{align*}
The inequality uses that $\bv V$ has orthonormal columns. It follows that:
\begin{align}\label{eq:camTiredBound}
    \Pr \left [\|\bv A - \bv{\tilde B}\|_\Fro \geq \frac{n^{1.5}}{160} \right ] \geq \Pr \left [\| \bv{W} -\bv Z \|_\Fro \geq \frac{n^{1.5}}{160} \right].
\end{align}
For any constant $c_2 > 0$, we can choose $n > 2 c_2$, and since by assumption, $m \le \frac{c \sqrt{\log |\F|}}{\sqrt{\log \gamma}} \le \frac{n}{2}$, if we set $c$ small enough, we will have $n-m > c_2$. Setting $c_2$ large enough we can apply \cref{lem:gausian-inner-prod} with $\epsilon = 1/40$ to each entry $W_{ij}$ of $\bv W$. $\bv W = \widehat{\bv G} \widehat{\bv G}^\T$, so $W_{ij} = \bv x^\T \bv y$ for Gaussian vectors $\bv x,\bv y$ that are either independent (when $i \neq j$) or equal (when $i =j$). Thus, \Cref{lem:gausian-inner-prod} applies and we have for all $i,j \in \{1,\ldots,n-m\} \times \{1,\ldots n-m\}$
\begin{align*}
    \Pr\Bigl[ \big| W_{ij} - Z_{ij} \big| < \frac{\sqrt{n-m}}{40} \Bigr] < \frac{1}{40}.
\end{align*}
Thus, by Markov's inequality, 
\begin{align*}
        \Pr\Biggl[ \sum_{i=1}^{n -m} \sum_{j=1}^{n-m} \ones\Big[| W_{ij} - Z_{ij} | < \frac{\sqrt{n-m}}{40} \Big] 
        \geq \frac{(n-m)^2}{2} \Biggr] 
        \leq \frac{1/40}{1/2} =\frac{1}{20}.
\end{align*}
So, with probability $\ge \frac{19}{20}$, there are at least $\frac{(n-m)^2}{2}$ pairs $(i,j)$ such that $| W_{ij} - Z_{ij} |^2 \geq \frac{({n-m})}{40^2}.$
Thus,
\begin{align}\label{eq:tired2}
    \Pr\bigg[ \big\| \bv{W} -\bv Z \big\|_\Fro^2 \geq \frac{(n-m)^3}{2 \cdot 40^2}\bigg] \geq \frac{19}{20}.
\end{align}
Recalling that we assume $m \le \frac{c \sqrt{\log |\F|}}{\sqrt{\log \gamma}}$ for small enough $c$ and thus $m \le \frac{n}{2}$, we have $\frac{(n-m)^3}{2 \cdot 40^2} \ge \frac{n^3}{160^2}$. Applying  \eqref{eq:camTiredBound} and plugging back into \eqref{eq:tired2}, we thus have:
\begin{align}\label{eq:lower2}
    \Pr\left [\|\bv A - \bv{\tilde B}\|_\Fro \geq \frac{n^{1.5}}{160} \right ] \geq \Pr\left [\|\bv W - \bv{Z}\|_\Fro^2 \geq \frac{n^{3}}{160^2}\right ] \ge \frac{19}{20}.
\end{align}
Combining equations \cref{eq:upper2} and \cref{eq:lower2} and applying a union bound over both holding, we conclude that with probability at least $\frac{9}{10}$,
    $\|\bv A - \bv{\tilde B}\|_\Fro \geq \gamma \cdot \min_{B\in \mathcal{F}}\|\bv A - \bv B\|_\Fro$, 
as desired.
\end{proof}

\paragraph{Application to Butterfly Matrices.}
While the hard instance of \Cref{thm:two_sided_lower_bound} may seem somewhat artificial, it is actually closely related to structured matrix classes that arise commonly in applications. In particular, the bound easily implies a lower bound of $\Omega(\sqrt{n})$ matvecs for approximation from the class of constant-rank butterfly matrices, which have been considered in many recent works \cite{LiYangMartin:2015,LiYang:2017,LiuXingGuo:2021}. This lower bound matches the best known upper bound upper bound of \cite{LiYang:2017} up a $\log n$ factor.

\begin{corollary}\label{cor:butterfly}
    For $n = 2^k$ for integer $k > 0$, let $\mathcal{F}\subset \R^{n \times n}$ be the class of rank-$1$ butterfly matrices \cite{LiYangMartin:2015}. For any $\gamma > 1.1$, solving \Cref{prob:intro}  for $\F$ to accuracy $\gamma$ for any $\gamma \ge 1.1$ with probability $\ge 1/10$ requires $\Omega(\sqrt{n}/\sqrt{\log \gamma})$ queries.
\end{corollary}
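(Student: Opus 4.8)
The plan is to obtain Corollary~\ref{cor:butterfly} by re-running the argument behind \Cref{thm:two_sided_lower_bound} with the generic family $\F$ replaced by a net over an appropriate sub-family $\mathcal{S}$ of rank-$1$ butterfly matrices, and with the Wishart target $\bv A = \bv G\bv G^\T$ replaced by a ``butterfly-compatible'' anti-concentrated analogue. Since a rank-$1$ butterfly matrix of side $n=2^k$ admits a description by $\tilde O(n)$ real parameters (it factors into $\log_2 n$ sparse butterfly factors, each carrying $\Theta(n)$ free entries), a net $\F$ over $\mathcal{S}$ intersected with a Frobenius ball of radius $O(n^{1.5})$ at scale $\alpha=\Theta(n^{1.5}/\gamma)$ has $\log|\F| = \Theta(n\log\gamma)$. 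Feeding this into the counting step of \Cref{thm:two_sided_lower_bound} gives a lower bound of $\Omega(\sqrt{\log|\F|/\log\gamma}) = \Omega(\sqrt{n}/\sqrt{\log\gamma})$, as claimed. The three ingredients I would assemble are: (1) pin down a sub-family $\mathcal{S}$ of rank-$1$ butterfly matrices carrying $\Theta(n)$ \emph{genuinely} independent coefficients; (2) build the hard input $\bv A$ out of Gaussian, butterfly-compatible pieces so that, with high probability, $\min_{\bv B\in\mathcal{S}}\|\bv A-\bv B\|_\Fro\le\alpha$ (recall only a $\gamma$-factor approximation to this quantity is demanded); and (3) invoke the query-reduction lemma (\Cref{lem:wishart_query}) together with the Gaussian anti-concentration bound (\Cref{lem:gausian-inner-prod}) verbatim, so that after $m=o(\sqrt n/\sqrt{\log\gamma})$ adaptive two-sided queries an $\Omega(n)$-sized block of i.i.d.\ Gaussian inner products defining $\bv A$ remains untouched and sufficiently spread out that any algorithm output must be $\Omega(\gamma\alpha)$-far from $\bv A$.

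The technical heart is step~(1)--(2), which is where I expect the main obstacle to lie. Rank-$1$ butterfly matrices form a thin, nonlinear subset of $\R^{n\times n}$, so a \emph{generic} dense or Wishart-type hard instance is \emph{not} $\alpha$-close to any rank-$1$ butterfly matrix; one cannot simply plug ``$\F=$ net over butterfly matrices'' into \Cref{thm:two_sided_lower_bound}, and the hard target must itself be engineered to lie (up to additive $O(\alpha)$) on the butterfly manifold. Moreover, the naive candidates for $\mathcal{S}$ fail: a matrix that equals an arbitrary dense $\sqrt n\times\sqrt n$ block and zero elsewhere is \emph{not} a rank-$1$ butterfly matrix (that block is a balanced-level dyadic submatrix and must have rank $\le 1$); the block-diagonal family with $\sqrt n$ rank-$1$ diagonal blocks \emph{is} rank-$1$ butterfly and needs $\Omega(\sqrt n)$ \emph{one-sided} queries, but is learnable with $O(1)$ \emph{two-sided} queries; and recursing --- diagonal blocks that are themselves rank-$1$ butterfly matrices of side $\sqrt n$ --- only gives a two-sided bound of $\Omega(n^{1/4})$ because one big matvec probes all blocks at once. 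So the construction has to exhibit $\Theta(n)$ degrees of freedom inside the rank-$1$ butterfly parametrization that simultaneously (a) survive netting to contribute $\Theta(n\log\gamma)$ to $\log|\F|$ and (b) cannot be disentangled by $o(\sqrt n)$ \emph{two-sided} matvecs, presumably by having the hard parameters interleaved across the matrix rather than block-localized.

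Equivalently, one can bypass the reduction and prove Corollary~\ref{cor:butterfly} from scratch by substituting a ``butterfly--Wishart'' target (the free butterfly coefficients set to i.i.d.\ Gaussians, appropriately normalized) into the proof of \Cref{thm:two_sided_lower_bound} and establishing the analogue of \Cref{lem:wishart_query} for the butterfly parametrization --- namely, that $m$ adaptive two-sided queries leave an $\Omega((\sqrt n - \mathrm{poly}(m))^2)$-dimensional slice of the Gaussian parameters conditionally distributed as fresh Gaussians, independent of the queries and responses. I expect this to require exactly the same structural fact as the reduction route: a description of where rank-$1$ butterfly matrices are ``locally full-dimensional'' in a way robust to probing on both sides. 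Once that fact is in hand, the remaining estimates (the norm bound $\E\|\bv A\|_\Fro^2$, Markov's inequality to place $\bv A$ in the ball, the entrywise anti-concentration, and a Markov bound on the number of entries with small residual) go through essentially as in the proof of \Cref{thm:two_sided_lower_bound}. The matching $\tilde O(\sqrt n)$ upper bound of \cite{LiYang:2017} then makes this tight up to the $\log n$ factor.
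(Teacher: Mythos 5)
Your high-level plan — re-run the hard instance of \Cref{thm:two_sided_lower_bound} inside a butterfly sub-family — is the right one, and your diagnosis that the ``technical heart'' is finding a sub-family $\mathcal{S}$ of rank-$1$ butterfly matrices that both carries $\Theta(n)$ interleaved free coefficients and is hard for two-sided queries is exactly on target. But you stop there and leave the construction as a conjecture, which is the whole content of the corollary. Worse, you dismiss the decisive candidate: you note correctly that an arbitrary dense $\sqrt{n}\times\sqrt{n}$ principal block is \emph{not} rank-$1$ butterfly (the balanced-level $\sqrt{n}\times\sqrt{n}$ block would have to have rank $\le 1$), but you do not follow this with the natural next thought, which is to apply a row and column permutation. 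That is precisely what the paper does. Take $\mathcal{F}'$ to be the matrices that are arbitrary at the top-left entry of every block in the partition of $[n]\times[n]$ into $\sqrt{n}\times\sqrt{n}$ blocks, and zero elsewhere. This pattern has $n$ free parameters spaced $\sqrt{n}$ apart in both row and column index, so for every dyadic block of the complementary partition (width $2^j\le\sqrt n$ gives at most one non-zero column; width $2^j>\sqrt n$ forces height $<\sqrt n$, hence at most one non-zero row), the block has rank at most $1$; thus $\mathcal{F}'$ lies inside the rank-$1$ butterfly class $\F$. And $\mathcal F'$ is a simultaneous row/column permutation of the ``dense top-left $\sqrt n\times\sqrt n$ block'' family, which is exactly the hard pattern in the proof of \Cref{thm:two_sided_lower_bound} with $z=\sqrt n$.

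The second missing piece in your plan is that you do not need a new ``butterfly–Wishart'' analogue of \Cref{lem:wishart_query}. Because the conjugation is by fixed permutation matrices $\bv P_1,\bv P_2$, any algorithm making $m$ queries to the permuted target can be simulated by an algorithm making $m$ queries to the unpermuted one (query $\bv x \mapsto \bv P_1\bv A\bv P_2\bv x$ becomes $\bv P_2\bv x \mapsto \bv A(\bv P_2\bv x)$, then apply $\bv P_1$, and similarly for $\bv A^\transpose$). So the original Wishart query-reduction lemma applies verbatim. The final reduction is then: since the permuted Wishart target lies \emph{exactly} in $\mathcal F'\subseteq\F$, a $\gamma$-approximate solver for $\F$ must recover $\bv A$ exactly, and exact recovery implies the ability to solve \Cref{prob:intro} for the $\alpha$-cover of \Cref{thm:two_sided_lower_bound}, which we already know requires $\Omega(\sqrt n/\sqrt{\log\gamma})$ queries. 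As written, your proposal is an incomplete sketch: it correctly frames the obstacle but does not supply the construction that dissolves it, and the construction turns out to be a one-line observation once one considers permuting the Wishart block's support.
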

We refer the reader to \cite{LiYang:2017} for a complete definition of butteryfly matrices. We use the fact that rank-$1$ butterly matrices are equivalent to the family of matrices with \emph{complimentary rank-$1$ structure}. In short, a matrix $\bv{B}$ has complimentary rank-$1$ if, for all $i,j < \log_2(n)$, given a partition of $\bv{B}$ into a grid of blocks of size $2^i\times 2^j$, every block has rank $1$. With this definition in place, we can prove the claim:
\begin{proof}
    Consider the family $\mathcal{F'} \subset \R^{n \times n}$ where the top-left entry of every block in a partition of the matrix into $n$, $\sqrt{n} \times \sqrt{n}$ blocks is arbitrary, and all other entries are zero. It is not hard to see that any matrix in this family has complimentary rank-$1$ since any block in a partition into $2^i\times 2^j$ blocks has at most a single non-zero column.  I.e., $\F'$ is a subset of the family of rank-$1$ butterfly matrices, $\mathcal F$.

    Moreover, it is not hard to see that, up to a permutation of row and column indices, this family is equivalent to the set of matrices that take arbitrary non-zero values on their top $\sqrt{n} \times \sqrt{n}$ principal submatrix and are $0$ anywhere else. Now, let $\bv A \in \F'$ be an appropriate row/column permutation of the hard case of \Cref{thm:two_sided_lower_bound}, a matrix  whose top $\sqrt{n} \times \sqrt{n}$ block is a Wishart matrix $\bv G \bv G^\T$ for $\bv G \in \R^{\sqrt n \times \sqrt{n}}$. Any algorithm for finding an optimal approximation to $\bv A$ in $\F$ must exactly recover $\bv A$ since $\bv A \in \mathcal F' \subseteq \F$. Thus, any such algorithm yields an algorithm, with the same query complexity, for finding an optimal approximation to $\bv A$ in any other family, including in the finite $\alpha$-cover considered in \Cref{thm:two_sided_lower_bound}. Thus, such an algorithm must make $\Omega(\sqrt{n}/\sqrt{\log \gamma})$ queries, yielding the theorem.
\end{proof}

\section{Open Problems}
\label{sec:open_problems}
We conclude by highlighting a few open problems that would be interesting to address in follow-up work. First, for finite families, we conjecture that it is possible to obtain a $(1+\epsilon)$ approximation with $\tilde{O}(\sqrt{\log |\F|}/\poly(\epsilon))$ queries, improving on the $(3+\epsilon)$ bound achieved by \Cref{thm:robust}. We would also like to understand the roll of adaptivity -- is it possible that the same bound is achievable even if the query vectors $\bv{x}_1, \ldots, \bv{x}_m$ are chosen all at once, or with just a few rounds of adaptivity? If not, can we prove a lower bound? Our current algorithm issues a non-adaptive set of left queries to $\bv{A}^\transpose$, but our right queries to $\bv{A}$ are chosen adaptively. Currently, the best known non-adaptive method (discussed in \Cref{sec:finite}) requires $\tilde {O}(\log |\F|/\epsilon^2)$ queries; that is it takes no advantage of having access to $\bv A^\T$.

We are also interested in obtaining improved bounds for linearly parameterized families. We conjecture that $\tilde{O}(\sqrt{q}/\poly(\epsilon))$ queries suffice to achieve a pure $(1+\epsilon)$ relative error approximation for any linearly parameterized family with dimension $q$. Eliminating the additive $\alpha\cdot \|\bv{A}\|_\Fro$ from our \Cref{cor:linear} will seemingly require a different approach that does not rely on a black-box reduction to finite families. 

Finally, beyond query complexity, it is interesting to consider computational complexity. The runtime of our current algorithms scale as $\tilde{O}(|\mathcal{F}|)$. While this is essentially optimal if we do not assume any additional structure about $\mathcal{F}$ (we must at least read the $|\mathcal{F}|$ matrices in the family), we could hope to do significantly better if $\mathcal{F}$ has more structure -- e.g., has a compact parameterization, or is a discretization of a simply described infinite family. For example, a concrete question is if it is possible to match the $\tilde{O}(\sqrt{q})$ query complexity of \Cref{cor:linear} for dimension $q$ linear families with an algorithm that runs in $\poly(n,q)$ time.

\section*{Acknowledgements} This work was partially supported by NSF awards 2427362, 2046235, and 2045590. We would like to thank Diana Halikias, Alex Townsend, Nicholas Boullé, Sam Otto, Rikhav Shah, and Paul Beckman for helpful conversations that lead to this work.

\bibliographystyle{apalike}
\bibliography{refs.bib}
% \clearpage

\appendix

\section{Finite Family Approximation in Weaker Query Models}\label{sec:weakerModels}

As discussed, our $\tilde O(\sqrt{\log |\F|})$ query algorithm for finite family matrix approximation in the two-sided matvec query model (\Cref{thm:robust}) is a near quadratic improvement on the best achievable bound using either one-sided matvec queries, or vector-matrix-vector queries (which are strictly weaker than one-sided queries).

In this section we first show that a bound of $O(\log|\F|/\delta)/\epsilon^2)$ is achievable in the vector-matrix-vector query model, matching the bound of \Cref{clm:one_sided_hutch} for the one-sided query model. We then show that $\Omega(\log |\F|)$ queries are necessary in the one-sided query model, and in turn the weaker vector-matrix-vector query model, even to achieve a weak approximation bound with small constant probability.

\subsection{Upper Bound for Vector-Matrix-Vector Queries}\label{sec:vmv}
We prove the following upper bound for finite family matrix approximation in the vector-matrix-vector model:
\begin{claim}
    \label{clm:vmv} 
    There is an algorithm that, for any finite matrix family $\mathcal{F}$, $\epsilon,\delta \in (0,1)$, and $\bv{A}\in \R^{n\times n}$ accessible via vector-matrix-vector queries of the form $\bv x,\bv y \rightarrow \bv x^\T \bv{A}\bv y$, issues $O(\log (|\F|/\delta)/\epsilon^2)$ such queries and, with probability $> 1-\delta$, returns $\bv{\tilde{B}}\in \mathcal{F}$ satisfying:
    \begin{align*}
    \|\bv{A} - \bv{\tilde{B}}\|_{\Fro} \leq (1+\epsilon) \cdot \min_{\bv{B}\in \family}\|\bv{A} - \bv{B}\|_{\Fro}.
    \end{align*}
    \end{claim}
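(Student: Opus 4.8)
The plan is to re-run the one-sided sketching argument behind \Cref{clm:one_sided_hutch}, but replace the matrix--vector sketch $\bv{A}\bv{\Pi}$ with a \emph{rank-one} (vector-matrix-vector) sketch, and then use a median-of-means boost so that the total query count stays at $O(\log(|\family|/\delta)/\epsilon^2)$. Concretely, for a matrix $\bv C$ with $n$ columns and $\bv g,\bv h\in\R^n$ with i.i.d.\ $\mathcal N(0,1)$ entries, set $X = \bv g^\T\bv C\bv h = \langle \bv C,\bv g\bv h^\T\rangle$; a short calculation gives $\E[X]=0$ and $\E[X^2]=\tr(\bv C\,\E[\bv h\bv h^\T]\,\bv C^\T\,\E[\bv g\bv g^\T])=\|\bv C\|_\Fro^2$. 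Using the SVD $\bv C=\sum_r\sigma_r\bv u_r\bv v_r^\T$ and rotational invariance of Gaussians, $X=\sum_r\sigma_r\tilde g_r\tilde h_r$ with $\tilde{\bv g},\tilde{\bv h}$ independent standard Gaussians, and expanding the fourth power (only the all-even index patterns survive) yields $\E[X^4]=6\sum_r\sigma_r^4+3(\sum_r\sigma_r^2)^2\le 9\|\bv C\|_\Fro^4$. Hence the averaged estimator $\hat S_{k_0}=\frac1{k_0}\sum_{i=1}^{k_0}(\bv g_i^\T\bv C\bv h_i)^2$ over $k_0$ i.i.d.\ pairs has $\E[\hat S_{k_0}]=\|\bv C\|_\Fro^2$ and $\Var(\hat S_{k_0})\le 8\|\bv C\|_\Fro^4/k_0$.

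Next I would take $k_0=\lceil 24/\epsilon^2\rceil$ so that Chebyshev's inequality gives $\Pr[\,|\hat S_{k_0}-\|\bv C\|_\Fro^2|>\epsilon\|\bv C\|_\Fro^2\,]\le 1/3$, and then run $B=\Theta(\log(|\family|/\delta))$ independent copies of $\hat S_{k_0}$ on disjoint sets of query pairs, producing $\hat S^{(1)},\dots,\hat S^{(B)}$ and the estimate $\hat S^\star=\mathrm{median}_b\hat S^{(b)}$. The median lies in $[(1-\epsilon)\|\bv C\|_\Fro^2,(1+\epsilon)\|\bv C\|_\Fro^2]$ unless more than $B/2$ of the copies fall outside it; since the copies are independent and each fails with probability $\le 1/3$, a Chernoff bound makes this at most $\delta/|\family|$ once the constant in $B$ is large enough.

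To assemble the algorithm, I would draw all the pairs $(\bv g_i,\bv h_i)$ once at the start; the only quantities that require queries are the $Bk_0=O(\log(|\family|/\delta)/\epsilon^2)$ scalars $\bv g_i^\T\bv A\bv h_i$. Given these, for every $\bv B\in\family$ we can form $\bv g_i^\T(\bv A-\bv B)\bv h_i=\bv g_i^\T\bv A\bv h_i-\bv g_i^\T\bv B\bv h_i$ with no further queries and hence compute $\hat S^\star(\bv B)$, our estimate of $\|\bv A-\bv B\|_\Fro^2$. Applying the concentration bound above with $\bv C=\bv A-\bv B$ and union bounding over the $|\family|$ choices of $\bv B$, with probability $\ge 1-\delta$ we have $(1-\epsilon)\|\bv A-\bv B\|_\Fro^2\le\hat S^\star(\bv B)\le(1+\epsilon)\|\bv A-\bv B\|_\Fro^2$ for all $\bv B\in\family$ simultaneously. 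Returning $\bv{\tilde B}\in\argmin_{\bv B\in\family}\hat S^\star(\bv B)$ then gives $\|\bv A-\bv{\tilde B}\|_\Fro^2\le\frac{1+\epsilon}{1-\epsilon}\min_{\bv B\in\family}\|\bv A-\bv B\|_\Fro^2$, i.e.\ $\|\bv A-\bv{\tilde B}\|_\Fro\le(1+O(\epsilon))\min_{\bv B\in\family}\|\bv A-\bv B\|_\Fro$; rescaling $\epsilon$ by a constant gives the claim.

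The one nonroutine ingredient is the fourth-moment bound $\E[X^4]=O(\|\bv C\|_\Fro^4)$: this is exactly what lets plain Chebyshev suffice inside each bucket, letting us avoid the sub-$1/2$-exponential tails of the degree-$4$ chaos $(\bv g^\T\bv C\bv h)^2$ that any direct, one-shot concentration argument would have to contend with (and which is the reason a Hanson--Wright-style one-shot bound, as used in \Cref{fact:hutch_standard} for the one-sided model, does not transfer cleanly). Everything else --- the Chebyshev step, the Chernoff boost for the median, the union bound, and the final $\min$ argument --- is bookkeeping.
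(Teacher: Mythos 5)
Your proof is correct and takes essentially the same route as the paper's: both build a median-of-means estimator for $\|\bv A-\bv B\|_\Fro^2$ from squared bilinear sketches $(\bv x^\T(\bv A-\bv B)\bv y)^2$, bound the variance via a fourth-moment computation, apply Chebyshev within each bucket and a Chernoff boost on the median, and union bound over $\F$. The only cosmetic difference is your use of Gaussian rather than Rademacher sketch vectors; this makes your SVD/rotational-invariance fourth-moment calculation $\E[X^4]=6\sum_r\sigma_r^4+3\bigl(\sum_r\sigma_r^2\bigr)^2\le 9\|\bv C\|_\Fro^4$ especially clean (and arguably more careful than the paper's \eqref{eq:CamVar}, which silently drops cross terms such as those with index pattern $x_ix_kx_ix_k\,y_jy_\ell y_\ell y_j\,C_{ij}C_{k\ell}C_{i\ell}C_{kj}$ that do not vanish for Rademacher vectors), but in either case the difference only moves the constants.
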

    \begin{proof}
        As with \Cref{clm:one_sided_hutch}, the result follows by using a Frobenius norm approximation result like \Cref{fact:hutch_standard}, and union bounding over all matrices $\bv A - \bv B$ for $\bv B \in \F$. However, the Frobenius norm approximation method must be implementable with just vector-matrix-vector queries. We show that this is possible using a standard ``median-trick'' style argument. Consider independent $\bv {x},\bv y \in \{-1,1\}^n$ with i.i.d. Rademacher entries. We will show that, for any matrix $\bv C \in \mathbb R^{n \times n}$, $(\bv x^\T \bv C \bv y)^2$ is an unbiased estimator for $\|\bv C\|_\Fro^2$ with bounded variance. In particular, we have:
        \begin{align*}
            \E&\left[(\bv x^\T \bv C \bv y)^2\right] = \mathbb E\left [\left (\sum_{i,j}  x_i  y_j  C_{ij} \right )^2 \right ]\\
            &= \sum_{i,j} \E\left[ x_i^2 y_i^2  C_{ij}^2\right] + \sum_{i \neq k, j} \E\left[ x_i x_k y_j^2  C_{ij} C_{kj}\right] + \sum_{i, k \neq j} \E\left[ x_i^2  y_k y_j C_{ij} C_{ik}\right] + \sum_{i \neq k, j \neq \ell} \E\left[ x_i  x_k  y_j y_\ell C_{ij} C_{k\ell}\right].
        \end{align*}
        Observe that all terms except those in the first sum are $0$ since $\bv x$ and $\bv y$ are independent and $\E[x_i x_j] = 0$ whenever $i \neq j$ (and analogously $\E[ y_i y_j] = 0$). Thus, using that $\E[ x_i^2] = \E[ y_i^2] = 1$ for all $i$, we have:
        \begin{align}\label{eq:camExp}
        \E\left[(\bv x^\T \bv C \bv y)^2\right] &= \sum_{i,j} \E\left[x_i^2 y_j^2 C_{ij}^2\right] = \|\bv C\|_\Fro^2.
        \end{align}
        Further, again dropping all `cross' terms that include quantities of the form $ x_i  x_j$ or $ y_i  y_j$ for $i \neq j$, we have:
        \begin{align}\label{eq:CamVar}
            \Var[(\bv x^\T \bv C \bv y)^2] &\le \E\left[(\bv x^\T \bv C \bv y)^4\right] = \sum_{i, k, j, \ell} \E[x_i^2 x_k^2 y_j^2  y_\ell^2  C_{ij}^2  C_{k\ell}^2] = \sum_{i, k, j, \ell}  C_{ij}^2  C_{k\ell}^2 = \left (\sum_{i,j}  C_{ij}^2 \right)^2 = \|\bv C\|_\Fro^4.
        \end{align}
        We can reduce the variance of $\bv x^\T \bv C \bv y$ via averaging. Let $\bv x_1,\ldots,\bv x_m \in \{-1,1\}^n$ and $\bv y_1,\ldots,\bv y_m \in \{-1,1\}^n$ be independent vectors with i.i.d. Rademacher entries. Let $S(\bv{C}) = \frac{1}{m} \sum_{i=1}^m (\bv x_i^\T \bv C \bv y_i)^2$.
        Then, by \eqref{eq:camExp} and \eqref{eq:CamVar}:
        \begin{align*}%\label{eq:camrep}
            \E[S(\bv{C})] = \|\bv C\|_\Fro^2\,\text{ and }\,
            \Var[S(\bv{C})] &\leq \frac{1}{m} \|\bv C\|_\Fro^4.
        \end{align*}
        If we set $m = \frac{1}{36 \epsilon^2}$, then we have $\Var[S(\bv{C})] \leq \frac{\epsilon^2}{36} \|\bv C\|_\Fro^4$. Thus, by Chebyshev's inequality, 
        \begin{align}\label{eq:camCheby}
            \Pr[|S(\bv{C}) - \|\bv C\|_\Fro^2| \geq \frac{\epsilon}{3} \|\bv C\|_\Fro^2] \leq \frac{\Var[S(\bv{C})]}{(\epsilon/3)^2 \|\bv C\|_\Fro^4} = \frac{1}{4}.
        \end{align}
        Now, let $S_1(\bv{C}),\ldots,S_t(\bv{C})$ be independent copies of $S(\bv{C})$, where $t = c\log(1/\delta)$ for a large enough universal constant $c$. By \eqref{eq:camCheby}, in expectation, at least $3t/4$ of these estimators satisfy $|S_i(\bv{C}) - \|\bv C\|_\Fro^2| < \frac{\epsilon}{3} \|\bv C\|_\Fro^2$. By a standard Chernoff bound, great than $t/2$ satisfy this bound with probability at least $1-\delta$. Thus, with probability at least $1-\delta$,
        $$(1-\epsilon/3) \|\bv C\|_\Fro^2 \le \mathrm{median}(S_1(\bv{C}),\ldots,S_t(\bv{C})) \le (1+\epsilon/3) \|\bv C\|_\Fro^2.$$
    Setting $\delta' = \delta/|\F|$ and considering all $\bv C$ of the form $\bv C = \bv A - \bv B$ for $\bv B \in \F$, we can apply a union bound to claim that, with probability at least $1-\delta$, simultaneously for all $\bv B \in \F$, we have:
    $$(1-\epsilon/3) \|\bv A - \bv  B\|_\Fro^2 \le \mathrm{median}(S_1(\bv{A}-\bv{B}),\ldots,S_t(\bv{A}-\bv{B})) \le (1+\epsilon/3) \|\bv A - \bv B\|_\Fro^2.$$
    Thus, if we set $\displaystyle \bv{\tilde B} = \arg\min_{\bv B \in \F} \mathrm{median}(S_1(\bv{A}-\bv{B}),\ldots,S_t(\bv{A}-\bv{B}))$, we have $$\|\bv A - \bv{\tilde B}\|_\Fro^2 \leq \frac{(1+\epsilon/3)}{(1-\epsilon/3)} \cdot \min_{\bv B \in \F} \|\bv A - \bv B\|_\Fro^2 \le (1+\epsilon) \cdot \min_{\bv B \in \F} \|\bv A - \bv B\|_\Fro^2,$$ with probability at least $1-\delta$, as desired.

    Computing $S_1(\bv{A}-\bv{B}),\ldots,S_t(\bv{A}-\bv{B})$ for all $\bv{B}\in \mathcal{F}$ requires just $m \cdot t = O(\log(1/\delta')/\epsilon^2) = O(\log(|\F|/\delta)/\epsilon^2)$ vector-matrix-vector queries with $\bv A$, giving the claimed query complexity.
    \end{proof}

\subsection{Lower Bound for One-Sided Queries}
\label{sec:lower_onesided}

We next show that the bounds of \Cref{clm:one_sided_hutch} and \Cref{clm:vmv} are optimal in their dependence on $\log |\F|$: in the one-sided matvec query model, and in turn the vector-matrix-vector query model, $ \Omega(\log |\F|)$ queries are necessary to find even a coarse approximation from any finite matrix family. In particular, we show: 

\begin{theorem}
\label{thm:lower_bound_onesided}
For any $n > 0$, any approximation factor $\gamma \ge 1.1$, and any $k \leq c'n \log \gamma$ for a universal constant $c'$, there 
    is a finite family of matrices $\F \subseteq \R^{n\times n}$ with $\log |\F| = O(k)$ and a distribution over  $\bv A \in \R^{n \times n}$ such that, if $\bv{\tilde B}$ is the output of any algorithm that makes fewer than $\frac{c\sqrt{\log|\F|}}{\sqrt{\log \gamma}}$ matvec queries to $\bv A$ (and none to $\bv A^\transpose$) for a universal constant $c > 0$, 
\[ \|\bv A- \bv{\tilde B}\|_\Fro \geq \gamma \cdot \min_{\bv B \in \F} \|\bv A - \bv B\|_\Fro, \]
 with probability $\geq 9/10$ (over the randomness in $\bv A$ and possibly the algorithm).
In particular, $\Omega(\log |\F|)$ one-sided matvec queries are necessary to achieve any constant approximation factor $\gamma$ with probability $\ge 1/10$.
\end{theorem}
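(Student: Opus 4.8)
The plan is to mirror the lower bound of \Cref{thm:two_sided_lower_bound}, but with a hard family parametrized by a single $n$-dimensional vector rather than an $n^2$-dimensional block, exploiting the fact that a one-sided query $\bv x \mapsto \bv A\bv x$ can reveal only a single real number about that vector. Concretely, I would take $\F$ to be an $\alpha$-net (\Cref{def:cover}) of the set $\{\bv e_1\bv a^\T : \bv a\in\R^n,\ \|\bv a\|_2 \le c_1\sqrt n\}$ of matrices supported on their first row, with $\alpha = \Theta(\sqrt n/\gamma)$. By the standard covering bound (\cite[Corollary 4.2.13]{Vershynin2018}) this can be done with $|\F| = \bigl(2c_1\sqrt n/\alpha + 1\bigr)^n$, so that $\log|\F| = \Theta(n\log\gamma)$, i.e.\ $n = \Theta(\log|\F|/\log\gamma)$. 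The hard distribution is $\bv A = \bv e_1\bv a^\T$ with $\bv a\sim\mathcal N(0,\bv I_n)$; as in \Cref{thm:two_sided_lower_bound}, the case of general $k$ follows by restricting $\bv a$ to its first $z\le n$ coordinates and zeroing the rest.

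For the upper bound on $\OPT$, note $\E\|\bv A\|_\Fro^2 = \E\|\bv a\|_2^2 = n$, so by Markov's inequality $\|\bv A\|_\Fro \le c_1\sqrt n$ with probability $\ge 19/20$ for a suitable constant $c_1$; on this event $\bv A$ lies in the ball being covered and hence $\min_{\bv B\in\F}\|\bv A - \bv B\|_\Fro \le \alpha = \Theta(\sqrt n/\gamma)$.

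For the lower bound on the algorithm's error, the key structural observation is that a one-sided query returns $\bv A\bv x = (\bv a^\T\bv x)\,\bv e_1$, which carries exactly one scalar of information about $\bv a$. A standard orthogonalization argument — the one-sided analog of \Cref{lem:wishart_query}, obtained by Gram--Schmidt-orthogonalizing the adaptively chosen query vectors $\bv x_1,\ldots,\bv x_m$ into an orthonormal basis $\bv u_1,\ldots,\bv u_n$ with $\mathrm{span}\{\bv u_1,\ldots,\bv u_i\}=\mathrm{span}\{\bv x_1,\ldots,\bv x_i\}$ and using rotational invariance of the spherical Gaussian — shows that, conditioned on the transcript, the coordinates $\bv a^\T\bv u_{m+1},\ldots,\bv a^\T\bv u_n$ are i.i.d.\ $\mathcal N(0,1)$ and independent of the algorithm's output $\bv{\tilde B}$. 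Writing $\bv b$ for the first row of $\bv{\tilde B}$ (a deterministic function of the transcript), we get $\|\bv A - \bv{\tilde B}\|_\Fro \ge \|\bv a-\bv b\|_2 \ge \bigl(\sum_{i=m+1}^n(\bv a^\T\bv u_i - \bv b^\T\bv u_i)^2\bigr)^{1/2}$. Standard one-dimensional Gaussian anti-concentration gives $\Pr[\,|\bv a^\T\bv u_i - \bv b^\T\bv u_i| < 1/40\,] < 1/40$ for each $i$, so Markov's inequality yields that, with probability $\ge 19/20$, at least $(n-m)/2$ of these terms are at least $1/40^2$; hence $\|\bv A - \bv{\tilde B}\|_\Fro \ge \sqrt{(n-m)/(2\cdot 40^2)}$, which is $\Omega(\sqrt n)$ whenever $m\le n/2$.

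Combining the two bounds, the ratio $\|\bv A - \bv{\tilde B}\|_\Fro / \min_{\bv B\in\F}\|\bv A - \bv B\|_\Fro$ is $\Omega(\gamma)$ with probability $\ge 9/10$ (after a union bound over the two failure events), so tuning the constant in $\alpha$ gives the claimed conclusion for every algorithm making at most $n/2 = \Omega(\log|\F|/\log\gamma)$ one-sided queries — in particular, $\Omega(\log|\F|)$ queries are needed for constant $\gamma$. The one real subtlety, and the step I expect to require the most care, is the adaptivity argument: one must verify rigorously that after $m$ one-sided queries the remaining $n-m$ Gaussian directions of $\bv a$ are genuinely independent of everything the algorithm has observed, so that $\bv b$ cannot be correlated with them. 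This is exactly the role of the one-sided version of \Cref{lem:wishart_query}; it is conceptually simpler than the Wishart statement (since $\bv a$ enters the responses only through linear functionals), but it must be stated and applied carefully. The remaining ingredients — the covering-number estimate, the two Markov arguments, and the Gaussian anti-concentration bound — are routine and parallel the proof of \Cref{thm:two_sided_lower_bound}.
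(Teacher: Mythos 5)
Your proof is correct and follows essentially the same route as the paper's: identical hard instance ($\bv A = \bv e_1\bv a^\T$ with Gaussian $\bv a$, $\F$ an $\alpha$-net of first-row-supported matrices), the same rotational-invariance conditioning of $\bv a$ after $m$ adaptive one-sided queries (which the paper states and proves as Lemma~\ref{lem:one_sided_hidden_gaussian}), and a Markov bound for $\OPT$. The only (purely cosmetic) divergence is in the anti-concentration step: the paper bounds $\Pr[\|\bv{g}_m - \bv a\|_2 \le c_1\sqrt n]$ directly via a ball-volume estimate (Lemma~\ref{lem:other_anti_concentration}), while you apply per-coordinate Gaussian anti-concentration plus Markov --- the same technique the paper uses in the two-sided lower bound (Theorem~\ref{thm:two_sided_lower_bound}) --- and both yield the same $\Omega(\sqrt n)$ lower bound on the algorithm's error.
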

We remark that, similar to \Cref{thm:two_sided_lower_bound}, the dependence on $1/\log \gamma$ in \Cref{thm:lower_bound_onesided} is optimal, since in \Cref{clm:hutch_coarse}, we show how to find a $\gamma = O(|\F|)$-approximate solution using just $O(1) = O(\log |\F|/\log(\gamma))$ one-sided matvecs.

The idea behind \Cref{thm:lower_bound_onesided} is simple. As in the proof of \Cref{thm:two_sided_lower_bound}, we can focus on the case when $k = c' n\log \gamma$. We let $\bv{A}$ be distributed so that its first row is a random Gaussian vector and the rest of its rows are all zero. $\F$ will consist of matrices that are also only nonzero on their first row, each corresponding to a vector from a sufficiently fine cover (see \Cref{def:cover}) for a ball in $\R^n$. The radius of the ball is chosen so that it contains the first row of $\bv A$ with high probability. The accuracy of the cover will depend inversely on $\gamma$, so $\log |\F| = O(n  \log \gamma)$.
To solve \Cref{prob:intro}, one needs to find a vector in the cover that is close to the first row of $\bv A$. However, each matvec query $\bv x \rightarrow \bv A \bv x$ only gives a single linear measurement of this row. It can thus be shown that $\Omega(n) = \Omega(\log|\F|/\log \gamma)$ such measurements are necessary to find any reasonable approximation to the row. 

Our formal proof will use two basic lemmas about Gaussian vectors, stated below. Throughout, we use the notation $\mathcal{N}(\bv x, \bv C)$ to denote a Gaussian distribution with mean $\bv x \in \R^n$ and covariance $\bv C \in \R^{n \times n}$. We  use $\bv{I}_{n}$ to denote the $n \times n$ identity matrix, dropping the subscript when the dimension is clear from context. We use $\bv{0}$ to denote the all zero vector of length $n$.

\begin{lemma}{(Gaussian anti-concentration)}
\label{lem:other_anti_concentration}
Let $\bv{g} \in \R^{n}$ be drawn from $\mathcal{N}(\bv 0,\bv I)$. For any fixed $\bv a\in \mathbb{R}^n$ and any $\epsilon >0$,
    \[ \Pr[\|\bv{g}-\bv a\|_2 \leq \epsilon] \leq \sqrt{\frac{n}{4\pi}\left(\frac{e\epsilon^2}{n}\right)^{n}} \]
\end{lemma}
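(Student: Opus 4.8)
The plan is to control the probability by the product of the maximum value of the standard Gaussian density and the volume of the Euclidean ball of radius $\epsilon$, and then to simplify the resulting closed form using Stirling's inequality for the Gamma function. Writing $\mathcal{B}(\bv a,\epsilon) = \{\bv x \in \R^n : \|\bv x - \bv a\|_2 \le \epsilon\}$, the first step is the pointwise density bound
\[
\Pr\!\left[\|\bv g - \bv a\|_2 \le \epsilon\right] \;=\; \frac{1}{(2\pi)^{n/2}}\int_{\mathcal{B}(\bv a,\epsilon)} e^{-\|\bv x\|_2^2/2}\,\d\bv x \;\le\; \frac{1}{(2\pi)^{n/2}}\,\mathrm{Vol}\!\left(\mathcal{B}(\bv a,\epsilon)\right),
\]
which holds because the Gaussian density never exceeds $(2\pi)^{-n/2}$. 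Note that this estimate does not depend on $\bv a$ at all, so it automatically handles the worst case (which is $\bv a = \bv 0$).

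Next I would substitute the standard formula $\mathrm{Vol}(\mathcal{B}(\bv a,\epsilon)) = \pi^{n/2}\epsilon^n/\Gamma(n/2+1)$, which already yields the clean intermediate bound $\Pr[\|\bv g - \bv a\|_2 \le \epsilon] \le \epsilon^n/(2^{n/2}\,\Gamma(n/2+1))$. From here the argument is purely computational: I would lower bound the Gamma factor using Stirling's inequality $\Gamma(x+1)\ge \sqrt{2\pi x}\,(x/e)^x$ (valid for all $x>0$, e.g.\ from positivity of the Binet function) with $x = n/2$, giving $\Gamma(n/2+1)\ge \sqrt{\pi n}\,(n/(2e))^{n/2}$. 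Plugging this in and collecting the powers of $2$ and $e$ turns the bound into $\tfrac{1}{\sqrt{\pi n}}\left(\tfrac{e\epsilon^2}{n}\right)^{n/2}$; the claimed bound $\sqrt{\tfrac{n}{4\pi}\left(\tfrac{e\epsilon^2}{n}\right)^{n}} = \tfrac{\sqrt n}{2\sqrt\pi}\left(\tfrac{e\epsilon^2}{n}\right)^{n/2}$ then follows at once from $\tfrac{1}{\sqrt{\pi n}}\le \tfrac{\sqrt n}{2\sqrt\pi}$, i.e.\ from $n\ge 2$.

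I do not anticipate a genuine obstacle; the only point requiring a little care is that bounding the density by its supremum is somewhat wasteful, so one must keep the $\sqrt{n}$ prefactor in Stirling's estimate — the cruder $\Gamma(x+1)\ge (x/e)^x$ would not quite close the gap. The case $n=1$ is a minor caveat: the stated inequality holds for $n\ge 2$, which is the only regime in which the lemma is invoked (in the proof of \Cref{thm:lower_bound_onesided}). Conceptually everything lives in the density-times-volume reduction; the remainder is bookkeeping with the Gamma function.
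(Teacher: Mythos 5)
Your argument follows the paper's proof step for step: bound the density by $(2\pi)^{-n/2}$, insert the volume of the $\epsilon$-ball, and lower-bound $\Gamma(n/2+1)$ via Stirling. You are in fact slightly more careful than the paper in two respects: you invoke the correct Stirling prefactor $\sqrt{2\pi\cdot(n/2)}=\sqrt{\pi n}$, whereas the paper writes $\sqrt{2\pi/(n/2)}$ (an apparent typo that nevertheless reproduces the lemma's stated constant), and you correctly flag that the final weakening $1/\sqrt{\pi n}\le\sqrt{n}/(2\sqrt{\pi})$ requires $n\ge 2$ --- which is indeed necessary, since the stated bound is false for $n=1$, though harmless because the lemma is only ever invoked with $n\ge 2$.
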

\begin{proof}
    The probability density function of $\mathcal{N}(\bv 0,\bv I)$ is upper bounded by $\frac{1}{(2\pi)^{n/2}}$. Thus, $$\Pr[\|\bv{g}-\bv a\|_2 \leq \epsilon] \le \frac{1}{(2\pi)^{n/2}} \cdot \mathrm{vol}(\mathcal B(\epsilon,\bv a)),$$ where $\mathcal B(\epsilon,\bv a)$ is an $\ell_2$ ball of radius $\epsilon$ around $\bv a$. We have 
    \[\mathrm{vol}(\mathcal B(\epsilon,\bv a)) = \frac{\pi^{n/2}\epsilon^n}{\Gamma(n/2+1)} \leq \frac{\pi^{n/2}\epsilon^n}{\sqrt{\frac{2\pi}{n/2}}\left(\frac{n/2}{e}\right)^{n/2}},\]
    where the inequality follows from Stirling's approximation to the Gamma function.
    Thus, we have:
    \begin{align*}
       \Pr[\|\bv{g}-\bv a\|_2 \leq \epsilon]
       &\leq \frac{\epsilon^n}{\sqrt{\frac{4\pi}{n}}\left(\frac{n}{e}\right)^{n/2}}
       = \sqrt{\frac{n}{4\pi}\left(\frac{e\epsilon^2}{n}\right)^{n}}. \qedhere
    \end{align*}
\end{proof}

Without loss of generality we can assume that any (possibly adaptive) matvec algorithm only issues queries $\bv{q}_1, \ldots, \bv{q}_m$ that are that are unit norm and orthogonal to each other; see e.g., \cite{SunWoodruffYang:2019}. In partilcuar, the query responses for a protocol that issues arbitrary queries $\bv{x}_1, \ldots, \bv{x}_m$ can be exactly computed from the responses of a protocol that instead issues queries that from a nested orthonormal basis for $\bv{x}_1, \ldots, \bv{x}_m$ -- i.e., from a matrix $\bv{Q}\in \R^{n\times m}$ where, for any $i$, the first $i$ columns span $\bv{x}_1, \ldots, \bv{x}_i$. Such a basis could be obtained by orthogonalizing $\bv{x}_i$ on-the-fly against all prior queries. 

With this simplifying assumption in mind, and recalling that our target matrix $\bv{A}$ will be chosen to be non-zero only in its first row, our problem will thus reduce to approximating that row (which is chosen to be a random Gaussian vector) given access to the inner products with a set of orthonormal queries. We make the following claim about the distribution of the row given the outcomes of any adaptively chosen set of orthonormal queries:
\begin{lemma}
\label{lem:one_sided_hidden_gaussian}
Let $\bv{g} \in \R^n$ be drawn from $\mathcal{N}(\bv 0,\bv I)$. For any $m \leq n$, consider any set of adaptively chosen, unit norm and mutually orthogonal query vectors $\bv{q}_1, \ldots, \bv{q}_m$, where $\bv{q}_i$ is drawn from a distribution $\mathcal{D}_i$ that is a function of $\bv{q}_1, \bv{q}_1^\transpose \bv{g}, \ldots, \bv{q}_{i-1}, \bv{q}_{i-1}^\transpose \bv{g}$.
Let $\bv{Q}_m \in \R^{n \times m}$ have $\bv q_1,\ldots,\bv q_m$ as its columns and let $\bv{g}_m = \bv{g} - \bv{Q}_m\bv{Q}_m^\transpose \bv{g}$.
The distribution of $\bv{g}_m$ conditioned on the outcomes of 
$\bv{q}_1, \bv{q}_1^\transpose \bv{g}, \ldots, \bv{q}_{m}, \bv{q}_{m}^\transpose \bv{g}$ is $\mathcal{N}(\bv{0}, \bv{I} - \bv{Q}_m\bv{Q}_m^\transpose)$.
\end{lemma}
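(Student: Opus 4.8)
The plan is to prove the statement by induction on $m$, the number of queries, using the standard fact that conditioning a spherical Gaussian on the value of a linear functional leaves a Gaussian whose mean is the projection onto the known direction and whose covariance is the projection onto the orthogonal complement. The base case $m = 0$ is immediate: $\bv g_0 = \bv g \sim \mathcal N(\bv 0, \bv I)$ and $\bv Q_0 \bv Q_0^\transpose = \bv 0$. For the inductive step, suppose that after $m-1$ queries the conditional distribution of $\bv g_{m-1} = (\bv I - \bv Q_{m-1}\bv Q_{m-1}^\transpose)\bv g$ is $\mathcal N(\bv 0, \bv I - \bv Q_{m-1}\bv Q_{m-1}^\transpose)$. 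The key observation is that, because the queries are orthonormal, $\bv q_m \perp \bv q_1, \ldots, \bv q_{m-1}$, so $\bv q_m$ lies in the column space of $\bv I - \bv Q_{m-1}\bv Q_{m-1}^\transpose$, and hence $\bv q_m^\transpose \bv g = \bv q_m^\transpose \bv g_{m-1}$. Thus revealing $\bv q_m^\transpose \bv g$ is the same as revealing the value of the linear functional $\bv q_m^\transpose(\cdot)$ applied to $\bv g_{m-1}$.

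First I would handle the adaptivity carefully. The query $\bv q_m$ is drawn from a distribution $\mathcal D_m$ depending only on $\bv q_1, \bv q_1^\transpose \bv g, \ldots, \bv q_{m-1}, \bv q_{m-1}^\transpose\bv g$ — i.e., on the information already revealed. So we may condition on the entire transcript $\bv q_1, \bv q_1^\transpose\bv g, \ldots, \bv q_{m-1}, \bv q_{m-1}^\transpose \bv g$ and on $\bv q_m$ itself; under this conditioning, $\bv q_m$ is a fixed unit vector orthogonal to $\bv q_1, \ldots, \bv q_{m-1}$, and by the inductive hypothesis $\bv g_{m-1} \sim \mathcal N(\bv 0, \bv P_{m-1})$ where $\bv P_{m-1} := \bv I - \bv Q_{m-1}\bv Q_{m-1}^\transpose$. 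Now condition further on the scalar $\bv q_m^\transpose \bv g = \bv q_m^\transpose \bv g_{m-1}$. Since $\bv q_m \in \mathrm{range}(\bv P_{m-1})$ and $\bv P_{m-1}$ is an orthogonal projector, $\bv P_{m-1}\bv q_m = \bv q_m$, so the Gaussian conditioning formula gives that the conditional law of $\bv g_{m-1}$ is $\mathcal N(c\,\bv q_m, \bv P_{m-1} - \bv q_m \bv q_m^\transpose)$ for the observed value $c = \bv q_m^\transpose\bv g$. Then $\bv g_m = \bv g_{m-1} - \bv q_m \bv q_m^\transpose \bv g_{m-1} = (\bv I - \bv q_m\bv q_m^\transpose)\bv g_{m-1}$, and applying this (deterministic, idempotent) projection to the conditional Gaussian yields mean $(\bv I - \bv q_m\bv q_m^\transpose)(c\,\bv q_m) = \bv 0$ and covariance $(\bv I - \bv q_m\bv q_m^\transpose)(\bv P_{m-1} - \bv q_m\bv q_m^\transpose)(\bv I - \bv q_m\bv q_m^\transpose) = \bv P_{m-1} - \bv q_m \bv q_m^\transpose = \bv I - \bv Q_m \bv Q_m^\transpose$, where the last equality uses $\bv Q_m \bv Q_m^\transpose = \bv Q_{m-1}\bv Q_{m-1}^\transpose + \bv q_m \bv q_m^\transpose$ (again by orthonormality of the columns). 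Since this holds for every value of the transcript and of $\bv q_m$, it holds after integrating out the conditioning, completing the induction.

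The main obstacle, and the only genuinely subtle point, is bookkeeping the adaptivity: one must be sure that at each step the newly chosen direction $\bv q_m$ is (conditionally) independent of the ``fresh'' randomness $\bv g_{m-1}$ in the orthogonal complement — which is exactly why the orthogonality assumption on the queries is essential, since it guarantees $\bv q_m^\transpose \bv g$ equals $\bv q_m^\transpose \bv g_{m-1}$ and adds no information about the already-revealed components. I would make this rigorous by noting that the pair $(\bv Q_{m-1}^\transpose \bv g, \bv g_{m-1})$ are independent (orthogonal projections of a spherical Gaussian onto complementary subspaces), that $\bv q_m$ is a measurable function of $\bv Q_{m-1}^\transpose\bv g$ and the (deterministic given the transcript) directions, and hence is independent of $\bv g_{m-1}$; after that the scalar Gaussian-conditioning computation above goes through verbatim. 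Everything else is routine linear algebra with orthogonal projectors.
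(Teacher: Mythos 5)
Your proof is correct and follows essentially the same inductive route as the paper: both condition on the transcript plus $\bv{q}_m$, use orthogonality to identify $\bv{q}_m^\transpose\bv{g}$ with $\bv{q}_m^\transpose\bv{g}_{m-1}$, and peel off one Gaussian direction per step. The only cosmetic difference is that you invoke the standard Gaussian conditioning formula for the mean and covariance, whereas the paper reaches the same conclusion by decomposing $\bv{g}_i = z\,\bv{q}_{i+1} + (\bv{I}-\bv{Q}_{i+1}\bv{Q}_{i+1}^\transpose)\bv{h}$ via rotational invariance; these are two phrasings of the same fact.
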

\begin{proof}
The proof is inductive. For $i \leq m$, define $\bv{Q}_i := [\bv{q}_1, \ldots, \bv{q}_i]$ and $\bv{g}_i := (\bv{I} - \bv{Q}_i\bv{Q}_i^\transpose)\bv{g}$. 
Inductively assume that the distribution of $\bv{g}_i$ conditioned on 
$\{\bv{q}_1, \bv{q}_1^\transpose \bv{g}, \ldots, \bv{q}_{i}, \bv{q}_{i}^\transpose \bv{g}\}$ is $\mathcal{N}(\bv{0}, \bv{I} - \bv{Q}_i\bv{Q}_i^\transpose)$.
Since the distribution of $\bv{q}_{i+1}$ is a function of $\{\bv{q}_1, \bv{q}_1^\transpose \bv{g}, \ldots, \bv{q}_{i}, \bv{q}_{i}^\transpose \bv{g}\}$, it follows that $\bv{g}_i$ is still distributed as $\mathcal{N}(\bv{0}, \bv{I} - \bv{Q}_i\bv{Q}_i^\transpose)$ when conditioned on $\{\bv{q}_1, \bv{q}_1^\transpose \bv{g}, \ldots, \bv{q}_{i}, \bv{q}_{i}^\transpose \bv{g}, \bv{q}_{i+1}\}$. 
By rotational invariance of the Gaussian distribution, we can thus write:
\begin{align*}
\bv{g}_i = z\cdot \bv{q}_{i+1} + (\bv{I} - \bv{Q}_{i+1}\bv{Q}_{i+1}^\transpose)\bv{h}, 
\end{align*}
where $\bv{h} \sim\mathcal{N}(\bv{0}, \bv{I})$, $z \sim \mathcal{N}(0,1)$, and $\bv{h}, z$ are independent from each other and from $\{\bv q_1, \bv q_1^\transpose \bv g, \ldots, \bv q_i, \bv q_i^\transpose \bv g, \bv q_{i+1}\}$.
Observing that $\bv{g}_{i+1} = (\bv{I}-\bv{q}_{i+1}\bv{q}_{i+1}^\transpose)\bv{g}_i$, we thus that, conditioned on $\{\bv{q}_1, \bv{q}_1^\transpose \bv{g}, \ldots, \bv{q}_{i}, \bv{q}_{i}^\transpose \bv{g}, \bv{q}_{i+1}\}$,
\begin{align*}
\bv{g}_{i+1} &= (\bv I - \bv Q_{i+1} \bv Q_{i+1}^\T)\bv{h} & &\text{and} & \bv{q}_{i+1}^T\bv{g} &= \bv{q}_{i+1}^T\bv{g}_{i+1} = z.
\end{align*}
Since $\bv{h}$ and $z$ are conditionally independent, we conclude that $\bv{g}_{i+1}$ is still distributed as $(\bv{I} - \bv{Q}_{i+1}\bv{Q}_{i+1}^\transpose) \bv{h}$ even conditioned on all $i+1$ queries and responses, $\{\bv{q}_1, \bv{q}_1^\transpose \bv{g}, \ldots, \bv{q}_{i}, \bv{q}_{i}^\transpose \bv{g}, \bv{q}_{i+1}, \bv{q}_{i+1}^\T\bv{g}\}$, as required.

For the base case, $\bv{q}_1$ is independent of $\bv{g}$ so, $(\bv{I} -\bv{q}_1\bv{q}_1^\transpose)\bv{g}$ is distributed as $\mathcal{N}(\bv{0}, \bv{I} - \bv{q}_1\bv{q}_1^\transpose)$. 
\end{proof}

Using \Cref{lem:other_anti_concentration,lem:one_sided_hidden_gaussian}, we can now prove \Cref{thm:lower_bound_onesided}.

\begin{proof}[Proof of \Cref{thm:lower_bound_onesided}]
    First note that it suffices to prove the lower bound for  $\log |\F| = O(n \log \gamma)$. To obtain a lower bound for smaller choices of $\F$, we simply embed the hard instance described below in the top left $n\times n$ block of a larger matrix, setting all values not in that block to zero, both in $\bv{A}$ and in the matrices in $\mathcal{F}$. 

As discussed, we choose $\bv{A} \in \R^{n \times n}$ to have its first row distributed as $\mathcal{N}(\bv 0, \bv I)$ and all other rows equal to $\bv{0}$
 --- i.e., $\bv A = \bv{e}_1 \bv g^\transpose$ for $\bv g\sim \mathcal{N}(\bv 0, \bv I)$, where $\bv{e}_1$ is the first standard basis vector.
 For constants $c_1$ and $c_2$ to be set later, let $\alpha = \frac{c_1 \sqrt{n}}{\gamma}$ and let  $\mathcal{C}_\alpha$ be an $\alpha$-cover (see \Cref{def:cover}) of the ball $\mathcal{B}(\sqrt{c_2 n}) = \{\bv x \in \R^n : \|\bv x\|_2 \le \sqrt{c_2 n}\}$. Let $\F \subset \R^{n\times n}$ be the set of matrices generated by taking an element of $\mathcal C_\alpha$ and placing it in the first row, with all other rows set to $0$. I.e., $\F = \{\bv{e}_1 \bv x^\T : \bv x \in \mathcal C_\alpha\}$. It is well known (c.f. \cite[Corollary 4.2.13]{Vershynin2018}) that we can construct $\mathcal C_\alpha$ with $|\mathcal C_\alpha | = \left (\frac{2 \sqrt{c_2 n}}{\alpha} +1\right )^n = \exp(O(n \cdot \log\gamma))$ and thus $\log|\F| = \log|\mathcal C_\alpha| = O(n \cdot \log \gamma)$. Note that here we use the assumption that $\gamma > 1.1$ to ensure that $\log(2 \gamma\sqrt{c_2}/c_1+1) = O(\log \gamma)$.
 
\smallskip

\noindent \textbf{Upper bound on the optimal error.}
By a standard Chi-squared concentration  bound, if $c_2$ is set large enough, then $\bv g \in \mathcal{B}(\sqrt{c_2 n})$ with probability at least $19/20$. 
Assume that this event holds. Then by the definition of an $\alpha$-cover:
\begin{align}\label{eq:camMinBound}
    \min_{\bv B \in \F} \|\bv A - \bv B\|_\Fro \le \alpha = \frac{c_1 \sqrt{n}}{\gamma}.
\end{align}

\smallskip

\noindent \textbf{Lower bound on algorithm's error.}
By \Cref{lem:one_sided_hidden_gaussian}, the  distribution of $\bv{A}$ conditioned on any $m$ adaptively chosen matvec queries of the form $\bv x \rightarrow \bv A \bv x$ is $\bv e_1 \bv g_m^\T$, where $\bv g_m$ is drawn from $\mathcal{N}(\bv 0,\bv I - \bv Q_m \bv Q_m^\T)$ and $\bv Q_m \in \R^{n \times m}$ is an orthonormal basis for the query vectors. That is, conditioned on the outcomes of the queries, $\bv g_m$ is distributed as $\bv Z \bv h$ where $\bv Z \in \R^{n \times (n-m)}$ is an orthonormal basis for which $\bv Z \bv Z^\T = \bv {I} - \bv Q_m \bv Q_m^\T$ and $\bv h$ is distributed as $\mathcal{N}(\bv 0,\bv I_{n-m})$.
For any fixed vector $\bv a \in \R^{n}$ we have: 
\begin{align*}
    \Pr[\|\bv g_m - \bv a\|_2 \leq c_1 \sqrt{n}] &\le \Pr[\|\bv Z^\T \bv g_m -\bv Z^\T\bv a\|_2 \leq c_1 \sqrt{n}] = \Pr[\|\bv h -\bv Z^\T\bv a\|_2 \leq c_1 \sqrt{n}]
\end{align*}
where the first inequality uses that $\bv {Z}$ has orthonormal columns and so $\|\bv Z^\T \bv g_m -\bv Z^\T\bv a\|_2  \le \|\bv g_m -\bv a\|_2 $.
Invoking \Cref{lem:other_anti_concentration} with $\epsilon = c_1 \sqrt{n}$ and choosing $c$ small enough that $m \le \frac{c \log |\mathcal F|}{\log \gamma} \le \frac{n}{2}$, we have that, for sufficiently small $c_1$,
\begin{align*}
\Pr\left[\|\bv g_m - \bv a\|_2 \leq c_1 \sqrt{n}\right] \leq \sqrt{\frac{n-m}{4\pi}\left(\frac{e c_1^2 n}{n-m}\right)^{n-m}}
\leq \sqrt{\frac{n-m}{4\pi}\left(2e c_1^2\right)^{n-m}}
\leq \frac{1}{20}.
\end{align*}

Since the output $\bv{\tilde B}$ of any algorithm only depends on the $m$ queries and responses, we thus have that $\|\bv A - \bv{\tilde B}\|_\Fro \geq c_1 \sqrt{n}$ with probability $\geq 19/20$. Combined with \eqref{eq:camMinBound} and a union bound over both bounds holding, we have that, as desired, with probability at least $9/10$,
$\|\bv A - \bv{\tilde B}\|_\Fro \geq \gamma \cdot  \min_{\bv B \in \F} \|\bv A - \bv B\|_\Fro$.
\end{proof}

\end{document}